\newcommand{\fullVersion}{}
\newcommand{\shortVersion}{}
\titlespacing{\section}{3pt}{*2}{*2}
\titlespacing{\subsection}{2pt}{*1.5}{*1.5}
\titlespacing{\subsubsection}{1pt}{*1}{*1}
\newtheoremstyle{exampstyle}
  {.5em} 
  {.5em} 
  {} 
  {} 
  {\bfseries} 
  {.} 
  {.5em} 
  {} 
\theoremstyle{exampstyle} \newtheorem{theorem}{Theorem}
\theoremstyle{exampstyle} \newtheorem{corollary}{Corollary}
\theoremstyle{exampstyle} 
\theoremstyle{exampstyle} 
\theoremstyle{exampstyle} \newtheorem{definition}{Definition}
\theoremstyle{exampstyle} \newtheorem{lemma}{Lemma}
\newcommand{\cmark}{\ding{51}}%
\newcommand{\xmark}{\ding{55}}%
\newtheorem{theorem}{Theorem}[section]
\newtheorem{lemma}[theorem]{Lemma}
\newtheorem{corollary}[theorem]{Corollary}
\newtheorem{definition}[theorem]{Definition}
\newenvironment{lemma-repeat}[1]{\begin{trivlist}
		\item[\hspace{\labelsep}{\bf\noindent Lemma \ref{#1} }]\em }%
	{\end{trivlist}}
\newenvironment{theorem-repeat}[1]{\begin{trivlist}
		\item[\hspace{\labelsep}{\bf\noindent Theorem \ref{#1} }]\em }%
	{\end{trivlist}}
\newcommand{\size}[1]{\ensuremath{\left|#1\right|}}
\newcommand{\set}[1]{\left\{ #1 \right\}}
\newcommand{\angles}[1]{\left\langle #1 \right\rangle}
\newcommand{\parentheses}[1]{\left( #1 \right)}
\newcommand{\brackets}[1]{\left[ #1 \right]}
\newcommand{\logp}[1]{\log\parentheses{#1}}
\newcommand*\samethanks[1][\value{footnote}]{\footnotemark[#1]}
\newcommand{\ceil}[1]{\lceil #1 \rceil}
\newcommand{\floor}[1]{\lfloor #1 \rfloor}
\newcommand{\eps}{\epsilon}
\newcommand{\promiseVerify}{{\sc Promise\&Verify}}
\newcommand{\OPT}{\ensuremath{\mathit{OPT}}}
\newcommand{\DLB}{\ensuremath{\mathit{DLB}}}
\newcommand{\MaxCut}{\ensuremath{\mathit{Max\text{-}Cut}}}
\newcommand{\MaxDiCut}{\ensuremath{\mathit{Max\text{-}DiCut}}}
\newcommand{\MVC}{\ensuremath{\mathit{MVC}}}
\newcommand{\MaxM}{\ensuremath{\mathit{MaxM}}}
\newcommand{\MaxIS}{\ensuremath{\mathit{MaxIS}}}
\newcommand{\MDS}{\ensuremath{\mathit{MDS}}}
\newcommand{\MEDS}{\ensuremath{\mathit{MEDS}}}
\newcommand{\MFVS}{\ensuremath{\mathit{MFVS}}}
\newcommand{\MFES}{\ensuremath{\mathit{MFES}}}
\renewcommand{\k}[1]{\ensuremath{k\text{-}}#1}
\newcommand{\p}[1]{\ensuremath{p_k\text{-}}#1}
\renewcommand{\v}[1]{\ensuremath{v_k\text{-}}#1}
\newcommand{\kMVC}{\k{\MVC}}
\newcommand{\kMaxM}{\k{\MaxM}}
\newcommand{\kMaxIS}{\k{\MaxIS}}
\newcommand{\ThmDiam}
{
There exists an $O(k)$ rounds deterministic algorithm in the CONGEST model that terminates with all vertices outputting \emph{SMALL} if the diameter is bounded by $k$, and \emph{LARGE} if the diameter is larger than $2k$. If the diameter is between $k+1$ and $2k$, the vertices answer unanimously, but may return either \emph{SMALL} or \emph{LARGE}.
}
\newcommand{\ThmMFVSLB}{
For any $k\in\mathbb N^+$, any algorithm that solves \MFVS{} or \MFES{} on a graph with a solution of size $k$ to within an additive error of $O(n/D)$ 
	must take $\Omega(D)$ \mbox{rounds in the LOCAL model.}
}
\newcommand{\ThmLBMain}{
For any $\epsilon=\Omega(1/n)$ and $\delta<1/2-EXP(-\Omega(n\epsilon))$, any Monte-Carlo LOCAL algorithm that computes a $(1+\epsilon)$-multiplicative approximation with probability at least $1-\delta$ for a problem $P \in \set{\MVC, \MaxM, \MaxIS, \MDS, \MEDS{}, \MaxDiCut{}, \MaxCut{}, \MFVS{}, \MFVS{}}$ 
requires $\Omega(\epsilon^{-1})$ rounds.
}
\newcommand{\ThmLBMainParam}{
There exists a family of graphs $G_{k,n}(V,E)$,
such that for any $\epsilon=\Omega(1/k)$ and $\delta<1/2-EXP(-\Omega(k\epsilon))$, any Monte-Carlo LOCAL algorithm that computes a $(1+\epsilon)$-multiplicative approximation with probability $1-\delta$ for some {\k{$P$}} $\in$ {\k{$\mathcal P$}} requires $\Omega(\epsilon^{-1})$ rounds. Here, $n = \size{V}$
can be arbitrarily larger~than~$k$.
}
\newcommand{\ThmCKPParam}{
There exists $\underline{k}\in\mathbb N$ such that for any 
    $\underline{k}\le k\le 0.99n$, there exists a family of graphs $G_{k,n}(V,E)$, such that any algorithm that solves \kMVC{} on $G_{k,n}$ in the CONGEST model requires $\Omega(k^2 / \log k \log n)$ rounds, where $n = \size{V}$ can be arbitrarily larger~than~$k$.  
		}
		\newcommand{\ThmKuhnParam}{
	There exists a family of graphs $G_{k,n}(V,E)$, for sufficiently large $k$, such that any algorithm that computes a constant approximation for $\kMVC{}, \kMaxM{},$ or $\kMaxIS$ for $G_{k,n}$ in the LOCAL model requires $\Omega\parentheses{ \sqrt{\log k / \log \log k}}$ rounds, where $n = \size{V}$ can be arbitrarily larger than $k$.
	}
\newcommand{\ThmLocalUB}{
		There exist an $O(k)$ rounds LOCAL algorithms for \kMaxM, \kMaxIS, and any minimization problem $\k{P}$ for $P\in \DLB$.
		}
\begin{document}
	\begin{titlepage}
		\title{Parameterized Distributed Algorithms}
		\author{Ran Ben-Basat\thanks{Technion, Department of Computer Science, \texttt{sran@cs.technion.ac.il}}
			\and Ken-ichi Kawarabayashi \thanks{NII, Japan, \texttt{k\_keniti@nii.ac.jp}, \texttt{greg@nii.ac.jp}} \and Gregory Schwartzman\samethanks}
		\maketitle
		
		\begin{abstract}
%
%
In this work, we initiate a thorough study of parameterized graph optimization problems in the distributed setting. 
In a parameterized problem, an algorithm decides whether a solution of size bounded by a \emph{parameter} $k$ exists and if so, it finds one. 
We study fundamental problems, including Minimum Vertex Cover (\MVC{}), Maximum Independent Set (\MaxIS{}), Maximum Matching (\MaxM{}), and many others, in both the LOCAL and CONGEST distributed computation models.
We present lower bounds for the round complexity of solving parameterized problems in both models, together with optimal and near-optimal upper bounds.

Our results extend beyond the scope of parameterized problems. We show that any LOCAL $(1+\eps)$-approximation algorithm for the above problems must take $\Omega(\eps^{-1})$ rounds. Joined with the algorithm of~\cite{Ghaffari:2017:CLD:3055399.3055471} and the $\Omega\parentheses{\sqrt{\frac{\log n}{\log\log n}}}$ lower bound of~\cite{KuhnMW16}, this settles the complexity of $(1+\eps)$-approximating \MVC{},\MaxM{} and \MaxIS{} at $(\eps^{-1}\log n)^{\Theta(1)}$. 
We also show that our parameterized approach reduces the runtime of exact and approximate CONGEST algorithms for \MVC{} and \MaxM{} if the optimal solution is small, without knowing its size beforehand.
Finally, we propose the first deterministic $o(n^2)$ rounds CONGEST algorithms that approximate \MVC{} and \MaxM{} within a factor strictly smaller than $2$.
		\end{abstract}
		\thispagestyle{empty}
	\end{titlepage}

\section{Introduction}
We initiate the study of \emph{parameterized} distributed algorithms for graph optimization problems, which are fundamental both in the sequential and distributed settings. %
Broadly speaking, in these problems we aim to find some underlying graph structure (e.g., a set of nodes) that abides a set of constraints (e.g., cover all edges) while minimizing the cost.

While parameterized algorithms have received much attention in the sequential setting, not even the most fundamental problems (e.g., Vertex Cover) have a distributed counterpart. We present parameterized upper and lower bounds for fundamental problems such as Minimum Vertex Cover, Maximum Matching and many more. 

\subsection{Motivation}
\ifdefined\fullVersion
In the sequential setting, some combinatorial optimization problems are known to have polynomial time algorithms (e.g., Maximum Matching), while others are NP-complete (e.g., Minimum Vertex Cover). To deal with the hardness of finding an optimal solution to these problems,
\else
To deal with the hardness NP-complete problems,
\fi
the field of parameterized complexity asks what is the best running time we may achieve with respect to some parameter of the problem, instead of the size of the input instance. 
This parameter, usually denoted by $k$, is typically taken to be the size of the solution. Thus, even a running time that is exponential in $k$ may be \mbox{acceptable for small 
\ifdefined\fullVersion
values of
\fi $k$.}

In the distributed setting, a network of nodes, which is represented by a communication graph $G(V,E)$, aims to solve some graph problem with respect to $G$. Computation proceeds in synchronous rounds, in each of which every vertex can send a message to each of its neighbors. The running time of the algorithm is measured as the number of communication rounds it takes to finish. There are two primary communication models: LOCAL, which allows sending messages of unbounded size, and CONGEST, which limits messages to $O(\log n)$ bits (where $n = \size{V}$).

The above definition implies that the notion of "hardness" in the distributed setting is different. Because we do not take the computation time of the nodes into account, we can solve any problem in $O(D)$ rounds of the LOCAL model (where $D$ is the diameter of the graph), and $O(n^2)$ rounds of the CONGEST model. This is achieved by having every node in the graph learn the entire graph topology and then solve the problem.
Indeed, there exist many "hard" problems in the distributed setting, where the dependence on $D$ and $n$ is rather large.

There are several lower bounds for distributed combinatorial optimization problems for both the LOCAL and the CONGEST models. For example, \cite{SarmaHKKNPPW11} provides lower bounds of $\widetilde{\Omega}(\sqrt{n} + D)$ (Where $\widetilde{\Omega}$ hides polylogarithmic factors in $n$), for a range of problems including MST, min-cut, min s-t cut and many more. There are also $\widetilde{\Omega}({n})$ bounds in the CONGEST model for problems such as approximating the network's diameter \cite{abboud2016near} and finding weighted all-pairs shortest paths \cite{Censor-HillelKP17}. Recently, the first near-quadratic (in $n$) lower bounds were shown by \cite{Censor-HillelKP17} for computing exact Vertex Cover and Maximum Independent Set. 
\ifdefined\fullVersion

The above shows that, 
\else
Thus,
\fi 
similar to the sequential setting, the distributed setting also has many "hard" problems that can benefit 
\mbox{from the parameterized complexity lens.}
\ifdefined\fullVersion
Recently, the study of parameterized algorithms  for $\MVC$ and $\MaxM$ was also initiated in the streaming environment by \cite{ChitnisCHM15, ChitnisCEHMMV16}. 
his provides further motivation for our work, showing that indeed non-standard models of computation can benefit from parameterized algorithms.
\fi




\subsection{Our results} 
Given the above motivation we consider the following fundamental problems (See section \ref{sec: prelims} for formal definitions): Minimum Vertex Cover (\MVC), Maximum Independent Set (\MaxIS), Minimum Dominating Set (\MDS), Minimum Feedback Vertex Set (\MFVS), Maximum Matching (\MaxM), Minimum Edge Dominating Set (\MEDS), Minimum Feedback Edge Set (\MFES). 
We use $\mathcal P$ to denote this problem set.

The problems are considered in both the LOCAL and CONGEST model, where we present lower bounds for the round complexity of solving parameterized problems in both models, together with optimal and near-optimal upper bounds. We also extend existing results \cite{KuhnMW16, Censor-HillelKP17} to the parameterized setting. Some of these extensions are rather direct, but are presented to provide a complete picture of the \mbox{parameterized distributed complexity landscape.}

Perhaps the most surprising fact about our contribution, is that our results extend beyond the scope of parameterized problems. We show that any LOCAL $(1+\eps)$-approximation algorithm for the above problems must take $\Omega(\eps^{-1})$ rounds. Joined with the algorithm of~\cite{Ghaffari:2017:CLD:3055399.3055471} and the $\Omega\parentheses{\sqrt{\frac{\log n}{\log\log n}}}$ lower bound of~\cite{KuhnMW16}, this settles the complexity of $(1+\eps)$-approximating \MVC{},\MaxM{} and \MaxIS{} at $(\eps^{-1}\log n)^{\Theta(1)}$. 
We also show that our parameterized approach reduces the runtime of exact and approximate CONGEST algorithms for \MVC{} and \MaxM{} if the optimal solution is small, without knowing its size beforehand.
Finally, we propose the first deterministic $o(n^2)$ rounds CONGEST algorithms that approximate \MVC{} and \MaxM{} within a factor strictly smaller than $2$. Further, for \MVC{}, no such randomized algorithm is known either.

We note that considering parametrized algorithms in the distributed setting presents interesting challenges and unique opportunities compared to the classical sequential environment. In essence, we consider the \emph{communication cost} of solving a parametrized problem on a network. On one hand, we have much more resources (and unlimited computational power), but on the other we need to deal with synchronizations and bandwidth restrictions.

\paragraph{Parametrized problems} We consider combinatorial minimization (maximization) problems where the size of the solution is upper bounded (lower bounded) by $k$. A parameterized distributed algorithm is given a parameter $k$ (which is known to all nodes), and must output a solution of size at most $k$ (at least $k$ for maximization problems) if such a solution exists. Otherwise, \emph{all vertices} must output that no such solution exists. A similar definition is given for parametrized approximation problems (see section~\ref{sec: prelims} for more details).




\subsubsection{Lower bounds} 
We show that the problem of $\MVC$ can be reduced to $\MFVS$ and $\MFES$ via standard reductions which also hold in the CONGEST model. 
There are no known results for $\MFVS$ and $\MFES$ in the distributed setting, so the above reductions, albeit simple, immediately imply that all existing lower bounds for $\MVC$ also apply for $\MFVS$ and $\MFES$. 
Using the fact that $\MFVS$ and $\MFES$ have a global nature, we can achieve stronger lower bounds for the problems. Specifically, we show that no reasonable approximation can be achieved for $\MFVS$ and $\MFES$ 
\ifdefined\fullVersion
in $O(D)$ rounds in the LOCAL model. This is formalized in the following theorem.
\else
\mbox{in $O(D)$ rounds in the LOCAL model:}
\fi
\ifdefined\shortVersion
\vspace{-4mm}
\fi
\begin{theorem-repeat}{thm: MFVS diam lb}
\ThmMFVSLB
\end{theorem-repeat}
\ifdefined\shortVersion
\vspace{-2mm}
\fi


Our main result is a novel lower bound for $(1+\eps)$-approximation for all problems in $\mathcal P$.
Our lower bound states that any $(1+\epsilon)$-approximation (deterministic or randomized) algorithm in the LOCAL model for any problem in $\mathcal P$ requires $\Omega(\epsilon^{-1})$ rounds. Usually, lower bounds in the distributed setting are given as a function of the input (size, max degree), and not as a factor of approximation ratio. Our lower bound also applies to \MaxCut\footnote{In the \MaxCut{} problem, we wish to divide the vertices into two sets $A,B$ such that the number of edges whose endpoints are in different sets is maximized.} and \MaxDiCut\footnote{In the \MaxDiCut{} problem the graph is directed and we wish to divide the vertices into two sets $A,B$ such that the number of directed edges from $A$ to $B$ is maximized.}, whose parametrized variants are not considered in this paper, and thus are not in $\mathcal{P}$.
We state the following theorem.
\ifdefined\shortVersion
\vspace{-2mm}
\fi
\begin{theorem-repeat}{thm:LB-main}
\ThmLBMain
\end{theorem-repeat}
\ifdefined\shortVersion
\vspace{-2mm}
\fi

Our lower bound also has implications for non-parameterized algorithms. 
\ifdefined\fullVersion
The problem of finding
\else
Finding
\fi
a maximum matching   in the distributed setting is a fundamental problem in the field that received much attention \cite{LotkerPR09, LotkerPP15, Fischer17, Bar-YehudaCGS17}. 
\ifdefined\fullVersion
Despite the existence of a variety of approximation algorithms for the problem, 
\else
Yet,
\fi
no non-trivial result is known for computing an exact solution. 

Our lower bounds also has implications for computing a $(1+\epsilon)$-approximation of $\MVC$, $\MaxM$ and \MaxIS{} in the LOCAL model. Combined with the $\Omega\parentheses{ \sqrt{\frac{\log n}{\log\log n}}}$ lower bound of~\cite{KuhnMW16}, we can express a lower bound to the problem as $\Omega\parentheses{\epsilon^{-1} + \sqrt{\frac{\log n}{\log\log n}}} = (\eps^{-1}\log n)^{\Omega(1)}$. Together with the result of~\cite{Ghaffari:2017:CLD:3055399.3055471}, which presents an $(\eps^{-1}\log n)^{O(1)}$ upper bound for the problem, this implies that the complexity of computing a $(1+\eps)$-approximation is given by $(\eps^{-1}\log n)^{\Theta(1)}$. 

Finally, we show a simple and generic way of extending lower bounds to the parameterized setting. 
The problem with many of the existing lower bounds (e.g., \cite{KuhnMW16, Censor-HillelKP17}) is that the size of the solution is $\Tilde{O}(n)$ (linear up to polylogarithmic factors). Thus, it might be the case that if the solution is substantially smaller than the input we might achieve a much faster running time. We show that by simply attaching a large graph to the lower bound graph we can achieve the same lower bounds as a function of $k$, rather than $n$. This allows us to restate our $(1+\eps)$-approximation lower bound and the bounds of \cite{KuhnMW16} and \cite{Censor-HillelKP17} as a function of $k \ll n$. We also show that these lower bounds hold for parameterized problems as defined in this paper. 

\ifdefined\fullVersion
\begin{theorem-repeat}{thm:LB-main-param}
\ThmLBMainParam
\end{theorem-repeat}

\begin{theorem-repeat}{thm:ckp small k}
\ThmCKPParam
\end{theorem-repeat}
    \begin{theorem-repeat}{thm:kuhn small k}
		\ThmKuhnParam  
	\end{theorem-repeat}
\fi
\ifdefined\shortVersion
\vspace{-1mm}
\fi
\subsubsection{Upper bounds} 
We first define the family of problems (\DLB, see section~\ref{sec: prelims} for a formal definition) whose optimal solution is lower bounded by the graph diameter.
If the optimal solution size (\OPT) is small, then for minimization DLB problems we can learn the entire graph in $O(\OPT)$ LOCAL rounds. The problem is actually for the case when the optimal solution is large, and all vertices in the graph must output that no $k$-sized solution exists. Here we introduce an auxiliary result which we use as a building block for all of our algorithms.

\ifdefined\shortVersion
\vspace{-3mm}
\fi
\begin{theorem-repeat}{thm:diameter}
\ThmDiam
\end{theorem-repeat}
\ifdefined\shortVersion
\vspace{-2mm}
\fi

Using the above, we can check the diameter, have all vertices reject if it is too large, and otherwise have a leader learn the entire graph in $O(k)$ rounds.
As for maximization problems (such as \MaxM{} and \MaxIS) the challenge is somewhat different, as the parameter $k$ does not bound the diameter for legal instances.
We first check whether the diameter is at most $2k$ or at least $4k$. If it is bounded by $2k$, we can learn the entire graph. Otherwise, we note that any \emph{maximal} solution has size at least $k$ and is a legal solution to the parameterized problem. Thus, we can efficiently compute a maximal solution by having every node/edge which is a local minimum (according to id) enter the matching/independent set. We repeat this $k$ times and finish 
\ifdefined\fullVersion
(this also works in the CONGEST model). 
We formalize this in the following theorem.
\else
\mbox{(this also works in the CONGEST model).}
\fi

\ifdefined\shortVersion
\vspace{-3mm}
\fi
\begin{theorem-repeat}{thm: local UB}
\ThmLocalUB
\end{theorem-repeat}
\ifdefined\shortVersion
\vspace{-2mm}
\fi

Next, we consider the problems of $\kMVC$ and $\kMaxM$ as case studies for the CONGEST model. We show deterministic upper bound of $O(k^2)$ for $\kMVC$ and $\kMaxM$ (For $\kMVC$ this is near tight according to Theorem~\ref{thm:ckp small k}). We also note that as the complement of an $\kMVC$ is a $\kMaxIS$ we have a near tight upper bound of $O(n-k)^2$ for the problem. 
\ifdefined\fullVersion
This means that if $k$ is large (e.g., $k=n-\log n$) the problem is easy to solve. 
\fi
In the CONGEST model, we first verify that the diameter is indeed small. If it is large, we proceed as we did in the LOCAL model for both problems.
For $\kMVC$, we use a standard kernelization procedure to reduce the size of the graph. This is done by adding every node of degree larger than $k$ into the cover. The remaining graph has a bounded diameter and a small number of edges; thus we use a leader node to collect the entire graph. 
The problem of $\kMaxM$ is more challenging, as we do not use existing kernelization techniques. Instead, we introduce a new augmentation-based approach for the parameterized problem.


We then show how with the help of randomization we can achieve a running time of $O(k+k^2 \log k / \log n)$ for both problems. 
Note that for $k \ll n$ this can be a substantial, up to quadratic, improvement. Further, for \kMVC{} it brings our the round complexity to within a $O(\log^2 k)$ {factor from the lower bound.}

\paragraph{Approximations} We also consider approximation algorithms, in the CONGEST model, for parameterized \MVC{} and \MaxM{}.
We make non-trivial use of the Fidelity Preserving Transformation framework~\cite{FELLOWS201830} and simultaneously apply multiple reduction rules that reduce the parameter from $k$ to $O(k\eps)$. Using this technique, we derive $(2-\eps)$-approximations that run faster than our exact algorithms for any $\eps=o(1)$.
We summarize our other results in Table~\ref{tbl:summary-ub}.

\newcommand{\specialcell}[2][c]{%
\begin{tabular}[#1]{@{}l@{}}#2\end{tabular}}

\newcommand{\rand}{\hfill \ensuremath{[rand.]}}
\renewcommand{\det}{\hfill \ensuremath{[det.]}}

\begin{table*} 
	\centering{
	    \resizebox{0.94 \textwidth}{!}{
				\begin{tabular}{|l|c|c|c|c|c|}
				\hline
				\multirow{2}{*}{\textbf{Variant}} 
				& \multicolumn{2}{|c|}{\textbf{Upper Bound}}
                & \multicolumn{2}{|c|}{\textbf{Lower Bound}}
                \\\hhline{~-----}
                &\multicolumn{1}{|c|}{LOCAL} & \multicolumn{1}{|c|}{CONGEST}&\multicolumn{1}{|c|}{LOCAL} & \multicolumn{1}{|c|}{CONGEST}
                \\
				\hline\hline \multirow{3}{*}{\specialcell[l]{\noindent Exact}} & 
				\multirow{3.5}{*}{$O(k)$ \det}&\multirow{2}{*}{$O\parentheses{k+\frac{k^2\log k}{\log n}}$\qquad{}\rand}
				 & \multirow{3.5}{*}{$\Omega(k)$\quad{}
				 } & \multirow{2}{*}{$\Omega\parentheses{k+\frac{k^2}{\log k\log n}}$ 
				 } \\&&&&\\\hhline{~~~~~~} 
				&  &$O\parentheses{k^2}$\det&& {\scriptsize *\kMVC{} only}\\\hhline{------}
                \multirow{1}{*}{\specialcell[l]{\noindent $(1+\epsilon)$-approx.}} & 
                \multicolumn{2}{|c|}{}&\multicolumn{1}{|l|}{\multirow{2}{*}{\specialcell[l]{\noindent $\Omega\parentheses{\epsilon^{-1}+\sqrt{\frac{\log k}{\log\log k}}}
                $
                }}}&
                \\\hhline{~~~~~~}
				{\small\ \ $\forall \eps=\Omega(1/k)$}
				&\multicolumn{2}{|c|}{}&\multicolumn{1}{|c|}{}&
				\\\hhline{------}
                \multirow{1}{*}{\specialcell[l]{\noindent $(2-\epsilon)$-approx.}} 
                && \multicolumn{1}{|c|}{$O\parentheses{k +\frac{(k\epsilon)^2\log(k\eps)}{\log n}}$\rand} &\multicolumn{2}{|c|}{}\\\hhline{~~~~~~}{\small\ \ $\forall \eps\in[1/k,1]$}
				&&\multicolumn{1}{|c|}{$O(k+(k\epsilon)^2)$\det}&\multicolumn{2}{|c|}{}
				\\\hhline{------}	
				\hhline{------}                  
			\end{tabular}
	}
}
	\normalsize
	\caption{A summary of our round complexity results for \kMVC{} and \kMaxM{}. All lower bounds hold for randomized algorithms as well as deterministic.
	}
	\label{tbl:summary-ub}
		\vspace{-.1cm}
\end{table*} 

\paragraph{Applications to non-parameterized algorithms} We show that our algorithms can also imply faster non-parameterized algorithms if the optimal solution is small, without needing to know its size.
Specifically, we combine our exact and approximation algorithms for parameterized \kMVC{} and \kMaxM{} with doubling and a partial binary search for the value of $k$.
Additionally, our solutions can determine whether to run the existing non-parameterized algorithm or follow the parameterized approach. This results in an algorithm whose runtime is the minimum between current approaches and the number of rounds required for the binary search. Our results are presented in Table~\ref{tbl:summary-general}.

We also present \emph{deterministic} algorithms for $\MVC, \MaxM$ in the CONGEST model with an approximation ratio strictly better than $2$. 
Namely, our 
algorithms terminate in $O(\OPT{}\log \OPT{}) = O(n\log n)$ rounds and provide an approximation ratio of $2-1/\sqrt{\OPT{}}$. 
Here, $\OPT{}$ is the size of the optimal solution and is not known to the algorithm.
These are the first non-trivial $(2-\eps)$-approximation \mbox{results for these problems.}

\begin{table*} 
	\centering{\hspace*{-0.5cm}
		\resizebox{1.05 \textwidth}{!}{
				\begin{tabular}{|l|c|c|c|c|}
				\hline
				& 
				Exact& $(1+\eps)$-approx. & $2$-approx. & {$(2+\eps)$-approx.}
                \\\hline\hline
                \MVC{}&
                \multirow{2.4}{*}{$O\parentheses{\min\set{\OPT{}^2\log \OPT{},n^2}}$}&\multirow{2.4}{*}{$O\parentheses{\OPT{}^2}$}&$O\parentheses{\min\set{\OPT{}\log \OPT{},\frac{\log n\log\Delta}{\log^2\log \Delta}}}$&$O\parentheses{\min\set{\OPT{},\frac{\log\Delta}{\log\log \Delta}}}$\\\hhline{-~~--}
                \MaxM{}&
                &&$O\parentheses{\min\set{\OPT{}\log \OPT{},\Delta+\log^* n}}$&$O\parentheses{\min\set{\OPT{},\Delta+\log^* n}}$\\\hline
			\end{tabular}
	}
}
	\normalsize
	\vspace{-.2cm}
	\caption{Our CONGEST round complexity for deterministic \MVC{} and \MaxM{}, where $\epsilon$ is a positive constant (the actual dependency in $\eps^{-1}$ is logarithmic).
	Here, $\OPT{}$ is the size of the optimal solution and is \emph{not known} to the algorithm.
	}
	\label{tbl:summary-general}
	\vspace{-.2cm}
\end{table*} 

\subsection{Related work}
\paragraph{Distributed Matching and Covering} Both \MVC{} and \MaxM{} have received significant attention in the distributed setting. We survey on the results relevant to this paper. We start with existing lower bounds. In \cite{Censor-HillelKP17} a family of graphs of increasing size is presented, such that computing an \MVC{} for any graph in the family requires $\Omega(n^2 / \log^2 n)$ rounds in the CONGEST model. In \cite{KuhnMW16} a family of graphs is introduced such that any constant approximation for \MVC{} requires $\Omega\parentheses{\min \set{ \sqrt{\log n / \log \log n}, \log \Delta / \log \log \Delta}}$ rounds in the LOCAL model. Both bounds hold for deterministic and randomized algorithms.

For \MVC{}, no non-trivial exact distributed algorithms are known. As for approximations, an optimal (for constant values of $\epsilon$) $(2+\epsilon)$-approximate deterministic algorithm (for the weighted variant) in the CONGEST running in $O(\epsilon^{-1} \log \Delta / \log \log \Delta)$ rounds is given in \cite{Bar-YehudaCS17}. \cite{ImprovedVC} then improved the dependency on $\epsilon$ to $O(\log \Delta / \log \log \Delta + \log \epsilon^{-1} \log \Delta / \log^2 \log \Delta)$, which also results in a faster 2-approximation algorithm by setting $\epsilon=1/n$. In the LOCAL model, a randomized $(1+\epsilon)$-approximation in $(\eps\log n)^{O(1)}$ rounds is due to \cite{Ghaffari:2017:CLD:3055399.3055471}.

For \MaxM{}, there are no known lower bounds for the exact problem, while for approximations, the best known lower bound is due to \cite{KuhnMW16}. No exact non-trivial solution is known for the problem in both the LOCAL and CONGEST models. As for approximations, much is known. We survey the results for the unweighted case. An optimal randomized $(1+\epsilon)$-approximation in the CONGEST model, running in $O(\log \Delta / \log \log \Delta)$ rounds for constant $\epsilon$ is given in \cite{Bar-YehudaCGS17}. As for deterministic algorithms, the best known results in the LOCAL model are due to \cite{Fischer17}, presenting a maximal matching algorithm running in $O(\log^2 \Delta \log n)$ rounds and a $(2+\epsilon)$-approximate algorithm running in $O(\log^2 \Delta \log 1/\epsilon + \log^* n)$. 
As for a deterministic maximal matching in the CONGEST, to the best of our knowledge, the best known approach is to color the edges and go over each color class, resulting in a running time of $O(\Delta + \log^*n)$ as proposed in \cite{Bar-YehudaCGS17}. 

\paragraph{Distributed parameterized algorithms} Parameterized distributed algorithms were previously considered for detection problems. Namely, in \cite{KorhonenR17} it was shown the detecting $k$-paths and trees on $k$-nodes can be done deterministically in $O(k2^k)$ rounds of the broadcast CONGEST model. Similar, albeit randomized, results were obtained independently by \cite{EvenFFGLMMOORT17} in the context of distributed property testing \cite{Censor-HillelFS16}.  


\vspace{-2mm}
\section{Preliminaries}
\label{sec: prelims}
\vspace{-2mm}
In this paper, we consider the classic and parameterized variants of several popular graph packing and covering problems, in the LOCAL and CONGEST models. 
A solution to these problems is either a vertex-set or an edge-set. In vertex-set solutions, we require that each vertex will know if it is in the solution or not. For edge-set problems, each vertex must know which of its edges are in the solution, and both end-points of an edge must agree.
\ifdefined\fullVersion
Computation takes place in synchronous rounds during which each node first receives messages from its neighbors, then perform local computation, and finally send messages to its neighbors. 
Each of the messages sent to a node neighbor may be different from the others, while the size of messages is unbounded in the LOCAL model or of size $O(\log n)$ in the CONGEST. In both models, the communication graph is identical to the graph on which the problem is solved. That is, two nodes may send messages to each other only if they share an edge.
\fi

\newcommand{\problem}[1]{\textbf{#1.}\quad}
\ifdefined\shortVersion
We use $\mathcal P = \set{\text{\MVC, \MaxM, \MaxIS, 
\MDS, \MEDS, \MFVS, \MFES}}$ to denote the set of problems considered in this work.
For completeness,  we give formal definitions of the problems in Appendix~\ref{app:problems}.
\else
Let $G=(V,E)$ (Directed graph for \MFES) denote the target graph. 
We consider $U\subseteq V$ to be a feasible solution to the following vertex-set problems if:
\begin{itemize}
\item \problem{Minimum Vertex Cover (\MVC)}
$\forall e\in E: e\cap U\neq\emptyset$.
\item \problem{Maximum Independent Set (\MaxIS)}
$\forall u,v\in U: \set{u,v}\notin E$.
\item \problem{Minimum Dominating Set (\MDS)}
$\forall v\in V, \exists u\in U: \set{u,v}\in E$.
\item \problem{Minimum Feedback Vertex Set (\MFVS)}
$G[V\setminus U]$ is acyclic.
\end{itemize}
Next, we call $Y\subseteq E$ a feasible solution to the following edge-set problems if:
\begin{itemize}
\item \problem{Maximum Matching (\MaxM)}
$\forall v,u,w\in V: \set{v,u}\in Y\implies \set{u,w}\notin Y$.
\item \problem{Minimum Edge Dominating Set (\MEDS)} 
\mbox{$\forall e\in E, \exists e'\in Y: e\cap e' \neq \emptyset$.}
\item \problem{Minimum Feedback Edge Set (\MFES)} 
\mbox{$(V,E\setminus Y)$ is acyclic.}
\end{itemize}
We use $\mathcal P = \set{\text{\MVC, \MaxM, \MaxIS, 
\MDS, \MEDS, \MFVS, \MFES}}$ to denote the above set of problems.
\fi
%
%
Given a parameter $k$, a parameterized algorithm computes  a solution of size bounded by $k$ if such exists; otherwise, the nodes must report that no such solution exists. For a problem $P\in \mathcal P$, we denote by \k{$P$} (e.g., \kMVC) the parameterized variant of the problem. We use \k{$\mathcal P$} to denote all these problems.
We note that all nodes know $k$ when the algorithm starts and that the result may not be the optimal solution to the problem.
\begin{definition}
An algorithm for a \emph{minimization} (respectively, \emph{maximization}) problem $\k{P}$ must find a solution of size at most (respectively, at least) $k$ if such exists. If no such solution exists then all nodes must report so when the algorithm terminates.
\end{definition}
We now generalize our definition to account for randomized and approximation algorithm.
\begin{definition}
For $\alpha\ge1$, an $\alpha$-approximation algorithm for a \emph{minimization} (respectively, \emph{maximization}) problem $\k{P}$ must find a solution of size at most $\alpha k$ (respectively, at least $k/\alpha$) if a solution of size $k$ exists. Otherwise, all nodes must report that no $k$-sized solution exists.
\end{definition}
\begin{definition}
Given some $\delta\ge0$ and $\alpha\ge1$, an $\alpha$-approximation Monte Carlo algorithm for a problem $\k{P}$ terminates with an $\alpha$-approximate solution to $\k{P}$ with probability at least $1-\delta$.
\end{definition}

We now define the notion of \emph{Diameter-Lower-Bounded} (\DLB) problems. Intuitively, this class contains all problems whose optimal solution size is $\Omega(D)$, which allows efficient algorithms for their parameterized variants. For example, \DLB{} includes \MVC, \MaxIS, \MaxM, \MDS, and \MEDS{}, but not \MFVS{} and \MFES. Roughly speaking, these problems admit efficient LOCAL parameterized algorithms as the parameter limits the radius that each node needs to see for solving the problem.
\begin{definition}
An optimization problem $P$ is in \DLB{} if for any input graph $G$ of diameter $D$, the size of an optimal solution to $P$ is of size $\Omega(D)$.
\end{definition}
\section{Lower Bounds}
\ifdefined\fullVersion
In this section, we present lower bounds for a large family of classical and parameterized distributed graph problems.
\fi

\subsection{Non-parametrized distributed problems}
Here, we provide a construction that implies lower bounds for approximating all problems in $\mathcal{P}$.
Our lower bounds dictate that any algorithm that computes a $(1+\epsilon)$-approximation, for $\epsilon=\Omega(1/n)$, requires at least $\Omega(\epsilon^{-1})$ rounds in the LOCAL model, even for randomized algorithms. We note that for all these problems, no superlogarithmic lower bound (which we can get, e.g., for $\epsilon=n^{-2/3}$) was known for approximations. Further, for some problems, such as \MaxM, no such lower bound is known even for exact solutions in the CONGEST model. 

We then generalize our approach and show that even in the parameterized variants of the problems (where the optimal solution is bounded by $k$), $\Omega(\epsilon^{-1})$ rounds are needed for an arbitrarily large graphs (where $n\gg k$).
Our approach is based on the observation that for any $x$-round algorithm, there exists an input graph of many distinct $\Theta(x)$-long paths, such that the algorithm has an additive error on each of the paths, which accumulates over all paths in the construction. Intuitively, we show that for any set of $n$ node identifiers and any algorithm that takes $o(\epsilon^{-1})$ rounds, it is possible to assign identifiers to nodes such that the approximation ratio would be larger than $1+\epsilon$. Our goal in this section is \mbox{to prove the following theorem.}

\begin{theorem}\label{thm:LB-main}
\ThmLBMain
\end{theorem}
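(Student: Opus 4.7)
The plan is to prove Theorem~\ref{thm:LB-main} by first establishing a deterministic lower bound via an indistinguishability argument on a ``many disjoint long structures'' graph, and then lifting to Monte-Carlo algorithms through Yao's minimax principle. Throughout, let $r$ denote the round complexity of the algorithm $A$ in question.

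\textbf{Construction and per-copy error.} Fix $L=\Theta(\epsilon^{-1})$ and build $G$ as the disjoint union of $m=\Theta(n\epsilon)$ copies of a hard gadget $H_L$ on $\Theta(L)$ vertices: a path (or short path-like gadget) for \MVC{}, \MaxIS{}, \MaxM{}, \MDS{}, \MEDS{}; an odd cycle for \MaxCut{} and \MaxDiCut{}; and \MFVS{}/\MFES{} will follow from the standard local, approximation-preserving reductions from \MVC{} discussed earlier in the paper. In each gadget the optimum has value $\Theta(L)$, so $\OPT(G)=\Theta(mL)=\Theta(n)$ and the allowed additive slack is only $\epsilon\cdot\OPT(G)=\Theta(m)$; hence forcing every $r$-round algorithm with $r=o(\epsilon^{-1})$ to lose an additive $\Omega(1)$ on a sufficiently large constant fraction of the copies already contradicts the $(1+\epsilon)$-approximation guarantee. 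For the per-copy loss, each gadget has an essentially unique optimal solution determined by a global ``parity'' (distance from a path endpoint, or position relative to the unique bad edge on the odd cycle), but when $r\le L/4$ a middle vertex cannot see the distinguishing feature. Its output therefore depends only on its $r$-hop labeled neighborhood, and by a counting/adversarial argument over the $n^{O(1)}$ possible ID assignments we can pin the labels so that on a constant fraction of the copies the middle outputs are inconsistent with the forced parity; each inconsistency costs at least one unit in the objective (an extra vertex in the cover/dominating set, a lost matching edge, a miscut edge, etc.).

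\textbf{Randomized lift and main obstacle.} To handle Monte-Carlo algorithms with $\delta<1/2-\mathrm{EXP}(-\Omega(n\epsilon))$ we invoke Yao's minimax principle with a hard distribution that assigns i.i.d.\ uniform IDs from $[n^{c}]$ to the vertices of~$G$. By independence across the $m$ copies and a Chernoff bound on the binary ``handled correctly'' indicators, any fixed deterministic algorithm incurs parity flips on $\Omega(m)=\Omega(n\epsilon)$ copies with probability at least $1-\mathrm{EXP}(-\Omega(n\epsilon))>1-\delta$; averaging over the algorithm's internal randomness then rules out a $(1+\epsilon)$-approximation for any $r=o(\epsilon^{-1})$. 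The crux---and the most delicate step---is the per-copy argument: for every problem in the theorem one has to exhibit a gadget whose optimum is rigid under local perturbations and then verify that no mapping from $r$-hop views to $\{0,1\}$ can consistently produce the forced parity on a constant fraction of the $m$ independent random-ID copies. The quantitative bound must be tight enough that the aggregated failure probability dominates $\delta$ even in the exponentially small regime $\delta<1/2-\mathrm{EXP}(-\Omega(n\epsilon))$ allowed by the theorem.
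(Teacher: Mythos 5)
Your high-level plan --- many $\Theta(\epsilon^{-1})$-length gadgets, an additive loss of $\Omega(1)$ on a constant fraction of them, and a Yao-plus-Chernoff lift to Monte-Carlo algorithms --- matches the paper's strategy (the paper joins its $r$ paths at a central vertex $v_0$ rather than taking a disjoint union, but the accounting $\OPT=\Theta(n)$ versus slack $\epsilon\cdot\OPT=\Theta(n\epsilon)$ is identical). However, the step you yourself flag as ``the most delicate'' is exactly the one that is missing, and it is the heart of the proof. A generic ``counting/adversarial argument over the $n^{O(1)}$ ID assignments'' does not by itself yield the per-gadget loss: you must exhibit two concrete ID assignments under which some vertex has an \emph{identical} $r$-hop labeled view while occupying positions of different parity, so that its (forced, identical) output is wrong for at least one of the two. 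The paper does this by \emph{reversing} the identifiers along the subpath $A_i$ (or $B_i$ for \MDS{}/\MEDS{}): the middle vertex then sees the same multiset of identifiers on its two ports --- the mirror image of its original view --- while its distance from $v_0$ changes parity, so it must err on $A_i$ or on $A_i^R$. Your proposal never identifies this mirror symmetry, and without it the assertion that ``no mapping from $r$-hop views to $\{0,1\}$ can consistently produce the forced parity'' is unsubstantiated. (Once the reversal pairing is in place, your i.i.d.-ID hard distribution is essentially equivalent to the paper's distribution that independently reverses each path with probability $1/2$, modulo conditioning on distinct identifiers, and the independence-across-copies Chernoff step goes through.)

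A second concrete problem is your gadget for \MaxCut{} and \MaxDiCut{}: an odd cycle of length $L$ does \emph{not} have ``an essentially unique optimal solution determined by a global parity'' --- it has $L$ optimal cuts, one for each choice of the single uncut edge, so no vertex's correct side is forced and the per-copy rigidity argument collapses. The paper sidesteps this by running \MaxCut{}/\MaxDiCut{} on the very same spider graph, which is a tree and therefore has a unique optimal (perfect) cut up to complementation; the reversal argument then applies verbatim. Handling \MFVS{}/\MFES{} via the approximation-preserving reduction from \MVC{} is the same route the paper takes and is fine.
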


\subsubsection{Basic Construction}
We start with lower bounds for the non-parameterized variants of the problems, where the optimal solution may be of size $\Theta(n)$.
For integer parameters $r,\ell > 10$, 
the graph 
$$G_{r, \ell}=\parentheses{\set{v_0}\cup\set{v_{i,j}\mid i\in[r], j\in[\ell+1]},\set{\set{v_0,v_{i,0}}\mid i\in[r]}\cup \set{\set{v_{i,j},v_{i,j+1}}\mid i\in[r], j\in[\ell]}}$$ consists of $r$ disjoint paths of length $\ell$, whose initial nodes are connected to a central vertex $v_0$. We also consider the digraph $\Vec G_{r,\ell}$ in which each edge is oriented away from $v_0$; i.e., 
$$\Vec G_{r, \ell}=\parentheses{\set{v_0}\cup\set{v_{i,j}\mid i\in[r], j\in[\ell+1]},\set{\parentheses{v_0,v_{i,0}}\mid i\in[r]}\cup \set{\parentheses{v_{i,j},v_{i,j+1}}\mid i\in[r], j\in[\ell]}}$$
We present our construction in Figure~\ref{fig: basic construction}.
\begin{figure}
\centering
\ifdefined\shortVersion
  \includegraphics[width=.5\linewidth]{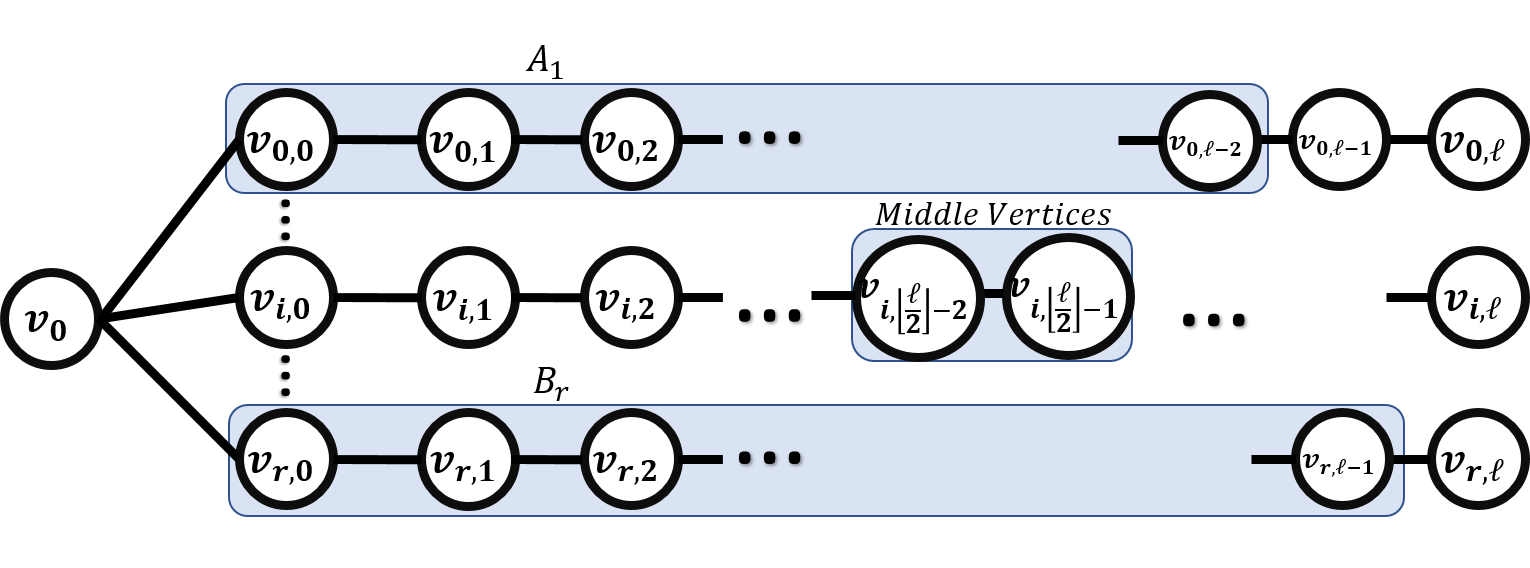}
\else
  \includegraphics[width=.7\linewidth]{basic_constructionV2.png}
\fi
\vspace{-2mm}
  \caption{Our basic construction. The middle vertices' ($v_{i,\floor{\ell/2}-2}$ and $v_{i,\floor{\ell/2}-1}$) output remains the same if we reverse the identifiers along $A_i$ and the algorithm takes less than $\floor{\ell/2}-3$ rounds. The output of $v_{i,\floor{\ell/2}}$ also remains the same if $B_i$ is flipped. However, since the distance of these vertices from $v_0$ change, the output is suboptimal for at least one of the orderings. 
  }
  \label{fig: basic construction}
\end{figure}
$G_{r, \ell}$ has $n=r(\ell+1)+1$ vertices and a diameter (as $r>1$) of $2\ell$.
Observe that the optimal solutions of  \MVC, \MaxM, \MaxIS, \MDS, and \MEDS{} on $G_{r, \ell}$ (for $r,\ell\ge 3$) have values of $\Theta(r\ell)=\Theta(n)$. 
For every path $i\in[r]$, let $A_i=\angles{v_{i,0},\ldots,v_{i,\ell-2}}$ and $B_i=\angles{v_{i,0},\ldots,v_{i,\ell-1}}$ denote the longest odd and even length subpaths that do not include $v_0$ and $v_{i,\ell}$.
Given a path $P$ of vertices with assigned identifiers, we denote by $P^R$ a reversal in the order of identifiers. For example, if the identifiers assigned to $A_0$ were $\angles{0,\ldots,\ell-2}$, then those of $A_0^R$ would be $\angles{\ell-2,\ell-3,\ldots,0}$. 
This reversal of identifiers along a path plays a crucial role in our lower bounds. Intuitively, if the number of rounds is less than $\ell/2-3$ and we reverse $A_i$ or $B_i$, the output of the middle vertices 
\emph{would change} to reflect the mirror image they observe.
We show that this implies that on either the original identifier assignment or its reversal, the algorithm must find a sub-optimal solution to the $i$'th path (where the choice of whether to flip $A_i$ or $B_i$ depends if the output is a vertex set or edge set). In turn, this would sum up to a solution that is far from the optimum by at least an $r$-additive factor. As the optimal solution is of size $\Theta(r\ell)$, this implies a multiplicative error of $1+\Theta(r/(r\ell)) = 1+1/\ell$.

We show that for arbitrarily large graphs with an optimal solution of size $\Theta(n)$, any $x$-round algorithm must have an additive error of $\Omega(n/x)$.
\begin{lemma}\label{lem:detApproxLB}
Let $x,r\in\mathbb N^+$ be integers larger than $10$,
and let $\mathcal I$ be a set of $n=(2x+4)r+1$ node identifiers.
For any deterministic LOCAL algorithm for \MVC, \MaxM, \MaxIS, \MDS, \MEDS{}, \MaxDiCut, or \MaxCut{}
that terminates in $x$ rounds on $G_{r,2x+3}$, there exists an assignment of vertex identifiers for which the algorithm has an additive error of $\Omega(n/x)$.
\end{lemma}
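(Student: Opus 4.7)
The plan is a local indistinguishability argument together with a per-arm averaging step. I fix a deterministic $x$-round algorithm $\mathcal A$. First I would prove a \emph{swap lemma} for each arm $i$ of $G_{r,2x+3}$. The middle vertices $v_{i,x}$ and $v_{i,x+1}$ have $x$-neighbourhoods that lie strictly in the interior of arm $i$: neither can reach $v_0$ nor the far endpoint $v_{i,\ell}$ in $x$ rounds. Unfolding the labeled $x$-views explicitly shows that if one reverses the identifier sequence written along $A_i$, then the $x$-view of $v_{i,x}$ under the reversed assignment is an isomorphic rooted labeled graph to the $x$-view of $v_{i,x+1}$ under the original assignment, and vice versa --- both are labeled paths of length $2x$ centered at the vertex, and a labeled path is isomorphic to its reverse as a rooted labeled graph. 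Since $\mathcal A$ is deterministic, its outputs at $v_{i,x}$ and $v_{i,x+1}$ must be swapped by the reversal; an analogous statement holds for the decisions on the incident edges in the edge-set problems, where the reversal is applied to $B_i$ instead of $A_i$.

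Next I would verify a parity obstruction. For each problem in $\mathcal P \cup \set{\MaxCut,\MaxDiCut}$, any optimum restricted to arm $i$ is forced to obey a parity condition tied to the distance from $v_0$. For \MVC{}, for instance, the unique parity class of a minimum cover forces every even-position vertex of the arm (including exactly one of $v_{i,x}$ or $v_{i,x+1}$, depending on the parity of $x$) to be in the cover; the identity of that vertex is unchanged by reversing the identifiers, since reversal does not move the vertices themselves. Combined with the swap lemma, this implies that on at least one of the two orderings the algorithm's decisions at the two middle positions differ from every optimum's, and a short case analysis shows that repairing any such local disagreement into a feasible solution costs at least one extra vertex or edge on that arm. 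The argument is cleanest for \MVC, \MaxIS, \MaxCut, and \MaxDiCut; for \MaxM, \MDS, and \MEDS{} one has to pick the swapped pair carefully so that the freedom these problems allow in their optima does not let $\mathcal A$ sidestep the parity obstacle.

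Finally, because reversing $A_i$ (or $B_i$) only touches identifiers on arm $i$, the two per-arm choices are independent across $i$, and selecting the worse of the two on each of the $r$ arms produces a single identifier assignment under which $\mathcal A$ incurs total additive loss $\Omega(r)=\Omega(n/x)$. The main obstacle I anticipate is the parity step: each of the seven (plus the two cut) problems has its own optimum-structure characterisation, and especially for \MaxM{} and the domination problems one must pick $A_i$ versus $B_i$, together with the correct pair of swapped middle entities, so that the swap provably increases the objective on at least one of the two orderings despite the multiple optimal solutions these problems admit.
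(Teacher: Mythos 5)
Your proposal follows essentially the same route as the paper's proof: reverse the identifier sequence along $A_i$ (or $B_i$ for the domination/edge-set problems) so that the two middle vertices' $x$-views are exchanged, use the (near-)uniqueness and parity structure of the optimal solutions to force an error on one of the two orderings of each arm, and accumulate the per-arm errors independently to get an additive $\Omega(r)=\Omega(n/x)$ loss. The details you flag as potential obstacles (choosing $A_i$ versus $B_i$ and the right swapped pair for \MaxM{}, \MDS{}, and \MEDS{}) are exactly the case distinctions the paper makes, so no substantive difference remains.
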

\begin{proof}
First, let us characterize the optimal solutions for each of the problems on $G_{r,2x+3}$ (or $\Vec G_{r,2x+3}$ for \MaxDiCut{}). 
For simplicity of presentation, we assume that $\mathcal I=[n]$ and $(x\mod 6) = 0$ although the result holds for any $\mathcal I$ and $x$.
\ifdefined\shortVersion
For completeness, Appendix~\ref{app:optSolutions} provides the characterization of the optimal solutions.
\else
We have 
\begin{align}
&\OPT{}_{MVC}=\set{v_{i,2j}\mid i\in[r], j\in[x+2]}, 
\\&\OPT{}_{MM}=\set{\set{v_{i,2j},v_{i,2j+1}}\mid i\in[r], j\in[x+2]}, 
\\&\OPT{}_{MaxIS}=\OPT{}_{\MaxCut}=\OPT{}_{\MaxDiCut}=\set{v_0}\cup \set{v_{i,2j+1}\mid i\in[r], j\in[x+2]}, \footnotemark{}
\\&\OPT{}_{MDS}=\set{v_{i,3j}\mid i\in[r], j\in\brackets{{2x/3+1}}}, 
\\&\OPT{}_{MEDS}=\set{\set{v_{0},v_{i,0}}\mid i\in[r]}\cup \set{\set{v_{i,3j+2},v_{i,3j+3}}\mid i\in[r], j\in\brackets{{2x/3}}}.\footnotemark{}
\end{align}
\addtocounter{footnote}{-1}
\footnotetext{For \MaxCut{}, the complement solution $V\setminus\set{v_0}\cup \set{v_{i,2j+1}\mid i\in[r], j\in[x+2]}$ is also optimal, but the correctness would follow from similar arguments.
}
\stepcounter{footnote}
\footnotetext{Each edge of the form $\set{v_0,v_{i,0}}$ (for $i\in[r]$) may each be replaced by the $\set{v_{i,0},{v_{i,1}}}$ edge. Unlike the other problems, the optimal solution here is not unique, but this does not affect the proof. 
}

For example, this means that the only optimal solution to \MVC{} picks all vertices whose distance from $v_0$ is odd.
\fi
\begin{figure}
\centering
  \includegraphics[width=.6\linewidth]{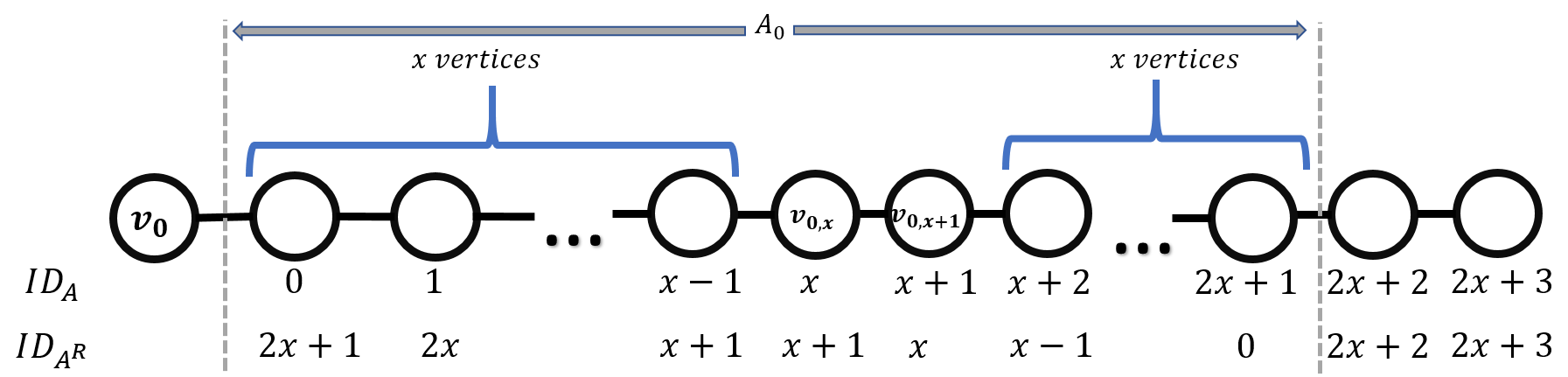}
  \caption{Before the reversal (in $A_0$), vertices $v_{0,x}$ and $v_{0,x+1}$ have identifiers $x$ and $x+1$. When reversing $A_0$, the vertices switch identifiers and must switch their output. That is, $out_{A}(v_{0,x})=out_{A^R}(v_{0,x+1})$ and $out_{A}(v_{0,x+1})=out_{A^R}(v_{0,x})$ in any algorithm that takes fewer than $x$ communication rounds.}
  \label{fig: reversal}
\end{figure}
Next, consider a path $i\in[r]$ and consider the case where every vertex $v_{i,j}$ has identifier $(2x+4)i+j$. 
From the point of view of the node with identifier $(2x+4)i+x$ (which is $v_{i,x}$ in this assignment), in its $x$-hop neighborhood it has nodes with identifiers $(2x+4)i,(2x+4)i+1,\ldots,(2x+4)i+x-1$ on one port (side) 
and identifiers $(2x+4)i+x+1,(2x+4)i+x+2,\ldots,(2x+4)i+2x+2$ on the other. On the other hand, if we reverse $A_i$ (i.e., assign $v_{i,j}\in A_i$ with identifier $(2x+4)i+2x+1-j$), the view of $v_{i,x}$ remains exactly the same. 
That is, the node observes the exact same topology and vertex identifiers in both cases.
Since the algorithm is deterministic, the output of $(2x+4)i+x$ must remain the same for both identifier assignments, even though now it is placed in $v_{i,x+1}$! Similarly, reversing $A_i$ would mean that the node with identifier $(2x+4)i+x+1$ (which changes places from $v_{i,x+1}$ to $v_{i,x}$ after the reversal) also provides the same output in both cases. 
Therefore, reversing $A_i$ would switch the outputs of $v_{i,x+1}$ and $v_{i,x}$. This implies that the output of the algorithm is suboptimal for either $A_i$ or $A_i^R$ for \MVC, \MaxM{}, \MaxIS, \MaxCut{}, and \MaxDiCut.
We illustrate this reversal on path $0$ in Figure~\ref{fig: reversal}.
Repeating this argument for $B_i$, we get that its reversal would switch the outputs of $v_{i,x}$ and $v_{i,x+2}$, making the algorithm err in \MDS{} \mbox{(as $v_{i,x}$ is in the optimal cover while $v_{i,x+2}$ is not).}

For MEDS, every $u,v$ that share an edge must agree whether it is in the solution or not. 
In an optimal solution the edge $\set{v_{i,x},v_{i,x+1}}$ must be in the dominating set while $\set{v_{i,x+1},v_{i,x+2}}$ must not. However, by reversing $B_i$ the identifiers of $v_{i,x}$ and $v_{i,x+2}$ switch, changing edge added from $\set{v_{i,x}, v_{i,x+1}}$ to $\set{v_{i,x+1}, v_{i,x+2}}$ or vice versa, implying an error for $B_i$ or $B^R_i$.

\ifdefined\fullVersion

\fi
As we showed that there exists an identifier assignment that ``fools'' the algorithm on every path $i\in[r]$, we conclude that the algorithm has an additive error of at least \mbox{$r= (n-1)/(2x+4) = \Omega(n/x)$.\qedhere}
\end{proof}
Lower bounds for MFVS and MFES follow from the reduction to $\MVC$ in Theorem~\ref{thm: mvc reduction}.
\ifdefined\fullVersion

\fi
Since the optimal solution to all problems on $G_{r,2x+4}$ is of size $\Theta(n)$, we have that the algorithms have an approximation ratio of $1+\Theta((n/x)/n) = 1+\Theta(1/x)$. Plugging $\epsilon = \Theta(1/x)$ \mbox{we conclude the following.}

\begin{corollary}
For any $\epsilon=\Omega(1/n)$, any deterministic LOCAL algorithm that computes a $(1+\epsilon)$-multiplicative approximation for any $P\in\mathcal P$ requires $\Omega(\epsilon^{-1})$ rounds.
\end{corollary}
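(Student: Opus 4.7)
The plan is to deduce the corollary directly from Lemma~\ref{lem:detApproxLB} by translating the additive error it produces into a multiplicative approximation ratio. Suppose for contradiction that for some $\epsilon = \Omega(1/n)$ there is a deterministic LOCAL algorithm achieving a $(1+\epsilon)$-multiplicative approximation for some $P \in \mathcal{P}$ in $x = o(\epsilon^{-1})$ rounds. Since $\epsilon = \Omega(1/n)$, we have $x = o(n)$, so for any target size $n$ I can pick $r > 10$ such that the instance $G_{r, 2x+3}$ (or its directed version $\Vec G_{r,2x+3}$ for $\MaxDiCut{}$) has exactly $n = (2x+4)r+1$ vertices, satisfying the hypotheses of the lemma.

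Next, I would apply Lemma~\ref{lem:detApproxLB} to obtain an identifier assignment on which the algorithm incurs an additive error of at least $\Omega(n/x) = \Omega(r)$. For every problem in $\mathcal{P}\setminus\set{\MFVS, \MFES}$, the optimal value on $G_{r, 2x+3}$ is $\Theta(r\ell) = \Theta(rx) = \Theta(n)$, as already noted when the construction was introduced. Dividing the additive error by the optimum gives a multiplicative ratio of at least
\[
1 + \Omega\!\left(\frac{n/x}{n}\right) = 1 + \Omega(1/x).
\]
For this to be bounded above by $1+\epsilon$, we must have $\epsilon = \Omega(1/x)$, i.e.\ $x = \Omega(\epsilon^{-1})$, contradicting our assumption.

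For $\MFVS$ and $\MFES$, I would invoke the standard reduction from $\MVC$ referenced as Theorem~\ref{thm: mvc reduction} in the passage just before the lemma. Because the reduction preserves the multiplicative approximation factor and is implementable in the LOCAL model without overhead, any $(1+\epsilon)$-approximation algorithm for $\MFVS$ or $\MFES$ running in $o(\epsilon^{-1})$ rounds would yield one for $\MVC$ in the same round complexity, contradicting the bound just established.

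The only subtlety I anticipate is verifying that the parameter range is legitimate: we need the lemma's preconditions $x, r > 10$ while simultaneously hitting every $n$ and $\epsilon = \Omega(1/n)$ of interest. This is where the hypothesis $\epsilon = \Omega(1/n)$ is essential---it guarantees $x = o(n)$, leaving room to pick $r$ large enough. Beyond this bookkeeping, the argument is a routine conversion from the additive-error statement of the lemma to a multiplicative approximation statement, so I do not expect any conceptual obstacle.
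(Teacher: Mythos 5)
Your proposal is correct and follows essentially the same route as the paper: apply Lemma~\ref{lem:detApproxLB} to get an additive error of $\Omega(n/x)$, divide by the $\Theta(n)$ optimum on $G_{r,2x+3}$ to obtain a multiplicative ratio of $1+\Omega(1/x)$, conclude $x=\Omega(\epsilon^{-1})$, and handle $\MFVS$ and $\MFES$ via the reduction from $\MVC$ in Theorem~\ref{thm: mvc reduction}. The parameter-range bookkeeping you add is a harmless (and slightly more careful) elaboration of what the paper leaves implicit.
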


\ifdefined\shortVersion
In Appendix~\ref{app:minimax}, we extend this argument by applying Yao's Minimax Principle~\cite{minimax} and conclude the correctness of Theorem~\ref{thm:LB-main}.
\else
Next, we use Yao's Minimax Principle~\cite{minimax} to extend the lower bound to randomized algorithms and prove Theorem~\ref{thm:LB-main}.
\renewcommand*{\proofname}{Proof of Theorem~\ref{thm:LB-main}}
\begin{proof}
In the proof of Lemma~\ref{lem:detApproxLB}, we showed that there is an identifier assignment that forces any deterministic algorithm to solve each path sub-optimally. For randomized algorithms, this does not necessarily work. However, we show that by applying Yao's Minimax Principle~\cite{minimax} we can get similar bounds for Monte Carlo algorithms. That is, we show a probability distribution over inputs such that every deterministic algorithm which executes for $O(\epsilon^{-1})$ rounds incurs a $(1+\epsilon)$-multiplicative error with high probability. 

Intuitively, we consider an input distribution that randomly selects for each path whether to use the original ordering or whether to reverse $A_i$ or $B_i$ (this depends on the problem at hand, as in Lemma~\ref{lem:detApproxLB}). In essence, this gives the algorithm a chance of at most half of finding the optimal solution to each path.

Formally, fix an initial vertex identifier assignment. 
For the problems of \MVC, \MaxM{}  and \MaxIS{} (respectively, \MDS{} and \MEDS) consider the $2^r$ inputs obtained by reversing any subset of $\set{A_i}_{i\in [r]}$ (respectively, $\set{B_i}_{i\in [r]}$). Next, consider the uniform distribution that gives each of these inputs probability $2^{-r}$. 
A similar argument to that of Lemma~\ref{lem:detApproxLB} shows that the probability an algorithm computes the optimal solution to each path $i\in[r]$ is at most half.
We now apply a simplified version of the Chernoff Bound that states that for any $\gamma\in(0,1)$, $r\in \mathbb N^+$, a binomial random variable $X\sim \text{Bin}(r,1/2)$ satisfies $\Pr\brackets{X\ge r/2(1+\gamma)}\le 2EXP(-r\gamma^2/6)$. Plugging $\gamma=\sqrt{-6\logp{(1-2\delta)/2}/r}$ we get
\begin{align*}
\Pr\brackets{X\ge r/2 + \sqrt{-6r\logp{(1-2\delta)/2}/4}}\le 2EXP(\logp{(1-2\delta)/2}
)=1-2\delta.
\end{align*}
That is, any deterministic algorithm does errs on at least $r/2 - \sqrt{-6r\logp{(1-2\delta)/2}/4}$ of the paths with probability at least $1-2\delta$ (for inputs chosen according to the above distribution).
Next, we restrict the value of the error probability to $\delta \le 1/2-EXP(-2r/9)=1/2-EXP(-\Omega(n\epsilon))$, which guarantees that 
$$r/2 + \sqrt{-6r\logp{(1-2\delta)/2}/4} \le r/2 + \sqrt{-6r(-2r/9)/4}=5r/6.
$$
This means that on at least $r/6$ of the paths the algorithm fails to find the optimal solution and adds an additive error of at least one.
Therefore, since the optimal solution is of size $\Theta(n)$, the approximation ratio obtained by the deterministic algorithm is 
\begin{align*}
\alpha=1 + \Omega\parentheses{\frac{r}{n}}
= 1 + \Omega\parentheses{\frac{r}{r\ell}}  = 1+\Omega(1/x).
\end{align*}
Thus, by the Minimax principle we have that any Monte Carlo algorithm with less than $x$ rounds also have an approximation $1+\Omega(1/x)$. Using $x=\Theta(\epsilon^{-1})$, we established the \mbox{correctness of theorem. 
\qedhere}
\end{proof}
\renewcommand*{\proofname}{Proof}
\fi
\subsection{Lower bounds for $\MFVS$ and $\MFES$}
	We first consider the problems of finding a $\MFVS$ and $\MFES$ in a graph. 
	We first show an $\Omega(D)$ lower bound for both problems (even for approximation), and then show that via standard reductions any distributed lower bound for $\MVC$ is also a lower bound for $\MFVS$ and 
	\ifdefined\fullVersion
	$\MFES$ (even for approximations).
	\else
	\MFES{}.
	\fi
	
	
	\ifdefined\fullVersion
	\paragraph{An $\Omega(D)$ lower bound in the LOCAL model, even for approximations}
	\fi
	For parameters $k,t,d\in\mathbb N^+$ such that $d>10$, Consider the graph $G_1=(V_1,E_1)$ and (respectively, the digraph $\Vec G_2=(V_2,\Vec E_2)$) that consist of a star with $k$ outer nodes, in which each outer node (edge) is connected to $t$ cycles (directed cycles) of length $d$; formally:
	\ifdefined\fullVersion
	{ \begin{align*}
	V_1 &= V_2 = 
	\set{v'}\cup \set{v_i\mid i\in[k]}\cup\set{v_{i,j,\ell}\mid i\in[k],j\in[t],\ell\in[d]},\\
	E_1 &=\set{\set{v',v_{i}}\mid i\in[k]} \cup 
	\set{\set{v_i,v_{i,j,0}},\set{v_i,v_{i,j,d-1}}\mid i\in[k], j\in[t]} \\&\qquad{}\cup
	\set{\set{v_{i,j,\ell},v_{i,j,\ell+1}}\mid i\in[k],j\in[t],\ell\in[d-1]},\\
	\Vec E_2 &=\set{(v',v_{i})\mid i\in[k]} \cup 
	\set{(v_{i},v_{i,j,0}),(v_{i,j,d-1},v')\mid i\in[k], j\in[t]} \\&\qquad{}\cup
	\set{(v_{i,j,\ell},v_{i,j,\ell+1})\mid i\in[k],j\in[t],\ell\in[d-1]}.
	\end{align*}}
	\else
	{\small \begin{align*}
	V_1 &= V_2 = 
	\set{v'}\cup \set{v_i\mid i\in[k]}\cup\set{v_{i,j,\ell}\mid i\in[k],j\in[t],\ell\in[d]},\\
	E_1 &=\set{\set{v',v_{i}}\mid i\in[k]} \cup 
	\set{\set{v_i,v_{i,j,0}},\set{v_i,v_{i,j,d-1}}\mid i\in[k], j\in[t]} \cup
	\set{\set{v_{i,j,\ell},v_{i,j,\ell+1}}\mid i\in[k],j\in[t],\ell\in[d-1]},\\
	\Vec E_2 &=\set{(v',v_{i})\mid i\in[k]} \cup 
	\set{(v_{i},v_{i,j,0}),(v_{i,j,d-1},v')\mid i\in[k], j\in[t]} \cup
	\set{(v_{i,j,\ell},v_{i,j,\ell+1})\mid i\in[k],j\in[t],\ell\in[d-1]}.
	\end{align*}}
	\fi
	Observe that the sizes of the graphs are $|V_1| = |V_2| = k (1+ d t)+1$, and that they have a unique optimal solutions of size $k$ -- the \MFVS{} of $G_1$ is $\set{v_i\mid i\in[k]}$ and the \MFES{} of $\Vec G_2$ is $\set{(v',v_{i})\mid i\in[k]}$.
	\ifdefined\fullVersion
	
	Next, consider the trees we obtain from deleting the middle edge from each cycle; namely: 
	\begin{align*}
	T_1&=\parentheses{V_1,E_1\setminus\set{\set{v_{i,j,\floor{d/2}},v_{i,j,\floor{d/2+1}}}\mid i\in[k],j\in[t]}},\\ \Vec T_2&=\parentheses{V_2,\Vec E_2\setminus\set{\parentheses{v_{i,j,\floor{d/2}},v_{i,j,\floor{d/2+1}}}\mid i\in[k],j\in[t]}}.\qquad{}\qquad{}
	\end{align*}
	\else
	Next, consider the trees $T_1,\Vec T_2$ we obtain from deleting the middle edge from each cycle.
	\fi
	Clearly, the minimal feedback set for trees is the empty set. Now, if the algorithm takes at most $d/2-2=\Omega(D)$ rounds, the vertices $\set{v_i\mid i\in[k]}$ cannot distinguish between $G_1$ and $T_1$ and between $G_2$ and $T_2$. 
	Therefore, the output of $\set{v_i\mid i\in[k]}$ must be identical in both cases, which means that any algorithm must have an additive error of at least $kt/2=\Omega(|V|/D)$ at least for one of the inputs. We summarize this in the following theorem.
	\begin{theorem}
	\label{thm: MFVS diam lb}
	\ThmMFVSLB
	\end{theorem}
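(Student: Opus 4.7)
The plan is to give an indistinguishability argument that exploits the inherently global nature of feedback set problems. I would exhibit a pair of graphs whose optimal feedback set sizes differ by $\Omega(n/D)$ but whose $r$-hop neighborhoods agree at most vertices, so that any algorithm running in $r < D/2 - O(1)$ rounds is forced to produce identical outputs on a large common region, yielding an $\Omega(n/D)$ additive error on at least one of the two inputs.

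For parameters $k, d$ (with $d > 10$) and a constant $t \ge 2$, I would use exactly the construction set up in the excerpt: $G_1$ is a star with center $v'$ and $k$ hubs $v_1, \ldots, v_k$, where each hub is attached via both endpoints to $t$ cycles of length $d$; the directed version $\Vec G_2$ orients every cycle consistently through $v'$. Their companions $T_1, \Vec T_2$ are obtained by deleting the middle edge of every cycle, so $T_1$ is a tree and $\Vec T_2$ is a DAG, whose optimal feedback sets are therefore empty. Meanwhile $G_1$ has unique \MFVS{} $\set{v_i\mid i\in[k]}$ and $\Vec G_2$ has unique \MFES{} $\set{(v', v_i)\mid i\in[k]}$, both of size $k$. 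With $t$ constant, $n = \Theta(kd)$ and $D = \Theta(d)$, so $n/D = \Theta(k)$.

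The key step is to observe that for any $r < d/2 - 2$, every vertex whose distance to every deleted middle edge exceeds $r$ has an identical $r$-hop view in $G_1$ and in $T_1$ (and similarly for the directed pair). This includes $v'$, every hub $v_i$, and every cycle vertex outside an $r$-window around its cycle's middle. A deterministic $r$-round algorithm must therefore make the same decision for each such vertex, and for \MFES{} for every edge whose endpoints are both of this ``fixed'' type. I would then argue by dichotomy: if the algorithm includes all $k$ hubs (resp.\ all $k$ edges $(v', v_i)$) in its output, then on $T_1$ (resp.\ $\Vec T_2$) it pays at least $k$ wasted elements and hence additive error $\Omega(n/D)$; otherwise, if $h$ hubs are omitted, then on $G_1$ each of the $ht$ corresponding cycles must still be broken by some other feedback element, forcing at least $(k - h) + ht$ items in the solution and hence an additive error of at least $h(t - 1)$ on $G_1$. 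Balancing the two cases gives additive error $\Omega(k) = \Omega(n/D)$ on at least one of the two inputs.

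The main subtlety, and where I expect the bookkeeping to be most delicate, is handling the ``free'' cycle vertices and edges near the deleted middle edges, whose outputs may legitimately differ between $G_1$ and $T_1$. The argument must confirm that even a clever algorithm that covers cycles in $G_1$ using only such free elements (so that they can be omitted on $T_1$) still pays the $\Omega(h(t-1))$ error on $G_1$ itself, because each omitted hub forces $t$ distinct cycle-breakers whether those are fixed or free. Once this is verified, the \MFES{} case follows by the analogous argument on $\Vec G_2, \Vec T_2$, with the directed edges $(v', v_i)$ playing the role of the hubs $v_i$; taking the contrapositive then yields the theorem's ``additive error $O(n/D) \Rightarrow$ time $\Omega(D)$'' statement in the LOCAL model.
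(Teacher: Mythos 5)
Your proposal matches the paper's proof essentially exactly: the same star-of-cycles graphs $G_1,\Vec G_2$ and their cut-open companions $T_1,\Vec T_2$, the same indistinguishability of the hubs within $d/2-O(1)$ rounds, and the same conclusion that identical hub outputs force an $\Omega(n/D)$ additive error on one of the two inputs. If anything, your explicit dichotomy over the number of omitted hubs (error $k-h$ on the tree versus $h(t-1)$ on the cyclic graph) and your attention to the ``free'' vertices near the deleted middle edges are spelled out more carefully than in the paper, which simply asserts the bound after noting the hubs cannot distinguish the two graphs.
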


\ifdefined\shortVersion
In Appendix~\ref{app:MVC2MFVS}, we prove the following reduction from \MVC{} to \MFES{} and \MFVS{}.
	\begin{theorem}
	\label{thm: mvc reduction}
	Any distributed $(\alpha, \beta)$-approximation algorithm in the CONGEST model for $\MFVS$ or $\MFES$ is also an $(\alpha, \beta)$-approximation algorithm in the CONGEST model for $\MVC$ running in the same number of rounds.
	\end{theorem}
\else  
	\subsubsection{Reducing $\MVC$ to $\MFVS, \MFES$}
	We present reductions from $\MVC$ to $\MFVS$ and $\MFES$. Despite being folklore, we present them here for completeness and to point out that they work in the distributed setting without issues (no congestion is incurred).
	
	Let $G(V,E)$ be an instance of \MVC{} and define by $G_1$ the graph in which we replace each edge with a triangle; formally, $G_1=(V\cup E, \set{\set{v,u},\set{v,e},\set{u,e}\mid e=\set{u,v}\in E})$. 
	Further, let $\Vec G_2=(\set{v_{in},v_{out}\mid v\in V}, \set{(v_{in},v_{out})\mid v\in V}
	\cup\set{(u_{out},v_{in}),(v_{out},u_{in})\mid \set{u,v}\in E})$ be the digraph in which each $v\in V$ is replaced with two vertices $v_{in},v_{out}$ connected by an arc and each edge $\set{u,v}$ is replaced by the arcs $(u_{out},v_{in}),(v_{out},u_{in})$.
	We prove the following lemma:
	\begin{restatable}{lemma}{testname}
		There is an $\MVC$ of size $k$ in $G$ if and only there is an $\MFVS$ of size $k$ in $G_1$. Further, There is an $\MVC$ of size $k$ in $G$ if and only there is an $\MFES$ of size $k$ in $G_2$
	\end{restatable}
	\begin{proof}
	Assume that there is a cover $U\subseteq V$ of $G$, then $U$ is an FVS for $G_1$ and $\set{(v_{in},v_{out})\mid v\in U}$ is an FES for $G_2$. Similarly, let $F_1$ be an FVS for $G_1$ and let $F_2$ be a $FES$ for $G_2$. Denote by $F_1'$ the vertex set we get by arbitrarily replacing each $\set{u,v}\in F_1$ by one of its endpoint (e.g., $u$); then $F_1'$ is a cover of $G$. 
	Similarly, the set $F_2'=\set{v\mid ((v_{in},v_{out})\in F_2) \vee (\exists u:(v_{out},u_{in})\in F_2)}$ is a cover for $G$ (of size at most $|F_2|$). This concludes the proof.
	\end{proof}

	\paragraph{Distributed implementation} We note that the above reductions can be simulated in the distributed setting without incurring congestion. In \MFVS{}, $\set{u,v}$ can be simulated by either $u$ or $v$, and in the second reduction a node $v\in V$ can simulate $v_{in}, v_{out}$. 

	The above lemma implies a reduction from \MVC{} to \MFVS{} and \MEDS{} even for $(\alpha,\beta)$-approximations. We say that a solution with value $X$ to some optimization problem is an $(\alpha, \beta)$-approximate solution if it holds that $X \in [\alpha^{-1} \OPT{} - \beta, \alpha \OPT{} + \beta]$. Finally, we state \mbox{our main theorem}:
	
	\begin{theorem}
	\label{thm: mvc reduction}
	Any distributed $(\alpha, \beta)$-approximation algorithm in the CONGEST model for $\MFVS$ or $\MFES$ is also an $(\alpha, \beta)$-approximation algorithm in the CONGEST model for $\MVC$ running in the same number of rounds.
	\end{theorem}
\fi	
\subsection{Parameterized Problems}
In this section, we provide lower bounds for parameterized problems. 
The above lower bounds apply for a broad set of problems that have been extensively studied in the distributed setting and are likely to be of independent interest.
However, as the optimal solution to all problems is linear in the graph size, it fails to provide us with bounds for parameterized computation. Namely, if there exists an optimal solution of size $k$ (which could be significantly smaller than $n$), how does that affect the lower bounds? In this section, we prove that the above theorem holds even for graphs that are arbitrarily larger than $k$.

We achieve the above by showing a simple way to extend lower bounds where the size of the solution is large to the case where it is much smaller than the size of the input. This allows us to extend the lower bounds of \cite{Censor-HillelKP17} and \cite{KuhnMW16} to the parameterized setting.

    We achieve the above by \emph{attaching} (see definition \ref{def: attach graph}) some very large graph to the lower bound graph. We pick the graph to attach in such a way that the size of the solution stays (almost) the same. The graphs we attach are $K_n$, the clique on $n$ vertices and $S_n$, the star graph on $n$ vertices. Formally, Let $S_n(V_S,E_S)$ be a \emph{star} of size $n$, where $V_S =\set{c} \cup \set{v_1,...,v_n}, E_S={(c,v_i) \mid i \in [n]}$. Let us formally define the concept of graph attachment.
    
    \begin{definition}
    \label{def: attach graph}
    Let $G(V,E), H(V_H,E_H)$ be any two non-empty graphs. We define $G'(V',E')$ to be the graph resulting from \emph{attaching} $H$ to $G$ via $(v,u)$ such that $v \in V, u\in V_H$. Formally, $V' = V \cup V_H, E'=E\cup E_H \cup \set{(v,u)}$.
    \end{definition}
    
    Next, let us denote by $P(G)$ the size of the optimal solution to $G$ for some $P \in \mathcal P$.  
 %
%
    %
     We state the following theorem.
    %
    \begin{theorem}
    \label{thm: generalize LB}
     For any $P \in \mathcal{P}$ the following holds: If there exists a graph $G(V,E), k=\size{V}$ where vertices have unique identifiers and some vertex $u$ has the identifier $0$, and any algorithm computing an $\alpha$-approximate solution to $P$ on $G$ requires $\Omega(f(k))$ rounds, then there exists a graph $G'(V',E'), \size{V'} \gg k$ such that every algorithm computing an $(\alpha, \alpha-1)$-approximate solution to $P$ on $G'$ requires $\Omega(f(k))$ rounds in the local model and $f(k)\cdot \log k / \log n$ \mbox{rounds in the CONGEST model.}

    \end{theorem}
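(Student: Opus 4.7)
My plan is to prove the theorem by a \emph{padding} reduction: given the hard instance $G$ on $k$ vertices, I will attach to it a large auxiliary graph $H$ on $n-k$ vertices such that $P(H)$ is a small constant (at most $\alpha-1$), and such that any $(\alpha, \alpha-1)$-approximation on $G'$ projects back to an $\alpha$-approximation on $G$. The choice of $H$ depends on $P$. For the minimization problems $\MVC, \MDS, \MEDS$ (and for the maximization problem $\MaxM$) I attach a star $S_{n-k}$ via an edge from the designated vertex with identifier $0$ to the star's center, since $S_{n-k}$ has $\MVC$, $\MDS$, $\MEDS$, and $\MaxM$ all of constant size. For $\MaxIS$ I attach a clique $K_{n-k}$, since its maximum independent set has size $1$. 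For $\MFVS$ and $\MFES$, I attach any tree (e.g., a star), since trees have empty feedback sets.

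The first step is a routine case analysis showing $|P(G') - P(G)| = O(1)$, in fact at most $1$ for each choice above. Combined with the hypothesis that the $k$-vertex instance $G$ is hard to $\alpha$-approximate, this means any solution on $G'$ whose value is within an $(\alpha, \alpha-1)$-range of $\mathrm{OPT}(G')$ yields, after removing the locally-determined part on $H$, a solution on $G$ whose value is within a multiplicative $\alpha$-factor of $\mathrm{OPT}(G)$. The additive $\alpha-1$ slack in the approximation guarantee is precisely what is needed to absorb the constant-sized discrepancy between $\mathrm{OPT}(G')$ and $\mathrm{OPT}(G)$.

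The second step is the simulation argument. Suppose by contradiction that there is an $(\alpha,\alpha-1)$-approximation algorithm $\mathcal{A}'$ for $P$ on $G'$ taking $o(f(k))$ rounds in the LOCAL model. I will describe a LOCAL algorithm $\mathcal{A}$ on $G$ that fixes identifiers for the vertices of $H$ deterministically (using the identifier $0$ anchor), and then simulates $\mathcal{A}'$ on the conceptual graph $G'$. Because $H$ has a trivially computable structure, each vertex of $G$ within its $r$-ball of $G'$ can simulate the contribution of $H$'s vertices locally at no communication cost; only the $G$-portion of communication actually requires rounds. Thus $\mathcal{A}$ runs in $o(f(k))$ rounds and produces an $\alpha$-approximate solution on $G$, contradicting the hypothesis. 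The CONGEST version follows the same simulation, except that every round of $\mathcal{A}'$ on $G'$ transmits $O(\log n)$-bit messages per edge, which I emulate on $G$ using $O(\log n / \log k)$ rounds of $O(\log k)$-bit messages. Hence the $\Omega(f(k))$ lower bound on $G$ translates into an $\Omega(f(k) \cdot \log k / \log n)$ lower bound on $G'$.

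The main obstacle is verifying case-by-case that the choice of $H$ indeed preserves the lower bound for every $P \in \mathcal{P}$ — in particular, that $\mathrm{OPT}(G')$ can be obtained from $\mathrm{OPT}(G)$ simply by adding a known constant contribution coming from $H$, and that restricting an $(\alpha,\alpha-1)$-approximate solution on $G'$ to $V(G)$ yields a genuinely $\alpha$-approximate solution on $G$. A secondary subtlety is handling the parameterized variants (Theorem~\ref{thm:LB-main-param} and related results): here one must additionally argue that the value $k$ of the solution size passed to the parameterized algorithm on $G'$ can be set to $P(G') = P(G) + O(1)$, so that feasibility on $G'$ reduces to feasibility on $G$. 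Both checks are mechanical for each of the seven problems in $\mathcal{P}$ together with $\MaxCut$ and $\MaxDiCut$, and together they yield the claimed theorem.
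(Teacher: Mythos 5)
Your proposal is correct and follows essentially the same route as the paper's proof: attach a star (resp.\ clique) to the identifier-$0$ vertex, observe that the optimum shifts by at most one so the additive $\alpha-1$ slack absorbs the discrepancy, and have that anchor vertex simulate the padding graph internally, losing only an $O(\log n/\log k)$ factor in CONGEST because identifiers in $G'$ are longer. (The only quibble is your closing aside that the check is also ``mechanical'' for $\MaxCut$ and $\MaxDiCut$ --- attaching a star changes the max-cut value by $\Theta(n-k)$ rather than $O(1)$ --- but those problems lie outside $\mathcal{P}$ and outside the scope of this theorem.)
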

    \begin{proof}
    We prove the claim for $\MVC$ and $\MaxIS$. The bound follows for $\MFVS,\MFES$ via reductions to $\MVC$ (Theorem~\ref{thm: mvc reduction}), while for the rest of the problems the proof is identical up to~notation.
    
    Let $G'(V',E'), G''(V'',E'')$ be the graphs created from $G$ by attaching $S_n, K_n$ to $u \in V$, for $n\geq 2$.
    First, note that since $n \geq 2$ we have $\MVC(G') = \MVC(G) + 1$ and $\MaxIS(G'') = \MaxIS(G) + 1$. This is because we can always extend any valid solution to $G$ to a solution to $G',G''$ by taking a single vertex from the attached graph, and we must do so to achieve a valid \mbox{optimal solution to $G', G''$.}
    
    
    We execute the algorithm for $G', G''$ on $G$ where the vertex $u$ (which has id $0$) simulates the attached star/clique. 
    All vertices in $V$ return the result of the simulation and the output of the $S_n$/$K_n$ vertices is ignored.
    This simulation does not incur any congestion. We note that the messages in $G', G''$ are larger, thus simulating a single message in $G$ requires $O(\log n / \log k)$ rounds. Thus, the algorithm for $G',G''$ may potentially be faster by an $O(\log n / \log k)$ factor.
    
    As for correctness, let $C', C''$ be the sizes of the solutions returned by the algorithm on $G', G''$. 
    It holds that $C' \leq \alpha(\MVC(G)+1)$ and $C'' \geq \alpha^{-1}(\MaxIS(G)+1)$. Because we attached a star/clique there must be some extra vertex in the solution that gets removed when converting it to a solution to $G$. Thus, the returned solutions are bounded by $\alpha \MVC(G) + \alpha - 1$ and $\alpha^{-1} \MaxIS(G) + \alpha^{-1} - 1$.\qedhere
    \end{proof}
    
    We note that there are bounds for computing an exact \MVC{} \cite{Censor-HillelKP17} and any constant approximation \cite{KuhnMW16} in terms of $n$, the size of the graph. Namely, \cite{Censor-HillelKP17}  presents a family of graphs of increasing size, such that computing an \MVC{} for any graph in the family requires $\Omega(n^2 / \log^2 n)$ rounds in the CONGEST model. In \cite{KuhnMW16} a family of graphs is presented such that any constant approximation for \MVC{} requires $\Omega\parentheses{\min \set{ \sqrt{\log n / \log \log n}, \log \Delta / \log \log \Delta}}$ rounds in the LOCAL model. Both bounds hold for deterministic and randomized algorithms.
    In both cases, the size of the optimal solution in the lower bound graph is $\Tilde{\Theta}(n)$. We use Theorem~\ref{thm: generalize LB} to extend these bounds for the case $k \ll n$. Note that in theorem Theorem~\ref{thm: generalize LB} the value $k$ is actually the number of vertices in the original lower bound graph, but this also upper bounds the size of the size of the solution.
    
    Our assumption regarding having a vertex with id $0$ is rather weak and both the lower bounds of \cite{KuhnMW16} and \cite{Censor-HillelKP17} still hold given this assumption. Note that when $\alpha=1$ we are guaranteed an exact solution in $G$. 
    This, in turn, means that taking the constructions where the number of vertices in the lower bound graph is equal to $k$ and then attaching $S_n$ will result in the following generalization of \cite{Censor-HillelKP17}:
    
    \begin{theorem}[Parametrized \cite{Censor-HillelKP17}]
		\label{thm:ckp small k}
		\ThmCKPParam
	\end{theorem}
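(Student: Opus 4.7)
The approach is to instantiate Theorem \ref{thm: generalize LB} with $\alpha=1$ on the CONGEST lower bound graph of \cite{Censor-HillelKP17}. First, I take the family $H_k$ of $k$-vertex graphs for which any CONGEST algorithm that computes an exact $\MVC$ requires $\Omega(k^2/\log^2 k)$ rounds. Since identifiers in such lower bound constructions can be permuted freely without affecting indistinguishability, I may assume that some designated vertex $u\in V(H_k)$ carries identifier $0$; this is a purely cosmetic relabeling of the CKP construction. Given the target total size $n$ with $k\le 0.99n$, I define $G_{k,n}$ by attaching the star $S_{n-k-1}$ to $u$ in the sense of Definition \ref{def: attach graph}. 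Because the attached gadget is a star, any minimum vertex cover of $G_{k,n}$ decomposes as a minimum cover of $H_k$ together with exactly one extra vertex (the star's center, which simultaneously covers every star-edge and the bridging edge), so $\mathrm{MVC}(G_{k,n})=\mathrm{MVC}(H_k)+1=\Theta(k)$.

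Next, I invoke Theorem \ref{thm: generalize LB} on $G_{k,n}$. Its proof already contains the two ingredients I need: the id-$0$ vertex of $H_k$ can locally simulate the attached star without incurring congestion (the star has a trivial fixed topology known to $u$), and the only slowdown of running an algorithm designed for the $n$-vertex instance on the $k$-vertex base is a factor $O(\log n/\log k)$ per round, since CONGEST messages on an $n$-vertex graph carry $O(\log n)$ bits while those on $H_k$ natively carry $O(\log k)$ bits. Composing this with the CKP bound yields
\[
\Omega\!\left(\frac{k^2}{\log^2 k}\cdot \frac{\log k}{\log n}\right)=\Omega\!\left(\frac{k^2}{\log k\,\log n}\right)
\]
as a lower bound for any CONGEST algorithm that computes an exact $\MVC$ on $G_{k,n}$.

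Finally, I transfer the bound from exact $\MVC$ to $\kMVC$. I set the parameter of the $\kMVC$ instance to $\mathrm{MVC}(G_{k,n})=\Theta(k)$ (possibly rescaling $k$ by a constant, which is absorbed in the $\underline{k}$ threshold and in the $\Omega(\cdot)$). With this choice, the two indistinguishable CKP instances produce two distinct answers to the $\kMVC$ decision, so any $\kMVC$ algorithm on $G_{k,n}$ is as hard as exact $\MVC$, and the same round lower bound applies. The condition $k\le 0.99n$ is what makes the attached star have linear size, ensuring that the simulation slowdown $\log n/\log k$ is meaningful and that $n$ can be arbitrarily larger than $k$.

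The main obstacle is verifying that the specific CKP construction tolerates the id-$0$ relabeling and that its pair of indistinguishable graphs can be packaged as a single family $G_{k,n}$ parameterized cleanly by both $k$ and $n$; these are bookkeeping concerns on top of \cite{Censor-HillelKP17} rather than a new proof idea. A secondary subtlety is checking that attaching the star does not accidentally give the algorithm global information (e.g.\ a short cycle through $u$) that speeds up distinguishing the two CKP instances; this is immediate because the star contributes no cycles and the single bridging edge only changes local degrees at $u$ by one.
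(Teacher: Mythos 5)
Your proposal is correct and follows essentially the same route as the paper: attach a star to an identifier-$0$ vertex of the CKP graph and invoke Theorem~\ref{thm: generalize LB} with $\alpha=1$, then observe that because the CKP construction encodes a set-disjointness instance via two possible cover sizes $k_1\neq k_2$, a \kMVC{} algorithm with parameter set to the smaller value already distinguishes the two cases. One minor imprecision: you describe the two CKP instances as ``indistinguishable,'' but the CKP bound is a reduction from two-party communication complexity of disjointness rather than an indistinguishability argument; this does not affect your conclusion, since either way a \kMVC{} solver with $k=k_1$ solves the underlying hard problem.
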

	\begin{proof}
	The extension to the case where $k \ll n$ is direct from Theorem~\ref{thm: generalize LB}. Let us explain why the result of \cite{Censor-HillelKP17} also applies to $\kMVC$.
	We note that the result of \cite{Censor-HillelKP17} was for the problem of computing an exact $\MVC$, the reason we can restate it as a lower bound for $\kMVC{}$, is due to the fact that the lower bound uses a disjointness argument. Namely, edges are added to a basic graph construction representing two sets $A,B$. If the sets are disjoint the $\MVC$ has size $k_1$, otherwise $k_2 \neq k_1$. Solving $\MVC$ exactly allows us to distinguish between the two values, and solve the disjointness problems - the lower bound of \cite{Censor-HillelKP17} follows.
	
	Assume without loss of generality $k_1 < k_2$. Solving \kMVC{} on the construction of \cite{Censor-HillelKP17} with $k=k_1$ also allows us to distinguish between the two cases when the size of the \MVC{} equals $k_1$ (a vertex cover is found) or $k_2$ (all vertices reject). Thus, the lower bound of \cite{Censor-HillelKP17} also applies for \kMVC.
	\end{proof}
	
    As any cover of size at most $k$ implies an independent set of size at least $n-k$, the above also gives a lower bound on \kMaxIS{}, showing that it is hard to solve in the CONGEST model unless $k$ is almost as large as $n$:
    \begin{corollary}
    There exists $\underline{k}\in\mathbb N$ such that for any 
    $\underline{k}\le k\le 0.99n$, 
    any algorithm that solves \kMaxIS{} in the CONGEST model must take $\Omega(n^2 / \log^2 n)$ rounds, where $n = \size{V}$ can be arbitrarily larger than $k$.
    \end{corollary}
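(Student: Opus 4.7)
The plan is to reduce \kMaxIS{} directly to \kMVC{} via the independent set--vertex cover complement, and then invoke Theorem~\ref{thm:ckp small k} with an appropriately chosen parameter.

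First I would observe the standard fact that in any $n$-vertex graph $G=(V,E)$, a set $I\subseteq V$ is an independent set of size $k$ if and only if $V\setminus I$ is a vertex cover of size $n-k$. This correspondence is computable locally: each vertex already knows whether it belongs to $I$ or to $V\setminus I$, with no additional communication required. Consequently, any CONGEST algorithm $\mathcal{A}$ that solves \kMaxIS{} with parameter $k$ in $T$ rounds yields a CONGEST algorithm $\mathcal{B}$ for \kMVC{} with parameter $k'=n-k$ in the same $T$ rounds: each vertex complements its membership bit, and the unanimous ``no IS of size $\ge k$'' rejection translates verbatim into ``no VC of size $\le n-k$''. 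The reduction is purely syntactic and adds zero communication rounds.

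For the \kMaxIS{} parameter $k\in[\underline{k},\,0.99n]$, I would set the \kMVC{} parameter to $k'=n-k$. The upper bound $k\le 0.99n$ gives $k'\ge 0.01n$, so $k'=\Theta(n)$, and for $n$ sufficiently large $k'$ lies in the admissible interval $[\underline{k},\,0.99n]$ required by Theorem~\ref{thm:ckp small k}. Applying that theorem to the family $G_{k',n}$ from its statement then yields a \kMVC{} lower bound of $\Omega\!\left((k')^2/(\log k'\cdot\log n)\right)=\Omega(n^2/\log^2 n)$, using $\log k'=\Theta(\log n)$. The zero-round reduction of the previous paragraph transfers this bound verbatim to any \kMaxIS{} algorithm on $G_{k',n}$ run with parameter $k$, which is exactly what the corollary claims. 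Note that because Theorem~\ref{thm:ckp small k} permits $n$ to be arbitrarily larger than its parameter $k'$, the same freedom carries over through the reduction.

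The one subtle point, which I regard as the main (minor) obstacle, is aligning the admissible parameter intervals on the two sides of the reduction: Theorem~\ref{thm:ckp small k} demands $k'\le 0.99n$, which through $k'=n-k$ forces $k\ge 0.01n$ on the \kMaxIS{} side. Thus the stated range $[\underline{k},\,0.99n]$ is effectively $[\max\{\underline{k},\lceil 0.01n\rceil\},\,0.99n]$, and the phrase ``$n$ can be arbitrarily larger than $k$'' must be interpreted up to a constant factor (i.e., $k$ remains a constant fraction of $n$ in the nontrivial regime). Once this alignment is fixed, no fresh graph construction or combinatorial argument is needed; the corollary is a direct consequence of Theorem~\ref{thm:ckp small k} through IS--VC duality.
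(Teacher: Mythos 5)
Your proposal is correct and follows essentially the same route as the paper, which justifies the corollary in one line via the same IS--VC complementation applied to Theorem~\ref{thm:ckp small k} with vertex-cover parameter $n-k\ge 0.01n$. Your observation about aligning the admissible ranges (the regime $k<0.01n$ pushes the cover parameter above $0.99n$, outside the stated hypothesis of Theorem~\ref{thm:ckp small k}) is a fair and slightly more careful reading than the paper's own one-sentence justification.
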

    
    Both our lower bound and that of \cite{KuhnMW16} use indistinguishability arguments to achieve approximation lower bounds. Namely, that there exists two graphs $G, G'$ (or a family of such graphs), such that in less than $t$ rounds the algorithm cannot tell them apart, resulting in the desired approximation error. This translates directly to parameterized approximation lower bounds, because even if $k$ is known the algorithm still cannot tell $G$ and $G'$ apart in less than $t$ rounds. That is, the argument is based on the topology learned by every node in the graph (which would be identical for $G, G'$), and thus the knowledge of $k$ will does not change the lower bound. We use Theorem~\ref{thm: generalize LB} together with the above to restate our lower bound and that of \cite{KuhnMW16}. Note that the lower bound of \cite{KuhnMW16} with respect to $\Delta$ does not translate to the parameterized setting.
    
    
	\begin{theorem}\label{thm:LB-main-param}
        \ThmLBMainParam
    \end{theorem}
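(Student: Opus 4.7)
The plan is to combine the non-parameterized lower bound of Theorem~\ref{thm:LB-main} with the graph-attachment technique of Theorem~\ref{thm: generalize LB}. First I would take the basic construction $G_{r, \ell}$ from the proof of Theorem~\ref{thm:LB-main} with $\ell = \Theta(\epsilon^{-1})$ (so that the $\Omega(\epsilon^{-1})$-round indistinguishability argument of Lemma~\ref{lem:detApproxLB} still applies) and $r = \Theta(k/\ell) = \Theta(k\epsilon)$, so that $G_{r, \ell}$ has $\Theta(k)$ vertices and the optimal solution to every $P \in \mathcal{P}$ on it is of size $\Theta(k)$. This is a legal instance for the parameterized variants with parameter $\Theta(k)$.

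Second, I would apply Theorem~\ref{thm: generalize LB} to attach a star $S_{n - \Theta(k)}$ (for vertex-set minimization problems) or a clique $K_{n - \Theta(k)}$ (for maximization problems) to $G_{r, \ell}$ via the vertex $v_0$ with identifier $0$, as in Definition~\ref{def: attach graph}. This yields the required family $G_{k, n}(V, E)$ with $|V| = n$ arbitrarily larger than $k$, while shifting the optimal solution size by only an additive constant (e.g., $+1$ for \MVC{} with a star attached). Any LOCAL algorithm running on the combined graph can be simulated on the base graph itself, with $v_0$ internally simulating the attached structure at no round cost (LOCAL messages are unbounded), so any $\Omega(\epsilon^{-1})$ lower bound on the base instance transfers to the attached one. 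Lower bounds for \kMFVS{} and \kMFES{} then follow from the reductions of Theorem~\ref{thm: mvc reduction}.

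For the randomized Monte Carlo part, I would reapply the Yao Minimax argument from the proof of Theorem~\ref{thm:LB-main}: place a uniform distribution over the $2^r$ inputs obtained by reversing an arbitrary subset of $\set{A_i}_{i \in [r]}$ (for \MVC, \MaxM, \MaxIS) or $\set{B_i}_{i \in [r]}$ (for \MDS, \MEDS). Any deterministic algorithm running in $o(\epsilon^{-1})$ rounds must err independently on each path with probability at least $1/2$, and the same Chernoff bound used in Theorem~\ref{thm:LB-main} yields an additive error of $\Omega(r) = \Omega(k\epsilon)$ with probability at least $1 - \delta$ provided $\delta < 1/2 - EXP(-\Omega(k\epsilon))$. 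Since $\OPT = \Theta(k)$, this additive error corresponds to a multiplicative approximation ratio of $1 + \Omega(\epsilon)$, violating the $(1+\epsilon)$ guarantee.

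The main obstacle is ensuring that knowing the parameter $k$ gives the algorithm no extra leverage. This is handled because every input in the support of our distribution has the same number of vertices and the same optimum (reversals merely permute identifiers along single paths), and the attached star/clique contributes a deterministic, publicly-known constant. Hence the parameterized guarantee on the attached instance is equivalent to an approximation guarantee on the base instance, and the $\Omega(\epsilon^{-1})$ lower bound carries over unchanged to $\k{\mathcal P}$.
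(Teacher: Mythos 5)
Your proposal is correct and follows essentially the same route as the paper: the paper likewise obtains Theorem~\ref{thm:LB-main-param} by instantiating the construction of Theorem~\ref{thm:LB-main} on a base graph of $\Theta(k)$ vertices (so the Yao/Chernoff condition $\delta<1/2-EXP(-\Omega(n\epsilon))$ becomes $\delta<1/2-EXP(-\Omega(k\epsilon))$), attaching a large star/clique via Theorem~\ref{thm: generalize LB}, and observing that knowledge of $k$ cannot help since the argument is a topological indistinguishability argument. Your explicit choice of $\ell=\Theta(\epsilon^{-1})$, $r=\Theta(k\epsilon)$ just makes precise what the paper leaves implicit.
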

    Note that by setting $\epsilon = 1/k$, the above implies an $\Omega(k)$ lower bound for computing an exact solution of any problem in \k{$\mathcal{P}$}.
    \begin{theorem}[Parametrized \cite{KuhnMW16}]
		\label{thm:kuhn small k}
		\ThmKuhnParam  
	\end{theorem}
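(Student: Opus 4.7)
The plan is to derive Theorem~\ref{thm:kuhn small k} as a straightforward consequence of the KMW lower bound \cite{KuhnMW16} combined with the graph attachment machinery developed in Theorem~\ref{thm: generalize LB}. Recall that KMW exhibit, for every sufficiently large integer $k$, a graph $H_k$ on $\Theta(k)$ vertices such that any (randomized) algorithm achieving a constant-factor approximation for $\MVC$, $\MaxM$, or $\MaxIS$ on $H_k$ requires $\Omega(\sqrt{\log k / \log\log k})$ rounds in the LOCAL model. The first step is to relabel vertex identifiers of $H_k$ so that some distinguished vertex $u$ carries identifier $0$; since the KMW lower bound is based purely on indistinguishability of $t$-hop views, any relabeling preserves the bound.

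Next, given an arbitrarily large target size $n \gg k$, I would form $G_{k,n}$ by attaching a star $S_{n-|V(H_k)|}$ to $H_k$ at $u$, exactly as in Definition~\ref{def: attach graph}. The proof of Theorem~\ref{thm: generalize LB} then shows that the optimum of each of $\MVC$, $\MaxM$, $\MaxIS$ changes by at most an additive constant between $H_k$ and $G_{k,n}$, and that $u$ can locally simulate the attached star without any additional communication rounds in the LOCAL model (messages may be arbitrarily large, so no $\log n / \log k$ overhead arises). Consequently, any constant-approximation algorithm for the three problems on $G_{k,n}$ running in $t$ rounds yields a constant-approximation algorithm on $H_k$ running in $t$ rounds, and the KMW bound forces $t = \Omega(\sqrt{\log k/\log\log k})$.

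The main subtlety is promoting this to the \emph{parametrized} variants $\kMVC$, $\kMaxM$, $\kMaxIS$. The KMW argument constructs, within each hard instance family, two local views with optimum values separated by a constant factor; I would choose the parameter to lie in the gap so that, by definition, any parametrized constant-factor approximation must distinguish them (either outputting a small solution or unanimously reporting that no $k$-sized solution exists). Because all nodes know $k$ from the outset, and $k$ is the same integer in both scenarios being compared, its knowledge is useless for breaking the indistinguishability of $t$-hop neighborhoods — the lower bound is topological, and attaching a common star $S_{n-|V(H_k)|}$ to both scenarios preserves the indistinguishability exactly.

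The main obstacle is ensuring that the two instances in the KMW indistinguishability pair remain valid inputs to the parametrized problem simultaneously (i.e., both consistent with the same parameter $k$), and that the constant-factor gap in their optima survives the $\pm O(1)$ perturbation introduced by the star attachment. Both are handled by choosing $k$ in the interior of the optimum gap and by scaling the KMW hard instance so that the gap dominates the additive perturbation; this is a routine adjustment once the machinery of Theorem~\ref{thm: generalize LB} is in place. The resulting round complexity is then $\Omega(\sqrt{\log k/\log\log k})$, as claimed.
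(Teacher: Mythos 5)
Your overall strategy is the one the paper intends: take the KMW hard instance on $\Theta(k)$ vertices, pad it to size $n$ via the attachment machinery of Theorem~\ref{thm: generalize LB}, and observe that KMW's indistinguishability argument is purely topological, so knowledge of the parameter $k$ cannot help an algorithm break it. The paper is terse on this point and essentially says exactly what your second and third paragraphs say.

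There is, however, one concrete error: you attach a star $S_{n-|V(H_k)|}$ uniformly for all three problems and assert that ``the proof of Theorem~\ref{thm: generalize LB} then shows that the optimum of each of $\MVC$, $\MaxM$, $\MaxIS$ changes by at most an additive constant.'' That is not what Theorem~\ref{thm: generalize LB} shows, and it is false for $\MaxIS$: the leaves of the attached star form an independent set of size $n-|V(H_k)|-1=\Theta(n)$, so $\MaxIS(G_{k,n})=\Theta(n)$ rather than $\MaxIS(H_k)+O(1)$. This does not merely perturb constants --- it destroys the lower bound for $\kMaxIS$, since for any parameter value up to $\Theta(n)$ the star's leaves are a trivially computable feasible solution, and a constant-factor approximation of $\kMaxIS$ on your $G_{k,n}$ becomes solvable in $O(1)$ rounds. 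The attachment graph must be chosen per problem so that \emph{its own} optimum is $O(1)$: the paper's proof of Theorem~\ref{thm: generalize LB} attaches $S_n$ for $\MVC$ (cover the star by its center) but $K_n$ for $\MaxIS$ (a clique contributes exactly one vertex to any independent set); for $\MaxM$ the star is the right choice (only the center can be matched). With that correction --- and noting that in the LOCAL model the simulating vertex $u$ incurs no bandwidth overhead, as you already observe --- the rest of your argument goes through and matches the paper's.
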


\section{Upper Bound Warmup -- Parameterized \mbox{Diameter Approximation}}
In this section, we illustrate the concept of parameterized algorithms with the classic problem of diameter approximation. This procedure will also play an important role in all our algorithms.
%
Computing the exact diameter of a graph in the CONGEST model is costly. Specifically, it is known that computing a $(3/2-\epsilon)$-approximation of the diameter, or even distinguishing between diameter $3$ or $4$, requires $\widetilde O(n)$ CONGEST rounds~\cite{abboud2016near,bringmann2018note}.
Computing a $2$-approximation for the diameter is straightforward in $O(D)$ rounds, by finding the depth of any spanning tree. However, we wish to devise algorithms whose round complexity is bounded by some function $f(k)$, even if no solution of size $k$ exists.
Therefore, we now show that it is possible to compute a $2$-approximation for the \emph{parameterized version} of the diameter computation problem.
\begin{theorem}
\label{thm:diameter}
\ThmDiam
\end{theorem}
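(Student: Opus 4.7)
The plan is a two-phase flood-and-check algorithm running in $3k = O(k)$ rounds. In the first phase (Min-ID BFS, $k$ rounds) each vertex $v$ maintains a pair $(m_v, d_v)$ initialized to $(\mathrm{id}(v), 0)$: in every round it broadcasts $(m_v, d_v)$, and upon receiving $(m_u, d_u)$ it updates $(m_v, d_v) \gets (m_u, d_u + 1)$ whenever $(m_u, d_u + 1)$ is lexicographically smaller than $(m_v, d_v)$. After $k$ rounds $m_v$ is the smallest ID inside $v$'s $k$-ball and $d_v$ is the distance to the vertex achieving it. Writing $r$ for the vertex of globally minimum ID, the key observation is that $m_v = r$ for every $v$ iff every vertex lies within $k$ hops of $r$, i.e., iff $\mathrm{ecc}(r) \le k$.

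In the second phase (boundary-OR flooding, $2k$ rounds) each $v$ sets a bit $b_v \gets 1$ if some neighbor $u$ has $m_u \ne m_v$, and $b_v \gets 0$ otherwise. For $2k$ rounds it then broadcasts $b_v$ and updates $b_v \gets b_v \lor (\text{bits received})$. Finally $v$ outputs \emph{SMALL} if $b_v = 0$ and \emph{LARGE} otherwise. Every message is $O(\log n)$ bits (an ID plus a hop counter bounded by $k$, or a single bit), so the procedure is CONGEST-compatible and deterministic.

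Correctness then reduces to the sandwich $\mathrm{ecc}(r) \le D \le 2\,\mathrm{ecc}(r)$. If $\mathrm{ecc}(r) \le k$, all vertices see $m_v = r$, no boundary exists, every bit stays $0$, and every vertex outputs \emph{SMALL}; this covers $D \le k$. If $\mathrm{ecc}(r) > k$ (which is forced by $D > 2k$), some vertex has $m_v \ne r$, so by connectivity at least one edge has endpoints with different $m$-values and at least one bit is initially set. The geometric step that makes the $2k$-round budget work is the following: for any $v$, let $c$ be the vertex of ID $m_v$; then $R_v := \{u : m_u = m_v\} \subseteq B_k(c)$, so $R_v$ has graph diameter at most $2k$. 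When multiple $m$-values coexist, $R_v \ne V$, and connectivity of $G$ provides an edge from $R_v$ to its complement, whose $R_v$-endpoint is a boundary vertex at graph distance at most $2k$ from $v$; since that endpoint already has its bit set, $2k$ rounds of OR-flooding suffice to flip $b_v$ to $1$. Hence every vertex outputs \emph{LARGE} in that case. For $k < D \le 2k$ the answer is determined by whether $\mathrm{ecc}(r) \le k$, but is unanimous either way.

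The main obstacle is precisely this geometric claim that every vertex is within graph distance $2k$ of some bit-$1$ vertex when multiple regions exist; the rest is routine flooding. Once the containment $R_v \subseteq B_k(c)$ is in place, the $2k$-round budget is exactly what is needed for the OR-flood to reach every vertex within the $O(k)$-round target.
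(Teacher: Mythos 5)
Your proof is correct and follows essentially the same two-phase approach as the paper: $k$ rounds of min-ID flooding so that each vertex learns the smallest identifier in its $k$-ball, followed by $O(k)$ further rounds to test whether all vertices agree on that value. The only difference is cosmetic --- the paper propagates the minimum and maximum of the $x_v$ values for $2k+1$ rounds and compares them, while you flag boundary edges and OR-flood the flag for $2k$ rounds; both rest on the same geometric observation that the set of vertices agreeing with $v$ lies in a radius-$k$ ball around the vertex achieving the minimum, so a disagreement is detectable within distance roughly $2k$.
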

\begin{proof}
Our algorithm starts with $k$ rounds, such that in every round each vertex broadcasts the minimal identifier it has learned about (initially its own identifier). 
After this stage terminates, each vertex $v$ has learned the minimal identifier $x_v$ in its $k$-hop neighborhood.

Next follows $2k+1$ rounds such that in each round each vertex $v$ broadcasts $y_v$ and $z_v$, which are the minimal and maximal $x_u$ identifier it has seen so far.
When this ends, each vertex returns \emph{SMALL} if $y_v=z_v$ and \emph{LARGE} otherwise.
Clearly, the entire execution takes $O(k)$ rounds.

For correctness, observe that if the diameter is bounded by $k$ then all $x_v$'s are identical to the globally minimal identifier. Next, assume that the diameter is at least $2k+1$, and fix some vertex $v$. 
This means that there exist a vertex $u$ that whose distance is exactly $k+1$ with respect to $x_v$, and thus at most $2k+1$ from $v$. Since the first stage of the algorithm runs for $k$ rounds, we have that $x_u\neq x_v$. Therefore, after $k+1$ rounds of the second stage we have that $y_{x_v}\neq z_{x_v}$, and after additional $k$ rounds $y_v\neq z_v$ and thus $v$ outputs \emph{LARGE}. 
Finally, if the diameter is between $k+1$ and $2k$, then all vertices have the same $y_v$ and $z_v$ values and thus answer unanimously.
\end{proof}

\section{Parameterized Problems Upper Bounds}
\ifdefined\fullVersion
In this section, we discuss algorithms for the parameterized variant of many optimization problems.
\fi
\subsection{LOCAL Algorithms}
Our first result is for diameter lower bounded problem in the LOCAL model. We show that any minimization problem $\k{P} \in \DLB$ can be solved in $O(k)$ LOCAL rounds. 
To that end, we first use Theorem~\ref{thm:diameter} to check whether the diameter is at most $ck$, or at least $2ck$, where $c$ is a constant such that a diameter of $ck$ implies that no solution of size $k$ exists. If the diameter is larger than $2ck$, the algorithm terminates and reports that no $k$-sized solution exists. Otherwise, we collect the entire graph at a leader vertex $v$ which computes the optimal solution. If the solution is of size at most $k$, $v$ sends it to all vertices. If no solution of size $k$ exists, $v$ \mbox{notifies the other vertices.}

The above approach does not necessarily work for maximization problems as the existence of a $k$-sized solution does not imply a bounded diameter. Nevertheless, we now show that \kMaxM{} and \kMaxIS{} have $O(k)$ rounds algorithms.
For this, we first check whether the diameter is at most $2k$ or at least $4k$. If the diameter is small, we can still collect the graph and solve the problem locally. Otherwise, we use the fact that \emph{any} maximal matching or Independent Set in a graph with a diameter larger than $2k$ must be of size at least $k$. 
Since the maximal matching or independent set may be too large, we run just $k$ iterations of extending the solution.
For \kMaxM{}, at each iteration, any edge that neither of its endpoints is matched and that is a local minimum (with respect to the identifiers of its endpoints) joins the matching. We are guaranteed that the matching grows by at least a single edge at each round and thus after $k$ iterations the algorithm terminates. Similarly, for \kMaxIS{}, at each iteration, every vertex that neither of its endpoints is in the independent set and is a local minimum enters the set. 
We summarize this in the following theorem.
\begin{theorem}
\label{thm: local UB}
\ThmLocalUB
\end{theorem}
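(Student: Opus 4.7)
The plan is to handle the two cases separately, reusing the diameter primitive from Theorem~\ref{thm:diameter}. For a minimization problem $\k P$ with $P\in\DLB$, fix the constant $c>0$ such that every graph of diameter $D$ has $\OPT\ge cD$. First I would invoke Theorem~\ref{thm:diameter} with parameter $\lceil k/c \rceil$; this takes $O(k)$ rounds and produces either \emph{SMALL} or \emph{LARGE} unanimously. In the \emph{LARGE} case the diameter exceeds $\lceil k/c\rceil$, so $\OPT>k$ and every vertex safely reports that no $k$-sized solution exists. In the \emph{SMALL} case the diameter is at most $2\lceil k/c\rceil = O(k)$, so in $O(k)$ additional rounds a designated leader (the vertex of globally minimum identifier, found by $O(k)$ rounds of broadcasting the running minimum) collects the entire topology, solves the problem locally, and then disseminates either the solution or a rejection message in $O(k)$ further rounds.

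For \kMaxM{} and \kMaxIS{} the parameter $k$ no longer bounds the diameter, so I would instead invoke Theorem~\ref{thm:diameter} with parameter $2k$, distinguishing in $O(k)$ rounds between $D\le 2k$ and $D>4k$. The \emph{SMALL} branch is handled exactly as in the minimization case: the graph is collected at a leader in $O(k)$ rounds and the problem is solved from scratch. The \emph{LARGE} branch rests on the observation that any shortest path of length greater than $2k$ contains more than $k$ pairwise-disjoint edges and more than $k$ pairwise-non-adjacent vertices, so every maximal matching and every maximal independent set of the input has size strictly greater than $k$. It therefore suffices to grow any maximal-style solution until its size reaches $k$.

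To stay within the $O(k)$ budget, rather than waiting for full maximality, I would perform exactly $k$ local augmentation rounds. For \kMaxM, in each round every unmatched vertex $v$ nominates its unmatched neighbor $u$ of smallest identifier, and the edge $\set{u,v}$ joins the matching iff $v$ and $u$ nominate each other; this rule provably produces no conflicts. For \kMaxIS, in each round a vertex $v$ that is neither in the independent set nor adjacent to it joins iff its identifier is smaller than all its competing neighbors. A standard argument shows that so long as the current partial solution is not maximal, the eligible edge or vertex of globally minimum identifier is always added by its local neighborhood, so the solution grows by at least one element per round. In the \emph{LARGE} branch the final maximal size exceeds $k$, so $k$ rounds suffice to reach size $k$; moreover each round uses only $O(1)$ messages of identifier size, so the algorithm also runs in CONGEST.

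The main point to verify carefully is the progress guarantee of the local-minimum rule: in every round before the solution reaches size $k$, the globally smallest-identifier candidate beats every competing neighbor and is therefore selected, so the solution strictly grows. The auxiliary claims (the $\DLB$ constant $c$, the edge/vertex counts on a $>2k$ shortest path, leader election and topology collection in $O(k)$ rounds on an $O(k)$-diameter graph) are routine once the skeleton above is fixed, so the only real obstacle is making this per-round progress step precise. The total round count for each of the three variants is then the sum of one $O(k)$-round diameter test plus either one $O(k)$-round collect-and-solve phase or $k$ augmentation rounds, giving $O(k)$ overall.
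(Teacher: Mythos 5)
Your proposal matches the paper's proof essentially step for step: the same use of Theorem~\ref{thm:diameter} followed by collecting the graph at a leader in the small-diameter case, and $k$ iterations of a local-minimum greedy rule to grow a matching/independent set in the large-diameter case (your mutual-nomination rule for matchings is an immaterial variant of the paper's locally-minimal-edge rule, with the same progress argument). The one imprecision — shared with the paper — is the constant in the diameter threshold: a shortest path of length just over $2k$ only forces every \emph{maximal} matching or independent set to have size $\Omega(k)$ rather than $\ge k$ (a maximal solution need not contain your $k+1$ disjoint edges or non-adjacent vertices), so the diameter test should be run against a larger constant multiple of $k$; this changes nothing in the $O(k)$ bound.
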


	\subsection{CONGEST algorithms for \kMVC{}}
	Our first algorithm is deterministic and aims to solve the exact variant of \kMVC{}. 
    Intuitively, it works in two phases; first, it checks that the diameter is $O(k)$, if not the algorithm rejects. Knowing that the diameter is bounded by $O(k)$, we proceed by calculating a solution \emph{assuming there exists a solution of size at most $k$}. If this assumption holds, we are guaranteed to find such a solution. 
    We run the above for just enough rounds to guarantee that if a $k$-sized solution exists we will find such.
    Finally, we check that the size of the solution returned by the algorithm is \mbox{indeed bounded by $k$.}
	
	We first show a procedure that solves the problem, if a solution of size $k$ exists. Note that if no such solution exists, this procedure may not terminate in time or compute a cover larger than $k$.
	\begin{lemma}\label{lem:promisekMVC}
	There exists a deterministic algorithm that if a $k$-sized cover exists: (1) terminates in $O(k^2)$ CONGEST rounds and (2) finds such a cover.
	\end{lemma}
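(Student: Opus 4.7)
The plan is to combine the classical vertex-cover kernelization with pipelining on a BFS tree. The two structural observations that drive the argument are: (a) any graph admitting a vertex cover of size $\le k$ has diameter at most $2k$; and (b) after placing every vertex of degree $>k$ into the cover, the residual graph has at most $k^2$ edges (its maximum degree is at most $k$, and at most $k$ of its vertices need to enter any size-$k$ cover).

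First I would elect a leader and build a BFS tree. Under the promise that a $k$-cover exists, the diameter is at most $2k$, so $2k$ rounds of flooding the minimum identifier suffice for every vertex to learn the global min-id vertex $\ell$; a BFS tree rooted at $\ell$ is then built in $O(k)$ additional rounds. Next comes kernelization: in one round each vertex learns its degree and announces whether $\deg(v)>k$; the set $C_0$ of such vertices joins the cover and marks all incident edges as covered. Under the promise, $C_0$ must be contained in every size-$k$ cover, so $|C_0|\le k$ and the residual graph $G'$ (the uncovered edges) has at most $k^2$ edges.

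The third phase collects $G'$ at $\ell$. Each uncovered edge is described by two identifiers, fitting in $O(\log n)$ bits, so one edge-description travels in a single CONGEST message. Pipelining the $O(k^2)$ edge-descriptions up a tree of depth $O(k)$ is the standard bottleneck argument and takes $O(k^2+k)=O(k^2)$ rounds. The leader then computes an optimal vertex cover $C'$ of $G'$ locally (unbounded computation is permitted). By the promise, an optimal cover $C^*$ of $G$ satisfies $C_0\subseteq C^*$, and $C^*\setminus C_0$ is a cover of $G'$ of size at most $k-|C_0|$, so $|C_0\cup C'|\le k$. Finally, the leader pipelines the (at most $k$) identifiers of $C_0\cup C'$ down the BFS tree in $O(k)$ rounds, and every vertex outputs whether its identifier appears.

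The dominant term is the $O(k^2)$ pipelining step, giving the claimed round complexity. The main obstacle in a careful write-up is the pipelining phase: one must schedule the edge-descriptions so that no tree edge carries two messages in the same round (e.g., by having each internal node buffer and forward at most one new edge per round in a BFS order), which is the standard but nontrivial bookkeeping argument. Everything else is routine once the kernel size and diameter bounds are in hand.
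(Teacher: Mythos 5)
Your proposal follows essentially the same route as the paper's proof: bound the diameter by $2k$ under the promise, elect a leader and build a BFS tree in $O(k)$ rounds, kernelize by placing all vertices of degree greater than $k$ into the cover so that the residual graph has at most $k^2$ edges, pipeline those edges to the leader in $O(k^2)$ rounds, solve optimally there, and broadcast the answer back. The argument is correct and matches the paper's in all essential respects, including the kernel-size bound and the pipelining step.
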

	\begin{proof}
	Given that there exists a $k$-sized cover, the diameter of the graph is bounded by $2k$.
	Therefore, we can compute a unique leader and a BFS tree rooted at that leader in $O(k)$ rounds. 
	Our first observation is that every $v$ with a degree larger than $k$ must be in any $k$-sized cover. Thus, every such vertex immediately goes into the cover and gets removed together with all of its adjacent edges. If a vertex has degree $0$, it terminates (without entering the cover). Denote the remaining graph by $G'=(V',E')$. 
	
	For our analysis, let us fix some vertex cover   $X \subseteq V'$ of size $k$ and denote the remaining vertices by $A=V'\setminus X$. 
    We note that the set $A$ is an independent set. Thus, all edges in the graph are either between vertices in $X$ or between $A$ and $X$. We note that $\size{X} \leq k$, and now we aim to bound the number of remaining edges. 
    We now show that $\size{E'} \leq k^2$.
    
    
    As all vertices with degrees greater than $k$ have been added to the cover and removed, all remaining vertices have a degree of at most $k$. Because all remaining edges in the graph are of the form $(x,v)\in E', x\in X, v \in V'$ or $(u,v)\in E', u,v \in V'$, we may immediately bound the number of remaining edges, $\size{E'} \leq k^2$.
	
    
	We now can just learn the entire graph in time $O(\size{E'} + \size{D}) = O(\size{E'}) = O(k^2)$ using pipelining. The leader vertex computes the optimal cover for $G'$ and notifies all vertices in $G'$ whether they should join it.
	\end{proof}
	\begin{theorem}\label{thm:exactDeterministicVCAlg}
    There exists a deterministic algorithm for \kMVC{} that terminates in $O(k^2)$ rounds.
    \end{theorem}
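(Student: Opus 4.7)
The plan is to combine the parameterized diameter approximation of Theorem~\ref{thm:diameter} with the promise algorithm of Lemma~\ref{lem:promisekMVC}, and then run a short verification phase to detect the case where no valid $k$-sized cover exists. First I would invoke Theorem~\ref{thm:diameter} with parameter $2k$ in $O(k)$ rounds. If any vertex outputs LARGE, then the diameter exceeds $2k$, so no vertex cover of size $k$ can exist; all vertices then reject. Otherwise the vertices unanimously see SMALL, in which case the diameter is at most $4k$ and we may safely construct a BFS tree rooted at a leader (the minimum-id vertex) in $O(k)$ rounds.

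Next, I would run the promise procedure of Lemma~\ref{lem:promisekMVC}, but force it to abort after a fixed budget of $c k^2$ rounds (for the constant $c$ implicit in that lemma). If a $k$-sized cover exists, Lemma~\ref{lem:promisekMVC} guarantees the procedure terminates in time and every vertex knows whether it is in the computed cover. If no such cover exists, the leader's pipelined edge collection may overflow because the kernel could still contain more than $k^2$ edges; in that case we simply treat the timeout as a \emph{tentative} output where each vertex defaults to ``not in the cover.''

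After the promise phase, I would perform verification in $O(k)$ additional rounds. Each edge $\{u,v\}$ checks locally, in one round, whether at least one endpoint is marked as belonging to the candidate cover, producing a boolean flag at each vertex indicating that all its incident edges are covered. Using the BFS tree, we then aggregate up to the leader (i) the logical AND of all vertex flags and (ii) the total number of vertices marked as in the cover, saturated at $k+1$ so that each message requires only $O(\log k)$ bits and thus fits in a single CONGEST message. Both aggregations complete in $O(k)$ rounds, and the leader broadcasts back ACCEPT (together with the decision, which each vertex already knows locally) iff the AND is true \emph{and} the count is at most $k$; otherwise every vertex outputs that no $k$-sized cover exists.

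The main obstacle, and the reason the verification phase is unavoidable, is that Lemma~\ref{lem:promisekMVC} provides no guarantees when no $k$-sized cover exists: neither the bound $|E'|\le k^2$ nor termination within $O(k^2)$ rounds is assured, so the tentative output could be arbitrary. Correctness follows by case analysis: if a $k$-sized cover exists then the diameter check passes, the promise algorithm produces such a cover, and verification accepts; conversely, if verification accepts then the candidate set is a valid cover of size at most $k$, so in particular a $k$-sized cover exists and the algorithm answers correctly. The total round complexity is $O(k)+O(k^2)+O(k)=O(k^2)$, as claimed.
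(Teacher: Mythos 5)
Your proposal is correct and follows essentially the same route as the paper's proof: a diameter check via Theorem~\ref{thm:diameter} with $k'=2k$, a budgeted $ck^2$-round execution of the promise algorithm of Lemma~\ref{lem:promisekMVC}, and an $O(k)$-round verification over the BFS tree. The only (harmless) difference is that you additionally verify that every edge is covered, whereas the paper rejects immediately on timeout and only checks that the size of the returned cover is at most $k$.
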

	\begin{proof}
    Our algorithm first uses Theorem~\ref{thm:diameter} to estimate the diameter. 
	Specifically, we first apply it for $k'=2k$. If the vertices report LARGE, we follow the same approach as in the LOCAL algorithm and reject. Otherwise, we proceed knowing the diameter is bounded by $4k$. For the case the diameter is bounded we can compute a unique leader and a BFS tree rooted at that leader in $O(k)$ rounds.
	
	Let $c=O(1)$ such that the algorithm provided in Lemma~\ref{lem:promisekMVC} is guaranteed to terminate in $ck^2$ rounds for \emph{any} graph $G$ with a cover of size $k$.
	We run the algorithm for $ck^2$ rounds; if the procedure did not terminate, all vertices report that no $k$-sized cover exists.
    	
	Finally, we count the number of nodes in the cover using the BFS tree and verify that indeed the size of the solution in $G$ is bounded by $k$. If any node in the tree sees more than $k$ identifiers of vertices that joined the cover, it notifies all vertices that the solution is invalid and thus no $k$-sized solution exists.
	%
	\end{proof}

\paragraph{A Randomized Algorithm}
While we show a deterministic LOCAL algorithm for \kMVC{} that is optimal even if randomization is allowed, we have a gap of $\Theta(\min\set{k,\log k\log n})$ in our CONGEST round complexity. We now present a randomized algorithm with a $O\parentheses{k+\frac{k^2\log k}{\log n}}$ round complexity, thereby reducing the gap to $O(\log^2 k)$.
This is achieved by the observation that while node identifiers are of length $\Theta(\log n)$, we can replace each node identifier with an $O(\log k)$-bit \emph{fingerprint}. 
Specifically, since there are at most $k+k^2\le 2k^2$ vertices in $G'$ (after our reduction rule), we can use $(\mathfrak{b} = (c+4)\log k+1)$-bit fingerprints, for some $c>0$, and get that the probability of collision (that two vertices have the same fingerprint) is at most
$
{\binom{2k^2}{2}}2^{-\mathfrak{b}} < k^42^{-\mathfrak{b}+1}=k^{-c}.
$
Next, we run our deterministic algorithm, where each vertex considers its fingerprint as an identifier. Observe that since $|E'|\le k^2$ and each edge encoding now requires $O(\log k)$ bits (for $c=O(1)$), the overall amount of bits sent to the leader is $O(|E'|\log k)=O(k^2\log k)$. Since the diameter of the graph is $O(k)$, and $O(\log n)$ bits may be transmitted on every round on each edge, we use pipelining to get the round complexity below. 
Note that we only use fingerprints for the part of the algorithm which requires time quadratic in $k$. That is, checking the size of the diameter and validating the size of the solution are still done using the original identifiers.

\begin{theorem}\label{thm:exactRandomizedVCAlg}
For any $\delta = k^{-O(1)}$, there exists a randomized algorithm for \kMVC{} that terminates in $O\parentheses{k+\frac{k^2\log k}{\log n}}$ rounds, while being correct with probability at least $1-\delta$.
\end{theorem}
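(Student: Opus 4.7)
The plan is to follow the same three-stage skeleton as in Theorem~\ref{thm:exactDeterministicVCAlg} (diameter check, promise solve, size verification) but to replace the $\Theta(\log n)$-bit node identifiers with short fingerprints only inside the promise-solve stage, which is the only part whose round complexity is quadratic in $k$. First I would invoke Theorem~\ref{thm:diameter} with parameter $2k$ in $O(k)$ rounds; if the answer is \emph{LARGE} the algorithm safely rejects, and otherwise the diameter is at most $4k$ and a BFS tree rooted at a unique leader is constructed in $O(k)$ additional rounds. Then I would apply the same kernelization rule as in Lemma~\ref{lem:promisekMVC}: every vertex of degree greater than $k$ joins the cover and, together with its incident edges, is deleted. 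As argued there, the residual graph $G'=(V',E')$ satisfies $|V'|\le 2k^2$ and $|E'|\le k^2$ whenever a $k$-sized cover exists.

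Next, each surviving vertex independently draws a random fingerprint of $\mathfrak{b}=(c+4)\log k + 1$ bits, for a constant $c$ chosen so that $k^{-c}\le \delta$. By the union bound the fingerprints are pairwise distinct on $V'$ except with probability at most $\binom{|V'|}{2}2^{-\mathfrak b} \le k^{-c}\le \delta$, exactly as in the text preceding the statement. Condition on this event. Now the leader collects $G'$ using the fingerprints in place of ids: every edge is encoded as an ordered pair of $O(\log k)$-bit fingerprints, so the total amount of information shipped to the root is $O(|E'|\log k)=O(k^2\log k)$ bits. Using standard pipelining on the BFS tree (depth $O(k)$, bandwidth $\Theta(\log n)$), this transfer completes in $O\!\bigl(k+\tfrac{k^2\log k}{\log n}\bigr)$ rounds. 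The leader locally computes an optimum vertex cover of the collected graph and broadcasts back, along the tree, the list of fingerprints of the chosen nodes; this broadcast takes the same asymptotic number of rounds, and because fingerprints are distinct every surviving node correctly determines whether it belongs to the cover.

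Finally, I would execute the size-validation step of Theorem~\ref{thm:exactDeterministicVCAlg} unchanged using the real $\Theta(\log n)$-bit identifiers: convergecast up the BFS tree the count of selected vertices, reject if it exceeds $k$ or if the promise stage failed to terminate within the time budget $O(k+k^2\log k/\log n)$ allotted from a running counter. This verification costs $O(k)$ rounds and uses actual ids, so it is unaffected by the fingerprinting. Correctness: on the good event (probability $\ge 1-\delta$) the leader sees a faithful copy of $G'$ and therefore returns a true minimum cover of $G$ of size at most $k$ whenever one exists; if none exists, either the promise stage fails to finish in time or the convergecast reports more than $k$ selected vertices, and all vertices unanimously reject.

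The main obstacle I anticipate is the soundness of the fingerprinting step: one must ensure that a fingerprint collision cannot silently cause the leader to confuse two vertices and output an invalid set. This is why the final size-verification has to be executed with genuine identifiers rather than with fingerprints, and it is also why the fingerprint length is set to exceed $2\log |V'|$ by a growing additive slack; a slightly more delicate point is that, even conditioned on unique fingerprints, each surviving vertex must know its own fingerprint's outcome, which is handled by having the leader broadcast the selected list rather than any global map, so the communication remains $O(k^2\log k)$ bits.
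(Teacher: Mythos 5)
Your proposal is correct and follows essentially the same route as the paper: run the deterministic \kMVC{} algorithm but substitute $O(\log k)$-bit random fingerprints for the $\Theta(\log n)$-bit identifiers only in the quadratic-cost kernel-collection stage, bound the collision probability by a union bound over the at most $2k^2$ surviving vertices, and keep genuine identifiers for the diameter check and the final size validation. The fingerprint length, collision bound, and pipelining argument you give match the paper's proof.
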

    
\paragraph{Approximations} 
\ifdefined\fullVersion
As we may add all nodes of degree more than $k$ to the cover, this bounds $\Delta$, the maximum degree in the remaining graph, by $k$. We can now apply the algorithm of \cite{ImprovedVC} which runs in $O(\log \Delta / \log \log \Delta + \log \epsilon^{-1} \log \Delta / \log^2 \log \Delta)$ and achieves a $(2+\epsilon)$-approximation. This immediately results in a deterministic $O(\log k / \log \log k + \log \epsilon^{-1} \log k / \log^2 \log k)$-round $(2+\epsilon)$-approximation algorithm in the CONGEST model.
Further, since there exists a cover of size $\OPT{}\le k$, setting $\epsilon=1/(k+1)$ implies that the resulting cover is of size $\floor{\OPT{}(2+\epsilon)} \le 2\OPT{} + \floor{k/(k+1)} = 2\OPT{}$. Thus, we conclude that our algorithm computes a $2$-approximation for the problem in $O(\log^2 k / \log^2 \log k)$ rounds. 
Unfortunately, while this succeeds if there indeed exists a solution of size $k$, validating the size of the solution takes $O(k)$ rounds. 

\else
After adding all nodes of degree more than $k$ to the cover, the maximum degree becomes bounded by $k$. This allows us to use existing algorithms that have runtime of $O(\log \Delta / \log \log \Delta)$ for computing a $2+\eps$ approximation \emph{if such exists}. However, validating the size of the solution takes $O(k)$ rounds. 
\fi
%
We now expand the discussion and propose an algorithm that computes a $(2-\eps)$-approximation. For $\epsilon=o(1)$, this gives a better round complexity than \mbox{our exact algorithm,} while for $\eps=O(1/\sqrt{k})$ it improves the approximation ratio of the above $2$-approximation while still terminating in $O(k)$ rounds.
In Section~\ref{sec:classicProblems}, we use this algorithm to derive the first $2-\eps$ algorithm for the (non-parametric) \MVC{} problem that operate in $o(n^2)$ rounds.
\begin{theorem}\label{thm:2-epsPromiseApproximation}
$\forall \eps\in[1/k,1]$, there exists a deterministic CONGEST algorithm for \kMVC{} that computes a $(2-\epsilon)$-approximation in $O\parentheses{k+(k\epsilon)^2}$ rounds.
For any $\delta = (k\epsilon)^{-O(1)}$, there also exists a randomized algorithm that terminates in $O\parentheses{k+\frac{(k\epsilon)^2\log (k\epsilon)}{\log n}}$ rounds and errs \mbox{with probability 
$\le\delta$.}
\end{theorem}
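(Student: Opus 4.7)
The plan is to combine the exact parameterized algorithms of Theorems~\ref{thm:exactDeterministicVCAlg} and~\ref{thm:exactRandomizedVCAlg} with a preprocessing kernelization stage based on the Fidelity Preserving Transformation (FPT) framework of~\cite{FELLOWS201830}, which shrinks the effective parameter from $k$ to $k' = O(k\eps)$ while preserving a $(2-\eps)$-approximation guarantee. Running the exact algorithm on an instance with parameter $k'$ costs $O(k'^2) = O((k\eps)^2)$ deterministic rounds, or $O((k\eps)^2\log(k\eps)/\log n)$ randomized rounds; the $O(k)$ additive term comes from the preprocessing and diameter check.

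First, I would invoke Theorem~\ref{thm:diameter} with parameter $2k$; if it reports LARGE, then no $k$-sized cover can exist and all nodes reject. Otherwise, the diameter is at most $4k$, so a BFS tree and leader can be set up in $O(k)$ rounds. Next, I apply the standard high-degree reduction rule: every vertex of degree exceeding $k$ is mandatory in any $k$-cover and is placed in the output cover, after which its incident edges are removed. This takes a single round. On top of this, I apply the FPT-style reduction rules that exploit the $\eps$ slack, for instance rules based on crown/LP-style structures that, when several such rules are applied simultaneously, identify $\Theta((1-\eps)k)$ vertices that can be deterministically committed to the cover (or excluded from it) while the residual instance still admits a solution of size at most $k' = O(k\eps)$ whose combination with the committed vertices yields a $(2-\eps)$-approximate cover of the original graph. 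Each rule application is local and can be coordinated in $O(k)$ rounds over the BFS tree.

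Finally, I invoke the exact parameterized algorithm of Theorem~\ref{thm:exactDeterministicVCAlg} (or Theorem~\ref{thm:exactRandomizedVCAlg} in the randomized case) on the residual kernel with parameter $k'$. The residual kernel inherits diameter $O(k)$ from the original graph, and its maximum degree is at most $k$ after the high-degree rule, so the deterministic variant terminates in $O(k'^2) = O((k\eps)^2)$ rounds and the randomized variant in $O((k\eps)^2\log(k\eps)/\log n)$ rounds with error probability $(k\eps)^{-O(1)}$. The union of the kernelized cover and the exact cover on the residual instance is returned as the final solution; its size is at most $(2-\eps)\OPT$ by the fidelity guarantee of the reduction. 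Summing the preprocessing and the exact phase yields the stated $O(k + (k\eps)^2)$ and $O(k + (k\eps)^2\log(k\eps)/\log n)$ round complexities.

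The main obstacle will be the simultaneous application of multiple FPT reduction rules in the CONGEST model: naive sequential application may blow up the round count, so care is needed to verify that the rules are non-interfering (so they can be executed in parallel) and that together they provably compress the parameter all the way down to $O(k\eps)$ rather than merely to some $(1-\Omega(1))k$. Verifying that this compression respects the $(2-\eps)$ fidelity and can be checked against the output size bound in $O(k)$ additional rounds (via aggregation on the BFS tree, analogous to the validation step in Theorem~\ref{thm:exactDeterministicVCAlg}) is the technical core of the argument.
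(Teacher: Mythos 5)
Your proposal has the right skeleton — diameter check, leader election, a Fidelity-Preserving-Transformation kernelization that drives the parameter down to $O(k\eps)$, then the exact algorithms of Theorems~\ref{thm:exactDeterministicVCAlg} and~\ref{thm:exactRandomizedVCAlg} on the residual instance, followed by validation — and you correctly identify where the difficulty lies. But the step you flag as ``the technical core'' (how to actually execute $\Theta((1-\eps)k)$ reduction steps simultaneously in $O(k)$ CONGEST rounds) is precisely the step you do not carry out, and the mechanism you gesture at (crown/LP-style structures) is not needed and would be considerably harder to implement distributively. This is a genuine gap: without a concrete, provably non-interfering batch of reduction rules, the claimed $O(k)$ preprocessing cost and the compression to parameter $O(k\eps)$ are both unsubstantiated.

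The paper's resolution is much simpler. The $(1,1)$-reduction step is edge-based: delete both endpoints of an edge and add them to the cover, which decreases the parameter by $1$ while increasing the cover size by at most $1$ relative to an optimal solution. To apply $k(1-\eps)$ such steps simultaneously without interference, the algorithm computes a \emph{maximal matching} $M$ and sends it to the leader in $O(k)$ rounds. If $\size{M}\le k(1-\eps)$, the set of matched vertices is already a vertex cover of size at most $2k(1-\eps)<k(2-\eps)$, and the algorithm terminates. Otherwise the leader picks an arbitrary submatching $M'\subset M$ of size $k(1-\eps)$ and commits both endpoints of every $e\in M'$ to the cover; since matching edges are vertex-disjoint, the reduction steps trivially do not interfere, and the residual graph has a cover of size at most $k\eps$. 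The result of~\cite{FELLOWS201830} (run $(1,1)$-reductions until the parameter drops to $k(2-\alpha)$, here with $\alpha=2-\eps$) then yields the $(2-\eps)$ fidelity guarantee once the exact algorithm solves the residual instance. You should replace your unspecified crown/LP rules with this maximal-matching device; everything else in your outline then goes through as you describe.
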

\begin{proof}
We utilize the framework of Fidelity Preserving Transformations~\cite{FELLOWS201830}. 
Intuitively, if there exists a vertex cover of size $k$ in the original graph, and we remove two vertices $u,v$ that share an edge, then the new graph must have a cover of size at most $k-1$. This allows us to reduce the parameter at the cost of introducing error (we add both $u$ and $v$ to the cover, while an optimal solution may only include one of them). This process is called \emph{$(1,1)$-reduction step} as it reduces the parameter by $1$ and increases the size of the cover (compared with an optimal solution) by at most $1$.
Roughly speaking, we repeat the process until the parameter reduces to $k\epsilon$, at which point we run an exact algorithm on the remaining graph.

In~\cite{FELLOWS201830}, it is proved that for any $\alpha\le 2$, repeating a $(1,1)$-reduction step until the parameter reduces to $k(2-\alpha)$ allows one to compute an $\alpha$-approximation by finding an exact solution to the resulting subgraph and adding all vertices that have an edge that was reduced in the process. For our purpose, we set $\alpha=2-\epsilon$; thus, the exact algorithms only need to find a cover of size $k\epsilon$.

Our algorithm begins by checking the diameter is $O(k)$ and finding a leader vertex $v$. This is doable in $O(k)$ rounds as having a vertex cover   of size $k$ guarantees that the diameter is $O(k)$. 
We proceed with applying the $(1,1)$-reduction steps.
To that end, we compute a maximal matching $M$ and send it to $v$, which requires $O(k)$ rounds. 
If $|M|\le k(1-\epsilon)$, $v$ instructs all matched vertices to enter the cover, and the algorithm terminates with a solution of size at most $2k(1-\epsilon) < k(2-\epsilon)$, as needed.
If $|M|> k(1-\epsilon)$, the leader selects an arbitrary submatching $M'\subset M$ of size $k(1-\epsilon)$ and the reduction rules are simultaneously applied for every $e\in M'$.
The remaining graph has a cover of size at most $k\epsilon$, at which point we apply the exact algorithms. 
Finally, we validate the size of the solution as in the above algorithms.
By theorems~\ref{thm:exactDeterministicVCAlg} and~\ref{thm:exactRandomizedVCAlg}, we get the stated runtime and establish the correctness of our algorithms.
\end{proof}

\subsection{CONGEST algorithms for \kMaxM{}}

    %
	Similar to \kMVC{}, we first design a deterministic algorithm for \kMaxM{} that terminates in $O(k^2)$ rounds.
	Our algorithm first uses Theorem~\ref{thm:diameter} to estimate the diameter. 
	Specifically, we first apply it for $k'=2k$. 
	If the vertices report LARGE (which means that the diameter is at least $2k+1$), we follow the same approach as in the LOCAL algorithm and make $k$ iterations of the maximal matching algorithm.  The large diameter ensures that any maximal matching is of size at least $k$ and thus we can terminate after these $k$ iterations.
	
	In case the output of the diameter approximation was SMALL, we know that the diameter is at most $4k$. This allows us to compute a leader in $O(k)$ rounds. In this case, we also run the maximal matching algorithm for $k$ iterations, but now this might not be sufficient. 
	That is, the size of the maximal matching might be smaller than $k$, while the size of the maximum matching is $k$ or larger. To address this issue, we augment \cite{lovasz2009matching} the matching until it is of size at least $k$, or we cannot augment the matching any further, at which point we conclude that no matching of size $k$ exists.
    
    
    
    
    \paragraph{Augmenting a maximal matching} We prove that given some \emph{maximal} matching $M, \size{M}<k$ we can find a matching of size at least $k$, or determine that such does not exist, in $O(k^2)$ rounds. 
    If we reached a maximum matching of a size smaller than $k$, the vertices report that no solution to the \kMaxM{} instance exists.
	
	We first note the subgraph $G_M$ induced by the nodes in $V_M = \set{v\in V \mid \exists u\in V, (v,u) \in M}$ has at most $2k$ nodes and $O(k^2)$ edges. Also note that because the diameter is bounded, a leader can be elected in $O(k)$ rounds, when the communication is done over $G$. \mbox{Now we state our algorithm.}
	
	Every node $v \in V_M$ picks $2$ unmatched neighbors from $V \setminus V_M$ and sends them to the leader node. If it has less than 2, it sends $N(v)\cap (V \setminus V_M)$. The leader node then decides upon an augmenting path and augments accordingly. This process repeats at most $k$ times. If the matching is sufficiently large, we output the matching; otherwise, all nodes output that no solution exists.
	
	As for correctness we first note that every augmenting path in $G$ has two endpoints in $V \setminus V_M$ and the rest of the nodes are in $V_M$. This implies that apart from the endpoint edges, all edges of the augmenting path are in $G_M$. We prove the following lemma:
	
	\begin{lemma}
		If there exists an augmenting path in $G$, then the algorithm finds such.
	\end{lemma}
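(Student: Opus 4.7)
The plan is to fix any augmenting path in $G$ and show that enough of its structure (or a suitable modification of it) is visible to the leader. Since every augmenting path alternates unmatched/matched edges and starts and ends with unmatched edges, it must have the form $P=x_0 - x_1 - x_2 - \cdots - x_p$ with $p\ge 3$ odd, where $x_0,x_p\in V\setminus V_M$ are unmatched and $x_1,\ldots,x_{p-1}\in V_M$ are matched (in particular, $x_1\neq x_{p-1}$). All interior edges $(x_i,x_{i+1})$ for $1\le i\le p-2$ lie in $G_M$, so the leader, who has collected the full topology of $G_M$ in the first $O(k^2)$ rounds of the augmentation phase, already knows these edges. Thus it only remains to establish that the leader knows of some pair of \emph{distinct} unmatched vertices $a,b\in V\setminus V_M$ with $(x_1,a),(x_{p-1},b)\in E$, yielding the augmenting path $a - x_1 - x_2 - \cdots - x_{p-1} - b$.

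Let $U_1$ and $U_{p-1}$ denote the (up to two) unmatched neighbors reported by $x_1$ and $x_{p-1}$, respectively. By construction, either $|U_i|=2$ or $U_i$ equals \emph{all} unmatched neighbors of $x_i$ in $G$. I would proceed by a short case analysis. If $x_0\in U_1$ and $x_p\in U_{p-1}$, we simply take $a=x_0$, $b=x_p$ and recover $P$ itself. Otherwise, suppose $x_0\notin U_1$; then $x_1$ had at least two unmatched neighbors, so $|U_1|=2$, and every element of $U_1$ is unmatched, hence differs from each $x_i$ with $i\ge 1$. The only potential obstruction to swapping $x_0$ for an element of $U_1$ is the coincidence with the chosen $b$; the symmetric statement holds for $x_{p-1}$.

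The core combinatorial step is then to show that one can always pick $a\in U_1$ and $b\in U_{p-1}$ with $a\neq b$. If $x_0\in U_1$ we may take $a=x_0$, and similarly for $b$, reducing to the cases where the corresponding reported set has size exactly $2$. When both sets have size $2$, a simple pigeonhole argument suffices: if $U_1\cap U_{p-1}=\emptyset$ any choice works; if $|U_1\cap U_{p-1}|=1$ we pick $a$ and $b$ to be the unique elements outside the intersection (which are distinct, since otherwise the two sets would be equal); and if $U_1=U_{p-1}=\{\alpha,\beta\}$ we set $a=\alpha,b=\beta$. In every case the resulting path $a - x_1 - \cdots - x_{p-1} - b$ is a valid augmenting path, because its interior vertices are matched while $a\neq b$ are unmatched and necessarily distinct from all interior vertices.

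The main conceptual obstacle is exactly this final case analysis: one must argue that reporting merely $2$ unmatched neighbors per matched node is always sufficient, despite the leader possibly not seeing the ``original'' endpoints $x_0,x_p$. The reason $2$ is the right number (and not $1$) is precisely the ``two endpoints must be distinct'' constraint, which the above pigeonhole resolves. Once this is established, the leader, which sees the entire $G_M$ plus the reported unmatched neighbors, can locally compute an augmenting path whenever one exists in $G$, and broadcast the augmentation in $O(k)$ additional rounds along the BFS tree.
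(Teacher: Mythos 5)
Your proposal is correct and follows essentially the same route as the paper: fix an augmenting path, observe that its interior lies in $G_M$ (known to the leader), and argue that from the reported sets of unmatched neighbors of the two penultimate vertices one can always select two \emph{distinct} unmatched endpoints. Your explicit pigeonhole case analysis just fills in the detail the paper compresses into the observation that $\size{S(v_s)\cup S(v_f)}\ge 2$.
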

	\begin{proof}
	    Fix some augmenting path in $G$, and denote by $u_s, u_f \in V \setminus V_M$, its endpoints. Let $v_s, v_f \in V_M$ be the nodes adjacent to $u_s, u_f$ on the path, and let $S(v_s), S(v_f)$ be the sets of unmatched neighbors chosen by the nodes in the algorithm. Note that these sets cannot be empty. If $\size{S(v_s)}=\size{S(v_f)}=1$, then the nodes chosen are exactly those of the augmenting path and we may augment. 
	    Otherwise, 
	    we have $\size{S(v_s) \cup S(v_f)}\geq2$, in which case we again found an augmenting path. 

	As for the running time, each augmentation rounds takes $O(k)$ rounds using pipelining, while due to the bound on the matching size there can be at most $O(k)$ such augmentations. Thus the total running time is bounded by $O(k^2)$.
	\end{proof}
	Similarly to \kMVC{}, we show that by replacing the node identifiers with random $O(\log k)$-sized fingerprints we get a faster randomized algorithm.
	\begin{theorem}
	\label{thm: MaxM Rand}
	For any $\delta = k^{-O(1)}$, there exists a randomized algorithm for \kMaxM{} that terminates in $O\parentheses{k+\frac{k^2\log k}{\log n}}$ rounds, while being correct with probability at least $1-\delta$.
	\end{theorem}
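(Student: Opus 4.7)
The plan is to mirror the approach used for the randomized \kMVC{} algorithm (Theorem~\ref{thm:exactRandomizedVCAlg}): run the deterministic \kMaxM{} procedure while replacing the $\Theta(\log n)$-bit node identifiers with $O(\log k)$-bit random fingerprints whenever information is aggregated at the leader. Since only $O(k)$ vertices are ever relevant to the leader's computation (namely $V_M$ together with the unmatched neighbors examined during augmentation), setting the fingerprint length to $\mathfrak{b} = c\log k$ for a sufficiently large constant $c$ bounds the total collision probability by $\binom{O(k)}{2} \cdot 2^{-\mathfrak{b}} = k^{-\Omega(1)}$, which can be driven below any target $\delta = k^{-O(1)}$.

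I would first estimate the diameter using Theorem~\ref{thm:diameter} with $k' = 2k$, which costs $O(k)$ rounds. If the diameter is reported \emph{LARGE}, I fall back on the LOCAL-style scheme of running $k$ maximal-matching iterations (as in Theorem~\ref{thm: local UB}); that already terminates in $O(k)$ CONGEST rounds and is guaranteed to produce a matching of size at least $k$ when one exists, because a diameter greater than $2k$ forces every maximal matching to contain at least $k$ edges. In the \emph{SMALL} case, the diameter is at most $4k$, so I can elect a leader and build a BFS tree rooted at it in $O(k)$ rounds, then compute a maximal matching $M$ in $O(k)$ additional rounds; if $|M|\ge k$ we are done, otherwise we enter the augmentation phase.

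For the augmentation phase, the key point is that everything the leader needs can be transmitted in a single $O(k^2\log k)$-bit batch: $|V_M| \le 2k$, and for each $v\in V_M$ it suffices to forward, using fingerprints, the list of its $V_M$-neighbors together with up to $k$ of its unmatched neighbors in $V\setminus V_M$ (augmenting paths for $M$ have interior vertices in $V_M$ and endpoints in $V\setminus V_M$, and we only need at most $k$ augmentations). This is $O(k\cdot k)=O(k^2)$ fingerprint-size items, totaling $O(k^2\log k)$ bits. Pipelining this up the $O(k)$-depth BFS tree with $O(\log n)$ bandwidth per edge gives $O\!\parentheses{k + \tfrac{k^2\log k}{\log n}}$ rounds. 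The leader then locally simulates all augmentations (using its unlimited computational power) and broadcasts the resulting matching back down the tree, which likewise fits in $O\!\parentheses{k + \tfrac{k^2\log k}{\log n}}$ rounds. Finally, we validate the solution by summing the matched edges along the tree and rejecting if the count is below $k$; this takes $O(k)$ rounds.

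The main obstacle is arguing that the \emph{one-shot} collection of $k$ candidate unmatched neighbors per $V_M$-vertex (rather than the iterative ``2 neighbors per round'' style of the deterministic algorithm) is actually sufficient for the leader to discover every augmenting path needed to grow the matching up to size $k$. The argument I would use is a counting one: at most $k$ augmentations are needed; each augmentation consumes exactly two previously unmatched $V\setminus V_M$ vertices; so if each $v\in V_M$ contributed the fingerprints of any $k$ of its unmatched neighbors, a pigeonhole argument ensures that whenever the leader looks for an augmenting path whose endpoint is an unmatched neighbor of some $v\in V_M$, at least one such uncommitted neighbor is still available in $v$'s locally stored list. Combining this structural lemma with the fingerprint collision bound and the pipelining analysis yields the stated $O\!\parentheses{k + \tfrac{k^2\log k}{\log n}}$ round complexity with success probability at least $1-\delta$.
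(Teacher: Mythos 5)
Your strategy differs from the paper's in a substantive way: the paper keeps the deterministic algorithm's \emph{iterative}, leader-driven augmentation (each matched vertex repeatedly reports two unmatched neighbors) and merely swaps identifiers for fingerprints, bounding collisions over the at most $4k$ vertices that ever become matched; you instead ship one batch to the leader ($G_M$ plus $k$ fingerprinted unmatched neighbors per vertex of $V_M$) and have it simulate all augmentations locally. The batched structure is a reasonable --- arguably cleaner --- way to realize the pipelining that the stated round complexity demands, but your justification of why the batch suffices has a genuine gap. Your pigeonhole argument addresses only \emph{endpoint availability}. It ignores that after the first augmentation the two newly matched endpoints join the matched set, so an augmenting path with respect to the \emph{updated} matching may need to pass through them as interior vertices, or terminate at one of \emph{their} unmatched neighbors; the leader has received neither their adjacencies inside the enlarged matched set nor their unmatched neighborhoods. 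Its simulation therefore runs on a graph that may be missing exactly the edges it needs, and it can wrongly report that no $k$-sized matching exists.

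To close the gap, argue about the subgraph $H$ the leader actually holds rather than about a sequential search: by Berge's theorem, the symmetric difference of $M$ with a $k$-sized matching decomposes into \emph{vertex-disjoint} augmenting paths with respect to the \emph{original} $M$, all of whose interior vertices lie in the original $V_M$ (which the leader knows in full). Your counting argument then becomes a system-of-distinct-representatives claim for the at most $2(k-\size{M})\le k$ endpoints of these paths: each path can be rerouted to end inside the transmitted lists, so $H$ itself contains a matching of size at least $k$ and the leader may compute a maximum matching of $H$ by any local means. Two smaller repairs: the collision bound must range over all $O(k^2)$ transmitted fingerprints (a collision between two unmatched candidates merges two usable endpoints into one and can shrink the matching the leader finds), not over $O(k)$ vertices --- this still yields $k^{-O(1)}$ for a large enough constant in $\mathfrak{b}$, just not with the bound you wrote; and the validation step should check the matching size against $k$ exactly as in the paper so that all vertices answer consistently in the negative case.
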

	\begin{proof}
	First, assume that a $k$-sized matching exists, and let us consider some execution of the deterministic algorithm above. 
	Recall that it starts from a maximal matching $M$ and extends it by finding augmenting paths until it becomes a maximum matching   or reaches size $k$. We shorten node identifiers in the part of the algorithm which requires time quadratic in $k$.
	Let $V'$ be the set of all nodes that were matched during some point of the execution of the algorithm.
	Since $\size{V_M}\le 2k$ and at most $2$ new vertices are matched during each of the augmentations, we have that $\size{V'}\le 4k$.
	
	For our randomized algorithm, after computing the maximal matching $M$, each node generates a random $(\mathfrak{b} = (c+2)\log k+4)$-bit fingerprints, for some $c>0$.
	We then continue with augmentations, same as is the deterministic algorithm, but each node uses its fingerprint instead of its identifier.
	We have that if all vertices in $V'$ have distinct fingerprints, then the algorithm is successful in finding a $k$ sized matching. The probability of collision by any two vertices in $V'$ is 
	%
$
{\binom{4k}{2}}2^{-\mathfrak{b}} < k^2 2^{-\mathfrak{b}+4}=k^{-c}.
$ Finally, since the overall number of bits sent to the leader is $O(k^2\log k)$, on a graph with diameter $O(k)$, our runtime follows.
	\end{proof}
\noindent\quad{}Similarly to Theorem~\ref{thm:2-epsPromiseApproximation}, we show that \kMaxM{} admits a faster $(2-\eps)$-approximation \mbox{for $\eps=o(1)$.}
\begin{theorem}\label{thm:2-epsMaxM}
$\forall\eps\in[1/k,1]$, there exists a deterministic algorithm for \kMaxM{} that computes a $(2-\epsilon)$-approximation in $O\parentheses{k+(k\epsilon)^2}$ CONGEST rounds.
For any $\delta = (k\eps)^{-O(1)}$, there also exists a randomized algorithm that terminates in $O\parentheses{k+\frac{(k\epsilon)^2\log (k\epsilon)}{\log n}}$ rounds and errs \mbox{with probability $\le \delta$.}
\end{theorem}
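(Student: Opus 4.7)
The plan is to mirror the $(1,1)$-reduction approach used for \kMVC{} in Theorem~\ref{thm:2-epsPromiseApproximation}, applying the Fidelity Preserving Transformations framework of~\cite{FELLOWS201830} to \kMaxM{}. First I would verify in $O(k)$ rounds, via Theorem~\ref{thm:diameter}, that the diameter is $O(k)$ and reject otherwise; then I elect a leader and compute a maximal matching $M$ (using the first $k$ iterations of the maximal-matching subroutine used in the exact \kMaxM{} algorithm), shipping $M$ to the leader in $O(k)$ additional rounds.

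The matching analogue of the $(1,1)$-reduction I would use is: commit an edge $\{u,v\}\in M$ to the output matching and delete both $u,v$ from the graph. Since removing two vertices destroys at most two edges from any matching, the residual graph $G'$ satisfies $\OPT(G')\ge \OPT(G)-2$; committing one edge while shrinking the residual parameter by at most two is exactly the $(1,1)$ cost/benefit we need. Performing $t$ such reductions simultaneously, by committing a submatching $M'\subseteq M$ of size $t$, yields a final matching of size at least $t+\OPT(G')\ge k-t$ whenever $\OPT(G)\ge k$. Setting $t=\lceil k(1-\eps)/(2-\eps)\rceil$ makes $k-t\ge k/(2-\eps)$, so the residual only needs a matching of size $\lceil k\eps/(2-\eps)\rceil=O(k\eps)$ to complete the $(2-\eps)$-approximation. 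A preliminary size check handles the easy case: if $|M|<k/2$ then $\OPT(G)\le 2|M|<k$ and all vertices reject; otherwise $|M|\ge k/2>t$, so a size-$t$ submatching $M'\subseteq M$ exists and can be chosen deterministically by the leader.

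On the residual subgraph I would then invoke the exact \kMaxM{} procedures preceding Theorem~\ref{thm: MaxM Rand} (the deterministic $O(k^2)$-round algorithm) or Theorem~\ref{thm: MaxM Rand} itself (the randomized variant) with parameter $\lceil k\eps/(2-\eps)\rceil$. If the inner call finds a residual matching of that size, combining it with $M'$ produces a matching of size at least $k/(2-\eps)$; if it fails, then $\OPT(G')<k\eps/(2-\eps)$, and the bound $\OPT(G)\le \OPT(G')+2t$ together with the identity $2t+k\eps/(2-\eps)=k$ forces $\OPT(G)<k$, so unanimous rejection is sound. The total cost --- $O(k)$ for the diameter check, maximal matching, and leader election, plus the residual exact algorithm applied with parameter $O(k\eps)$ --- is $O(k+(k\eps)^2)$ deterministic or $O(k+(k\eps)^2\log(k\eps)/\log n)$ randomized rounds, matching the statement.

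The main obstacle I anticipate is the bookkeeping for the $(1,1)$-accounting: I must verify the residual parameter is chosen so that a failure of the inner exact algorithm cleanly certifies $\OPT(G)<k$ (the arithmetic identity $2t+k\eps/(2-\eps)=k$ is what makes this work), and that the residual graph's diameter remains $O(k)$ so the inner algorithm meets its advertised round complexity. The remaining implementation details --- orchestrating the deactivation of the $2t$ committed vertices and feeding the residual instance into the inner algorithm without incurring extra congestion --- should go through by the same leader-based pipelining used in the \kMVC{} version, including the fingerprinting trick from Theorem~\ref{thm: MaxM Rand} for the randomized variant.
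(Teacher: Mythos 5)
Your proposal follows essentially the same route as the paper's proof: a diameter check via Theorem~\ref{thm:diameter}, a leader-collected maximal matching $M$ of size at least $k/2$, committing a submatching $M'$ of roughly $k(1-\eps)/(2-\eps)$ edges as simultaneous $(1,1)$-reductions, and invoking the exact deterministic/randomized \kMaxM{} algorithms on the residual instance with parameter $O(k\eps)$. The differences are cosmetic --- the paper fixes $|M'|=k/2-2k\eps$ and bounds the residual optimum by $6k\eps$ directly via the maximal-matching/independent-set structure, whereas you take $|M'|$ at the boundary $k(1-\eps)/(2-\eps)$ and certify rejection through the inner call's failure; taking $t=\lfloor k(1-\eps)/(2-\eps)\rfloor$ with inner parameter $k-2t$ makes your arithmetic close exactly.
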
	
\begin{proof}
For \kMaxM, the algorithm is similar than that of \kMVC. 
Since adding any edge $e\in E$ to a matching introduces an additive error of at most $1$ while decreasing the parameter by $1$, this is a $(1,1)$-reduction rule. Once again, we first check that the diameter is $O(k)$ by invoking Theorem~\ref{thm:diameter} for $k'=2k$. We again pick a leader $v$ and send him a maximal matching $M$. If $|M|\ge k/(2-\eps)$, the algorithm terminates. Otherwise, $v$ picks a submatching $M'\subseteq M$ of size $\max\set{k/2-2k\eps,0}$ (this is possible as $|M|\ge k/2$) and messages all matched vertices. The remaining graph (after all vertices matched in $M'$ are removed) has a matching of size at most 
$$
2(\size{M}-\size{M'})\le 2(k/(2-\eps)-(k/2-2k\eps))
\le k/(1-\eps)-k+4k\eps
\le k(1+2\eps)-k+4k\eps = 6k\eps.
$$
This is because nodes outside of $M$ are an independent set and each node that is matched in $M$ but not $M'$ can only add one edge to the matching. 
so we can find the maximum matching   in $O(k+(k\epsilon)^2)$ rounds deterministically or $O\parentheses{k+\frac{(k\epsilon)^2\log (k\epsilon)}{\log n}}$ using our randomized algorithm (Theorem~\ref{thm: MaxM Rand}).
Since we applied $k/2-2k\eps$ reduction rules, if a matching of size $k$ exists then the resulting matching must be at least of size $k-(k/2-2k\eps)=k\frac{1+2\eps}{2}\ge\frac{k}{2-\eps}$. Finally, the leader verifies that the obtained maximal matching is at least \mbox{of size $k$ and informs all vertices.}
\end{proof}

\section{Faster Algorithms for Non-Parameterized Problems}\label{sec:classicProblems}
In this section, we discuss how parameterized algorithms can be used to obtain improved round complexity for classical distributed problems.
That is, here we \emph{do not} assume that we are given a parameter $k$ and design algorithms that address the original problem variants. Our goal is to devise algorithms that solve the problems fast \emph{if} the optimal solution is small. 

For a problem $P$, denote by $T_P(k,n)$ the runtime of an algorithm for $k$-$P$ on an $n$-nodes graph. Next, denote by $\OPT{}$ the value of the optimal solution to $P$ on a given input graph $G$ on $n$ nodes.
Then we can solve the problem in a time that depends on $\OPT{}$, regardless of $n$.
Here, the optimal approach depends on $T_P(k,n)$. For example, a linear scan would yield a $O\parentheses{\sum_{k=0}^{\OPT{}}T_P(k,n)}$ rounds solution, which may be good if $T_P(k,n)$ grows fast in $k$ (e.g., this is optimal for $T_P(k,n)=2^{\Omega(k)}$).
We proceed with search procedures that are more suitable for our $T_P(k,n)=k^{O(1)}$ algorithms.

\begin{theorem}\label{thm:fastClassicalVC}
For any $\epsilon\in(0,1)$, there exist deterministic algorithms that compute a $(1+\eps)$-approximation of \MVC{} or \MaxM{} and terminate in $O\parentheses{\min\set{n^2, \OPT{}^2\log\epsilon^{-1}}}$ CONGEST rounds. 
\end{theorem}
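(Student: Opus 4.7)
The plan is to use the deterministic exact parameterized algorithms as decision oracles inside two nested searches for $\OPT{}$: an exponential doubling that locates an order-of-magnitude estimate, followed by a geometric binary search that sharpens it to within a $(1+\eps)$ factor. I will describe the plan for \MVC{}; the \MaxM{} case is identical after replacing the oracle with the $O(k^2)$ exact deterministic \kMaxM{} algorithm.

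First I would run the doubling phase. For $i=0,1,2,\ldots$ invoke the $O(k^2)$-round exact \kMVC{} algorithm of Theorem~\ref{thm:exactDeterministicVCAlg} with parameter $k_i=2^i$ and stop at the first index $i^*$ at which the oracle returns a feasible cover rather than a unanimous rejection. The parameterized definition guarantees the two outcomes are globally consistent, so all vertices can jointly advance to the next $k_i$. Setting $k^*=2^{i^*}$, iteration $i^*-1$ certifies $\OPT{}>k^*/2$ and iteration $i^*$ certifies $\OPT{}\le k^*$. The total work telescopes to $\sum_{i=0}^{i^*}O(4^i)=O((k^*)^2)=O(\OPT{}^2)$ CONGEST rounds.

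Next I would run the refinement phase on the geometric grid $\mathcal G=\{\lceil(k^*/2)(1+\eps)^j\rceil : j=0,1,\ldots,\lceil\log_{1+\eps}2\rceil\}\cap[1,k^*]$, which has $|\mathcal G|=O(\eps^{-1})$ values spaced multiplicatively by $(1+\eps)$ across $[k^*/2,k^*]$. Binary-searching $\mathcal G$ for the smallest $\tilde k\in\mathcal G$ at which the oracle succeeds costs $O(\log\eps^{-1})$ probes, each invoking the exact algorithm with parameter at most $k^*$ and therefore spending $O((k^*)^2)=O(\OPT{}^2)$ rounds. Since $\OPT{}\in(k^*/2,k^*]$ some grid point lies in $[\OPT{},(1+\eps)\OPT{}]$, so $\tilde k\le(1+\eps)\OPT{}$ and the cover returned by that successful invocation is the desired $(1+\eps)$-approximation. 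The two phases together cost $O(\OPT{}^2\log\eps^{-1})$ rounds.

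Finally, to obtain the $\min$ with $n^2$, I would run in parallel the textbook ``learn the whole graph'' procedure: build a BFS tree in $O(n)$ rounds, pipeline every incidence to the root in $O(n^2)$ rounds, solve the problem locally at the root, and broadcast the solution back down. The overall algorithm halts the instant either thread completes; a flooding signal on the BFS tree propagates the termination verdict so all vertices output the same solution. The runtime is therefore $O(\min\{n^2,\OPT{}^2\log\eps^{-1}\})$, as claimed. The step I expect to be the main obstacle is ensuring that the vertices advance through the doubling and binary search in lockstep despite not knowing $\OPT{}$; this is essentially free here because every parameterized invocation terminates within its known time budget and produces a globally consistent accept/reject, so no extra synchronization is needed beyond piggy-backing the termination signal of the $n^2$ fallback on the same BFS tree.
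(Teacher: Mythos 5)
Your proposal is correct and follows essentially the same route as the paper: doubling the parameter to bracket $\OPT{}$ within a factor of two using the $O(k^2)$ exact parameterized algorithms, then an $O(\log\eps^{-1})$-probe binary search to pin down a $(1+\eps)$-approximate value, with the trivial $O(n^2)$ algorithm supplying the $\min$. The only (immaterial) differences are that you refine over a geometric grid while the paper halves an arithmetic interval, you interleave the $n^2$ fallback rather than switching to it once $\overline k$ grows too large, and for \MaxM{} the accept/reject monotonicity of the oracle is reversed, which you leave implicit.
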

\begin{proof}
We start by running our \kMVC{} (respectively, \kMaxM{}) algorithm for $k=2^0,2^1,\ldots$ until it returns a valid solution (respectively, reports that no $k$-sized matching exists). At this point, we have a bound $\overline k$ such that $\overline k/2 < \OPT{} \le \overline k$.
If $\overline k = \Omega(n^2)$, we run the naive $O(n^2)$ rounds solution and terminate.
Next, we perform $t=\ceil{\log\eps^{-1}+1}$ iterations of binary search for the value of \OPT{}. That is, we start with $k=3\overline k/4$ and proceed either to $7\overline k/8$ or $5\overline k/8$, depending on whether a cover of size $3\overline k/4$ exists.
After these $t$ iterations, there exists some $\alpha\in[1/2,1)$ such that a cover / matching of size $\alpha \overline k$ does not exist, but a cover / matching of size $(\alpha+2^{-t})\overline k$ does. By returning the $(\alpha+2^{-t})\overline k$-sized solution we obtain an approximation ratio of 
$$
\frac{(\alpha+2^{-t})\overline k}{\alpha \overline k} = 1+2^{-t}/\alpha \le 1+\eps.
$$
The search for $\overline k$ requires $O(\OPT{}^2)$ rounds while the binary search takes $O(\OPT{}^2 \log\epsilon^{-1})$ time, which concludes the proof.
\end{proof}
We note that without fixing $\epsilon$ in advance, one could complete the binary search to find an optimal solution, as we state in the following.
\begin{corollary}
There exist deterministic exact algorithms for \MVC{} and \MaxM{} that terminate in $O\parentheses{\min\parentheses{n^2, \OPT{}^2\log \OPT{}}}$ CONGEST rounds. 
\end{corollary}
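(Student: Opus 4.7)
The plan is to extend the approach of Theorem~\ref{thm:fastClassicalVC} by continuing the binary search until the optimum is pinned down exactly, which costs only an extra $\log \OPT{}$ factor. First, as in Theorem~\ref{thm:fastClassicalVC}, I would invoke the deterministic $O(k^2)$ algorithms for \kMVC{} and \kMaxM{} (Theorem~\ref{thm:exactDeterministicVCAlg} and its matching counterpart) with successively doubling parameters $k=2^0,2^1,2^2,\ldots$, stopping as soon as the algorithm reports that a $k$-sized solution exists (for \MVC{}) or that no $k$-sized solution exists (for \MaxM{}). This yields a value $\overline k$ with $\overline k/2 < \OPT{} \le \overline k$, at a total cost of $\sum_{i=0}^{\lceil\log\overline k\rceil}O((2^i)^2)=O(\overline k^2)=O(\OPT{}^2)$ rounds.

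Next, if at any point during the doubling we observe $\overline k^2=\Omega(n^2)$, the algorithm aborts the parameterized approach and switches to the naive routine that collects the entire graph at a leader and computes an optimal solution in $O(n^2)$ rounds. Otherwise, I run a binary search inside the interval $(\overline k/2,\overline k]$ for the exact integer value of $\OPT{}$. Since the interval has width at most $\overline k/2\le \OPT{}$, after $O(\log \OPT{})$ queries the search narrows down to a single integer. Each query is a call to the deterministic \kMVC{}/\kMaxM{} algorithm with a parameter bounded by $\overline k=O(\OPT{})$, hence costs $O(\OPT{}^2)$ rounds. The binary-search phase therefore contributes $O(\OPT{}^2\log\OPT{})$ rounds, and combining with the doubling phase and the $O(n^2)$ fallback gives the stated bound of $O\parentheses{\min\set{n^2,\OPT{}^2\log \OPT{}}}$.

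For correctness, I would argue that once the search isolates the unique value $k^*$ such that the $k^*$-call succeeds but the $(k^*-1)$-call fails, every vertex knows $k^*=\OPT{}$, knows a feasible solution of that size (supplied by the successful parameterized call), and therefore outputs an exact optimal solution. The deterministic parameterized routines already guarantee unanimous SMALL/LARGE-style termination, so no additional synchronization is required. The only mild subtlety worth handling carefully is ensuring that the cost of the unsuccessful parameterized calls during doubling and binary search is dominated by $O(\OPT{}^2\log \OPT{})$, which follows immediately from the geometric-series argument above; I do not anticipate this being an obstacle.
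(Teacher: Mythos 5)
Your proposal is correct and follows essentially the same route as the paper: the corollary is obtained from Theorem~\ref{thm:fastClassicalVC} by simply completing the binary search down to the exact value of $\OPT{}$ instead of stopping after $\ceil{\log\eps^{-1}+1}$ iterations, with the same doubling phase, the same $O(n^2)$ fallback, and the same $O(\OPT{}^2)$ cost per query. The geometric-series accounting and the correctness argument you give match what the paper intends.
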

We proceed with algorithms for approximating \MVC{} and \MaxM{}. Observe that for $\eps=o(1)$ this gives faster algorithms than the above.
\begin{theorem}
For any $\eps\in (0,1/2]$, there exist deterministic $(2-\epsilon)$-approximation algorithms for \MVC{} and \MaxM{} that terminate in $O\parentheses{ \log\eps^{-1}\parentheses{\OPT{}+(\eps \OPT{})^2}}$ CONGEST rounds.
\end{theorem}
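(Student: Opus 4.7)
The plan is to combine the $(2-\eps')$-approximation algorithms for \kMVC{} (Theorem~\ref{thm:2-epsPromiseApproximation}) and for \kMaxM{} (Theorem~\ref{thm:2-epsMaxM}) with a two-phase doubling + binary search over the parameter $k$, along the lines of Theorem~\ref{thm:fastClassicalVC} but using the faster $(2-\eps')$-approximate subroutine in place of the exact algorithm. For \MVC, I would run the $(2-\eps')$-approximation with $\eps' = 2\eps \le 1$ for $k=1,2,4,\ldots$, stopping when it returns a cover (rather than reporting that no $k$-sized cover exists). The first successful $k$, call it $\overline k$, satisfies $\overline k/2 < \OPT \le \overline k$, so $\overline k \le 2\OPT$. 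Since each call with parameter $2^i$ costs $O(2^i + (2^i \eps)^2)$, the telescoping time of the doubling phase is $\sum_{i=0}^{\log \overline k} O(2^i + (2^i\eps)^2) = O(\OPT + (\eps\OPT)^2)$.

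Next, I would perform $t = \lceil\log \eps^{-1}\rceil + 2$ iterations of binary search on the interval $(\overline k/2, \overline k]$: at each step call the $(2-\eps')$-approximation with the midpoint $k$, updating the upper endpoint $U$ on success and the lower endpoint $L$ on failure. After $t$ iterations, $U-L \le \overline k/2^{t+1}$, $\OPT > L$, and the cover returned in the last successful call has size at most $(2-\eps')U$. Since $U < \OPT + \overline k/2^{t+1} \le \OPT(1+2^{-t})$ and $2^{-t} \le \eps/4$, the returned cover has size at most
\[
(2-\eps')(1+2^{-t})\OPT \le (2-2\eps)(1+\eps/4)\OPT = (2 - 3\eps/2 - \eps^2/2)\OPT \le (2-\eps)\OPT,
\]
so it is a valid $(2-\eps)$-approximation. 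Each binary search iteration costs $O(\overline k + (\overline k\eps)^2) = O(\OPT + (\eps\OPT)^2)$, giving an overall running time of $O(\log\eps^{-1}(\OPT + (\eps\OPT)^2))$ rounds, as claimed.

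The argument for \MaxM{} is symmetric. Doubling now stops at the \emph{first failure} of the $(2-\eps')$-approximation, so the previous $\overline k$ satisfies $\overline k \le \OPT < 2\overline k$; binary search then refines a lower bound $L$ on $\OPT$, updating $L$ on success and $U$ on failure, and the last successful call returns a matching of size at least $L/(2-\eps')$. Using $U = L + \overline k/2^{t+1} > \OPT$ with the same parameter choices $\eps' = 2\eps$ and $t = O(\log\eps^{-1})$ yields $\OPT/(\text{matching size}) \le (2-\eps')(1+2^{-t}) \le 2-\eps$, and the runtime analysis is identical.

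The main technical obstacle is selecting $\eps'$ and $t$ so that the two sources of error---the internal $(2-\eps')$-approximation slack and the residual binary-search slack of $1+2^{-t}$---compose to at most $2-\eps$; this is exactly the calculation displayed above, which requires $\eps' \ge \eps + (2-\eps')2^{-t}$ and is comfortably met by $\eps'=2\eps$, $t=\lceil\log\eps^{-1}\rceil+2$ thanks to the hypothesis $\eps \le 1/2$. A minor edge case arises when $k$ is so small that $2\eps < 1/k$ violates the hypothesis $\eps' \in [1/k,1]$ of Theorems~\ref{thm:2-epsPromiseApproximation} and~\ref{thm:2-epsMaxM}; this is handled by invoking the subroutine with $\eps' = \max(2\eps, 1/k)$, which only sharpens the returned approximation ratio and leaves the per-call bound $O(k+(k\eps')^2) = O(k+(k\eps)^2+1)$ unchanged up to constants, so the final running time is unaffected.
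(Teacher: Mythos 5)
Your proposal is correct and follows essentially the same route as the paper: a doubling phase using the $(2-2\eps)$-approximate \kMVC{}/\kMaxM{} subroutines of Theorems~\ref{thm:2-epsPromiseApproximation} and~\ref{thm:2-epsMaxM}, followed by $t=O(\log\eps^{-1})$ binary-search iterations, with the internal approximation slack and the residual search slack composed exactly as in the paper's displayed ratio bound. The only blemish is the intermediate claim that the first successful $\overline k$ satisfies $\OPT{}\le\overline k$ (resp.\ $\overline k\le \OPT{}$ for \MaxM{}) --- a success of the $(2-2\eps)$-approximate subroutine at parameter $k$ only certifies a solution of size $(2-2\eps)k$, not that $\OPT{}\le k$ --- but since your argument only ever uses the one-sided bounds coming from failures and from the returned solution's size, this overstatement is harmless.
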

\begin{proof}
Similarly to the algorithm of Theorem~\ref{thm:fastClassicalVC}, we start by scanning for $k=2^0,\ldots$, but we run the $(2-2\epsilon)$-approximation of \kMVC{} or \kMaxM{} (see Theorems~\ref{thm:2-epsPromiseApproximation} and~\ref{thm:2-epsMaxM}) rather than the exact solution.
When we first find a vertex cover   (or learn that no $k$-sized matching exists), we get a number $\overline k$ such that $\overline k/2< \OPT{} \le \overline k(2-2\epsilon)$. At this point we run $t=\ceil{\log\eps^{-1}+2}$ iterations of binary search. This gives us a quantity $\alpha\in[1/2,1)$ such that $\alpha\overline k< \OPT{} \le (\alpha+2^{-t})\overline k(2-2\epsilon)$. By returning the $(\alpha+2^{-t})\overline k(2-2\epsilon)$ sized solution we get an approximation ratio of $$
\frac{(\alpha+2^{-t})\overline k(2-2\epsilon)}{\alpha\overline k}
\le 2 - 2\epsilon + 2^{1-t}/\alpha \le 2-\epsilon.
$$
As our deterministic round complexity of computing $(2-2\eps)$-approximations for \kMVC{} and \kMaxM{} is $O(k+(k\epsilon)^2)$, our runtime is as stated.
\end{proof}
By selecting $\eps=1/2\sqrt{\overline{k}}$ after the initial phase of the above algorithm, we conclude the following result. We note that these are the first deterministic algorithms that terminate in $o(n^2)$ rounds while having an approximation ratio strictly lower than $2$. Further, for \MVC{}, no such randomized algorithm is known.
\begin{theorem}
There exist $O(\OPT{}\log \OPT{})$ rounds deterministic CONGEST algorithms that compute a $(2-1/\sqrt{\OPT{}})$-approximation of \MVC{} and \MaxM.
\end{theorem}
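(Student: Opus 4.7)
The plan is to instantiate the preceding $(2-\eps)$-approximation from Theorems~\ref{thm:2-epsPromiseApproximation} and~\ref{thm:2-epsMaxM}, tuning $\eps$ to $\Theta(1/\sqrt{\OPT})$ so as to balance the two terms inside the runtime expression $O(\log\eps^{-1}(\OPT+(\eps\OPT)^2))$ from the previous theorem. With $\eps = 1/(2\sqrt{\OPT})$, we get $(\eps\OPT)^2 = \OPT/4$ and $\log\eps^{-1} = O(\log\OPT)$, so the runtime becomes $O(\OPT\log\OPT)$ and the approximation ratio is $2-\Theta(1/\sqrt{\OPT})$, as claimed.

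Since $\OPT$ is unknown, I would split the algorithm into two phases. Phase~1 is a cheap doubling search: for $k=2^0, 2^1, \ldots$ I invoke the $(2-1/\sqrt{k})$-approximation of \kMVC{} (respectively, \kMaxM{}) with parameter $k$. By Theorems~\ref{thm:2-epsPromiseApproximation}/\ref{thm:2-epsMaxM}, each such call costs $O(k + (k/\sqrt{k})^2) = O(k)$ rounds, so by geometric summation Phase~1 finishes in $O(\OPT)$ rounds at the first value $\overline{k}$ for which the subroutine returns a valid solution, yielding $\overline{k}/2 < \OPT \le \overline{k}$. Phase~2 then fixes $\eps := 1/(2\sqrt{\overline{k}}) = \Theta(1/\sqrt{\OPT})$ and runs the binary-search refinement from the preceding theorem's proof: $t = O(\log\eps^{-1}) = O(\log\OPT)$ iterations of the $(2-2\eps)$-approximation on parameters $k' \in [\overline{k}/2, \overline{k}]$, each costing $O(k' + (k'\eps)^2) = O(\overline{k} + \overline{k}/4) = O(\OPT)$ rounds, for a total of $O(\OPT\log\OPT)$ rounds. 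The analysis in the preceding theorem guarantees a solution of size at most $(2-\eps)\OPT$, and substituting the above $\eps$ together with $\overline{k} \le 2\OPT$ yields an approximation ratio at most $2-1/(2\sqrt{2\OPT})$, matching $2-1/\sqrt{\OPT}$ up to a tunable constant that may be absorbed by scaling~$\eps$.

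The main subtlety is the chicken-and-egg dependence of $\eps$ on $\OPT$: the preceding theorem hard-codes $\eps$ from the outset and uses it even inside its own doubling, so one cannot simply plug $\eps = 1/\sqrt{\OPT}$ into that theorem as a black box. The resolution is the decoupling sketched above — a loose, $k$-adaptive $\eps_k = 1/\sqrt{k}$ inside Phase~1, whose only purpose is to keep each doubling step at $O(k)$ rounds, versus the tight target $\eps = 1/(2\sqrt{\overline{k}})$ inside Phase~2, where it determines the final approximation ratio. Once this separation is made, correctness and the runtime bound follow directly from the preceding theorems.
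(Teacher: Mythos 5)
Your proposal is correct and matches the paper's approach: the paper obtains this theorem exactly by running the doubling phase of the preceding $(2-\eps)$-approximation theorem and then setting $\eps = 1/(2\sqrt{\overline{k}})$ for the binary-search refinement, which is precisely your two-phase decoupling. Your explicit handling of the chicken-and-egg issue (using a $k$-adaptive $\eps_k = 1/\sqrt{k}$ during the doubling so each step stays at $O(k)$ rounds) is a detail the paper leaves implicit, but it is the intended reading.
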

\newcommand{\tmvc}[1][\alpha]{\ensuremath{T^{\MVC}_{#1}}}
\newcommand{\tmaxm}[1][\alpha]{\ensuremath{T^{\MaxM}_{#1}}}
\newcommand{\tP}[1][\alpha]{\ensuremath{T^{P}_{#1}}}
We now proceed to computing a $2$ and $(2+\epsilon)$-approximations for \MVC{} and \MaxM{}. In the following, for some $\alpha\ge 1$, let \tmvc{} (respectively, \tmaxm{}) denote the time required for deterministically computing an $\alpha$-approximation for \MVC{} (respectively, \MaxM{}) for any graph of $n$ nodes.
It is known that $\tmvc[2+\epsilon]=O\parentheses{\frac{\log\Delta}{\log\log \Delta}+\frac{\log \eps^{-1}\log\Delta}{\log^2\log \Delta}}$ and $\tmvc[2]=O\parentheses{\frac{\log n\log\Delta}{\log^2\log \Delta}}$ for graphs with degrees bounded by $\Delta$~\cite{ImprovedVC}. It is also known that $\tmaxm[2]=O(\Delta+\log^* n)$~\cite{Bar-YehudaCGS17}.
The following shows that, for $P\in\set{\MVC,\MaxM}$, if $\OPT{}\log \OPT{}=o\parentheses{\tP[2]}$ we can obtain a $2$ approximation for $P$, and if $\OPT{}=o\parentheses{\tP[2+\epsilon]}$ we get a faster algorithm for $(2+\eps)$-approximation.
Additionally, for any $\eps=\OPT{}^{-o(1)}$ this gives a speedup with respect to the above algorithms. Specifically, for constant $\eps$ we get a \mbox{$(2+\eps)$-approximation in $O\parentheses{\min\set{\OPT{},\tP[2+\eps]}}$ rounds.}
\begin{theorem}
Let $P\in\set{\MVC,\MaxM}$. There exists a deterministic $2$-approximation for $P$ that terminates in $O\parentheses{\min\set{\OPT{}\log \OPT{}},\tP[2]}$ rounds in the CONGEST. There also exists a deterministic $(2+\eps)$-approximation that terminates in $O\parentheses{\min\set{\OPT{}\log\eps^{-1}, \tP[2+\epsilon]}}$ rounds.
\end{theorem}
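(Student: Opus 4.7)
The plan is to combine the fast deterministic parameterized subroutines of Theorems~\ref{thm:2-epsPromiseApproximation} and~\ref{thm:2-epsMaxM} with a doubling-and-binary-search scheme, and then race the resulting algorithm against the classical $\alpha$-approximation of~\cite{ImprovedVC,Bar-YehudaCGS17}. Instantiating each parameterized theorem with the choice $\epsilon = 1/k$ yields an $O(k)$-round deterministic algorithm that either outputs a feasible cover of size at most $(2-1/k)k = 2k-1$ (respectively, a matching of size at least $k^2/(2k-1)$), or has every node unanimously report that no $k$-sized solution exists. In particular, this is already a $2$-approximation whenever the guess $k$ equals $\OPT$, since integrality of cover and matching sizes implies $2k-1 < 2\OPT$ and $k^2/(2k-1) > k/2$.

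First, I would use this subroutine as a yes/no oracle for the question ``is $\OPT \le k$?'' and invoke it for $k = 2^0, 2^1, \ldots$ until the first $\overline k$ with $\OPT \le \overline k$ is found; the geometric sum of per-call costs telescopes to $O(\overline k) = O(\OPT)$ rounds and pins down $\overline k/2 < \OPT \le \overline k$. For the exact $2$-approximation claim, I would then binary-search inside $[\overline k/2, \overline k]$ for the precise value of $\OPT$ using $O(\log \OPT)$ oracle calls, each of cost $O(\OPT)$, and finally invoke the subroutine once more with $k = \OPT$ to emit a valid $2$-approximate solution, for a total of $O(\OPT \log \OPT)$ rounds. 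For the $(2+\eps)$-approximation, it suffices to binary-search down to some $k^*$ with $\OPT \le k^* \le (1+\eps/2)\OPT$, which takes only $O(\log \eps^{-1})$ iterations; the resulting solution has size at most $(2-1/k^*)k^* < 2k^* \le (2+\eps)\OPT$ (and symmetrically at least $\OPT/(2+\eps)$ for \MaxM{}), for a total of $O(\OPT \log \eps^{-1})$ rounds.

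Finally, to obtain the $\min$ with $\tP[\alpha]$, I would simulate in parallel the best known deterministic $\alpha$-approximation algorithm of~\cite{ImprovedVC} for \MVC{} and of~\cite{Bar-YehudaCGS17} for \MaxM{}, and halt as soon as either branch produces a valid answer; piggybacking a single halt-bit on the BFS spanning tree built in the diameter check of Theorem~\ref{thm:diameter} adds only lower-order overhead, so the overall round complexity is the minimum of the two terms. The main obstacle will be carefully establishing that the ``yes/no oracle'' behaviour survives the binary search: one must argue that the $(2-\eps)$-approximation subroutines indeed report unanimously ``no $k$-sized solution'' whenever none exists, that all nodes can globally detect this outcome within the $O(\OPT)$ budget so the next search step may commence synchronously, and that the final call at $k = k^*$ still preserves the claimed ratio once integrality of solution sizes is accounted for. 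The remaining details amount to a routine telescoping of the per-phase round counts.
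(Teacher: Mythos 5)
Your proposal is correct and follows essentially the same route as the paper: a doubling phase using the $O(k)$-round parameterized $2$-approximation as an oracle, a binary search of depth $O(\log\OPT{})$ or $O(\log\eps^{-1})$, and a race against the classical $\tP[\alpha]$-round algorithm (the paper caps the doubling at $k\approx\tP[\alpha]$ and falls back sequentially rather than interleaving in parallel, which is only a cosmetic difference). One small bookkeeping slip worth noting: a ``yes'' from the $(2-1/k)$-approximate oracle at $k$ only certifies $\OPT{}\le 2k-1$, not $\OPT{}\le k$, so the search does not literally recover ``the precise value of $\OPT{}$''; your ratio analysis is unaffected, though, since it only uses the size of the solution returned at the threshold $k^*$ together with the ``no'' answer at $k^*-1$.
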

\begin{proof}
Let $\alpha\in\set{2,2+\epsilon}$ be the desired approximation ratio.
We run the $2$-approximation algorithm of \k{P} for $k=2^0,2^1,\ldots$, until we get a valid cover (alternatively, learn that no $k$-sized matching exist) or reach $k\approx {\tP[\alpha]}$. 
If we reached $k\approx {\tP[\alpha]}$, we run the $\tP[\alpha]$ rounds algorithm and terminate.
Otherwise, this gives a number $\overline k$ such that $\overline k/2< \OPT{} \le 2\overline k$. 
If $\alpha=2+\eps$, we perform $t=\ceil{\log\eps^{-1}+2}$ iterations of binary search, we reduce the range in a similar way to the above theorems and obtain a $(2+\eps)$-approximation in the stated number of rounds.
If $\alpha=2$, we complete the binary search and obtain a $2$-approximation in $O(\OPT{}\log \OPT{})$ rounds.
\end{proof}

\paragraph{Acknowledgements} The authors thank Keren Censor-Hillel, Seri Khoury, and Ariel Kulik for helpful discussion and comments.

\ifdefined\shortVersion
\newpage
\appendix
	\section{Reducing $\MVC$ to $\MFVS, \MFES$}\label{app:MVC2MFVS}
	We present reductions from $\MVC$ to $\MFVS$ and $\MFES$. Despite being folklore, we present them here for completeness and to point out that they work in the distributed setting without issues (no congestion is incurred).
	
	Let $G(V,E)$ be an instance of \MVC{} and define by $G_1=(V\cup E, \set{\set{v,u},\set{v,e},\set{u,e}\mid e=\set{u,v}\in E})$ the graph in which we replace each edge $e=\set{u,v}$ with a triangle $\set{u,e},\set{v,e},\set{u,v}$. 
	Further, let $\Vec G_2=(\set{v_{in},v_{out}\mid v\in V}, \set{(v_{in},v_{out})\mid v\in V}
	\cup\set{(u_{out},v_{in}),(v_{out},u_{in})\mid \set{u,v}\in E})$ be the digraph in which each $v\in V$ is replaced with two vertices $v_{in},v_{out}$ connected by an arc and each edge $\set{u,v}$ is replaced by the arcs $(u_{out},v_{in}),(v_{out},u_{in})$.
	We prove the following lemma:
	\begin{restatable}{lemma}{testname}
		There is an $\MVC$ of size $k$ in $G$ if and only there is an $\MFVS$ of size $k$ in $G_1$. Further, There is an $\MVC$ of size $k$ in $G$ if and only there is an $\MFES$ of size $k$ in $G_2$
	\end{restatable}
	\begin{proof}
	Assume that there is a cover $U\subseteq V$ of $G$, then $U$ is an FVS for $G_1$ and $\set{(v_{in},v_{out})\mid v\in U}$ is an FES for $G_2$. Similarly, let $F_1$ be an FVS for $G_1$ and let $F_2$ be a $FES$ for $G_2$. Denote by $F_1'$ the vertex set we get by arbitrarily replacing each $\set{u,v}\in F_1$ by one of its endpoint (e.g., $u$); then $F_1'$ is a cover of $G$. 
	Similarly, the set $F_2'=\set{v\mid ((v_{in},v_{out})\in F_2) \vee (\exists u:(v_{out},u_{in})\in F_2)}$ is a cover for $G$ (of size at most $|F_2|$). This concludes the proof.
	\end{proof}

	\paragraph{Distributed implementation} We note that the above reductions can be simulated in the distributed setting without incurring congestion. In \MFVS{}, $\set{u,v}$ can be simulated by either $u$ or $v$, and in the second reduction a node $v\in V$ can simulate $v_{in}, v_{out}$. 
	Finally, we state \mbox{our main theorem}:
	\begin{theorem-repeat}{thm: mvc reduction}
\ThmDiam
\end{theorem-repeat}

\section{Proof of Theorem~\ref{thm:LB-main}}\label{app:minimax}
We use Yao's Minimax Principle~\cite{minimax} to extend to lower bound to randomized algorithms and prove Theorem~\ref{thm:LB-main}.
\begin{theorem-repeat}{thm:LB-main}
\ThmLBMain
\end{theorem-repeat}
\renewcommand*{\proofname}{Proof of Theorem~\ref{thm:LB-main}}
\begin{proof}
In the proof of Lemma~\ref{lem:detApproxLB}, we showed that there is an identifier assignment that forces any deterministic algorithm to sub-optimally solve each path. For randomized algorithms, this does not necessarily work. However, we show that by applying Yao's Minimax Principle~\cite{minimax} we can get similar bounds to Monte Carlo algorithms. Intuitively, we consider an input distribution that randomly selects for each path whether to use the original ordering or whether to reverse $A_i$ or $B_i$ (this depends on the problem at hand, as in Lemma~\ref{lem:detApproxLB}). In essence, this gives the algorithm a chance of at most half of finding the optimal solution to each math.

Formally, fix an initial vertex identifier assignment $V\to\mathcal I$. 
For the problems of \MVC, \MaxM{}  and \MaxIS (respectively, \MDS and \MEDS) consider the $2^r$ inputs obtained by reversing any subset of $\set{A_i}_{i\in[r]}$ (respectively, $\set{B_i}_{i\in[r]}$). Next, consider the uniform distribution that gives each of these input probability $2^{-r}$. 
A similar argument to that of Lemma~\ref{lem:detApproxLB} shows that the probability that an algorithm computes the optimal solution to each path $i\in[r]$ is at most half.
We now apply a simplified version of the Chernoff Bound that states that for any $\gamma\in(0,1)$, $r\in \mathbb N^+$, a binomial random variable $X\sim \text{Bin}(r,1/2)$ satisfies $\Pr\brackets{X\ge r/2(1+\gamma)}\le 2EXP(-r\gamma^2/6)$. Plugging $\gamma=\sqrt{-6\logp{(1-2\delta)/2}/r}$ we get 
\begin{align*}
\Pr\brackets{X\ge t/2 + \sqrt{-6t\logp{(1-2\delta)/2}/4}}\le 2EXP(\logp{(1-2\delta)/2}
)=1-2\delta.
\end{align*}
That is, we get that any deterministic algorithm that is correct with probability $1-2\delta$ on a random input chosen according to $p$ cannot find the optimal path in more than $r/2 + \sqrt{-6r\logp{(1-2\delta)/2}/4}$ of the paths.
Next, we restrict the value of the error probability to $\delta \le 1/2-EXP(-2r/9)=1/2-EXP(-\Omega(n\epsilon))$, which guarantees that 
$$r/2 + \sqrt{-6r\logp{(1-2\delta)/2}/4} \le r/2 + \sqrt{-6r(-2r/9)/4}=5r/6.
$$
This means that in at least $r/6$ of the paths the algorithm fails to find the optimal solution and adds at least an additive error of one.
Therefore, since the optimal solution is of size $\Theta(n)$, the approximation ratio obtained by the deterministic algorithm is 
\begin{align*}
\alpha=1 + \Omega\parentheses{\frac{r}{n}}
= 1 + \Omega\parentheses{\frac{r}{r\ell}}  = 1+\Omega(1/x).
\end{align*}
Thus, by the Minmax principle we have that any Monte Carlo algorithm with less than $x$ rounds also have an approximation $1+\Omega(1/x)$. Using $x=\Theta(\epsilon^{-1})$, we established the \mbox{correctness of theorem.\qedhere}
\end{proof}
\renewcommand*{\proofname}{Proof}

\section{Formal Problem Definitions}\label{app:problems}
Let $G=(V,E)$ (Directed graph for \MFES) denote the target graph. 
We consider $U\subseteq V$ to be a feasible solution to the following vertex-set problems if:
\begin{itemize}
\item \problem{Minimum Vertex Cover (\MVC)}
$\forall e\in E: e\cap U\neq\emptyset$.
\item \problem{Maximum Independent Set (\MaxIS)}
$\forall u,v\in U: \set{u,v}\notin E$.
\item \problem{Minimum Dominating Set (\MDS)}
$\forall v\in V, \exists u\in U: \set{u,v}\in E$.
\item \problem{Minimum Feedback Vertex Set (\MFVS)}
$G[V\setminus U]$ is acyclic.
\end{itemize}
Next, we call $Y\subseteq E$ a feasible solution to the following edge-set problems if:
\begin{itemize}
\item \problem{Maximum Matching (\MaxM)}
$\forall v,u,w\in V: \set{v,u}\in Y\implies \set{u,w}\notin Y$.
\item \problem{Minimum Edge Dominating Set (\MEDS)} 
\mbox{$\forall e\in E, \exists e'\in Y: e\cap e' \neq \emptyset$.}
\item \problem{Minimum Feedback Edge Set (\MFES)} 
\mbox{$(V,E\setminus Y)$ is acyclic.}
\end{itemize}

\section{Optimal solutions for the problems on the graph $G_{r,2x+3}$}\label{app:optSolutions}
Here, we provide the exact characterization of the optimal solutions of the problems; recall that $(x\mod 6)=0$.
\begin{align}
&\OPT{}_{MVC}=\set{v_{i,2j}\mid i\in[r], j\in[x+2]}, 
\\&\OPT{}_{MM}=\set{\set{v_{i,2j},v_{i,2j+1}}\mid i\in[r], j\in[x+2]}, 
\\&\OPT{}_{MaxIS}=\set{v_0}\cup \set{v_{i,2j+1}\mid i\in[r], j\in[x+2]}, 
\\&\OPT{}_{MDS}=\set{v_{i,3j}\mid i\in[r], j\in\brackets{{\frac{2x}{3}+1}}}, 
\\&\OPT{}_{MEDS}=\set{\set{v_{0},v_{i,0}}\mid i\in[r]}\cup \set{\set{v_{i,3j+2},v_{i,3j+3}}\mid i\in[r], j\in\brackets{{\frac{2x}{3}}}}.\footnotemark{}
\end{align}
\footnotetext{Each edge of the form $\set{v_0,v_{i,0}}$ (for $i\in[r]$) may be replaced by the $\set{v_{i,0},{v_{i,1}}}$ edge.}
\fi

\bibliographystyle{alpha}
	\bibliography{paper}

\end{document}